\def\RR{\mathbb{R}} 
\def\NN{\mathbb{N}} 
\def\PP{\mathbb{P}} 
\def\EE{\mathbb{E}} 
\def\ind{\mathbf{1}} 
\def\law{\mathcal{L}} 
\def\P{\mathcal{P}} 
\def\W{\mathcal{W}} 
\def\N{\mathcal{N}} 
\def\M{\mathcal{M}} 
\def\F{\mathcal{F}} 
\def\C{\mathcal{C}} 
\def\G{\mathcal{G}} 
\def\i{\mathbf{i}} 
\def\ii{\hat{\imath}} %
\def\jj{\hat{\jmath}} %
\def\SS{\mathbb{S}} 
\newcommand{\st}[1]{[ #1 ]}
\newtheorem{thm}{Theorem}
\newtheorem{lem}[thm]{Lemma}
\newtheorem{cor}[thm]{Corollary}
\newtheorem{defi}[thm]{Definition}
\theoremstyle{definition}
\newtheorem{rmk}[thm]{Remark}
\title{Quantitative uniform propagation of chaos for Maxwell molecules}
\author{Roberto Cortez\footnote{CIMFAV, Facultad de Ingenieria, Universidad de Valpara\'iso, General Cruz 222, Valpara\'iso, Chile.  \ E-mail:   \texttt{rcortez@dim.uchile.cl}} \,  and  \, Joaquin Fontbona\footnote{Department of Mathematical Engineering and Center for Mathematical Modeling,  UMI(2807) UCHILE-CNRS, University of  Chile,  { Casilla 170-3, Correo 3, Santiago-Chile.}  E-mail:  \texttt{fontbona@dim.uchile.cl}.  } }
\begin{document}

\maketitle

\begin{abstract}

We prove propagation of chaos at explicit polynomial rates in Wasserstein distance $\W_2$  for  Kac's $N$-particle system   associated with the spatially homogeneous Boltzmann equation for Maxwell molecules. Our approach is mainly  based on novel probabilistic coupling techniques. Combining them with recent stabilization results for the particle system we obtain, under suitable moments assumptions on the initial distribution, a  uniform-in-time estimate  of order almost $N^{-1/3}$  for $\W^2_2$.
\end{abstract}

\bigskip
 {\bf Keywords}: Kinetic theory, Stochastic particle systems, Propagation of Chaos,  Wasserstein distance, Coupling
\medskip

 {\bf  MSC 2010}:  60K35, 80C40

\section{Introduction and main results}

\subsection{The Boltzmann equation}
The \emph{spatially homogeneous Boltzmann equation} predicts that the density $f_t(v)$ of particles with velocity $v\in\RR^3$ at time $t\geq 0$ in a spatially homogeneous dilute gas subjected to binary collisions, satisfies
\begin{equation}
\label{eq:B}
\partial_t f_t(v) = \frac{1}{2} \int_{\RR^3} dv_*
\int_{\mathbb{S}^2} d\sigma [f_t(v')f_t(v_*') - f_t(v)f_t(v_*)] B(|v-v_*|,\theta),
\end{equation}
where $v'$ and $v_*'$ are the \emph{pre-collisional velocities}, given by
\begin{equation}
\label{eq:pre_col}
v' = \frac{v+v_*}{2} + \frac{|v-v_*|}{2}\sigma,
\quad
v_*' = \frac{v+v_*}{2} - \frac{|v-v_*|}{2}\sigma,
\end{equation}
and $\theta$ is the \emph{deviation angle}, defined by $\cos\theta = \sigma \cdot (v-v_*)/|v-v_*|$. The \emph{collision kernel} $B(|v-v_*|,\theta)\geq 0$ describes the rate at which collisions between pairs of particles occur, and depends on the type of physical interactions among them. Solutions of \eqref{eq:B} preserve mass $\int_{\RR^3} f_t(v)dv$, momentum $\int_{\RR^3} v f_t(v)dv$ and kinetic energy $\int_{\RR^3} |v|^2 f_t(v)dv$, so we may and shall assume that $\int_{\RR^3} f_t(v)dv=1$ for all $t\geq 0$.

Equation \eqref{eq:B} has been extensively studied for several decades. We refer the reader to Cercignani \cite{cercignani1988} for physical  background on the  Boltzmann equation and to Villani \cite{villani2002}, Alexandre \cite{alexandre2009} and Mischler and Mouhot \cite{mischler-mouhot2013} for historical accounts on aspects of its  mathematical theory.

Typically, one assumes that $B:\RR_+\times (0,\pi]\to \RR_+$ has the form
\[
B(z,\theta) \sin\theta = z^\gamma \beta(\theta),
\]
for some $\gamma \in(-3,1]$, and some  function $\beta:(0,\pi]\to \RR_+$ which, for symmetry reasons, can be taken to be equal to $0$ on $(\pi/2,\pi]$. In this paper we assume that $\gamma=0$ and $\beta(\theta)\sim \theta^{-1-\nu}$ near 0 for $\nu=1/2$, a setting referred to as \emph{Maxwell molecules} case.

\subsection{Particle system and propagation of chaos}

As a step to rigorously justify the interpretation of the Boltzmann equation as  a representation of  the evolution of a very large number of interacting particles, Kac  \cite{kac1956} suggested to study  the limit, as $N$ goes to $\infty$, of some exchangeable stochastic   system of $N$ of such particles, defined as a continuous-time  pure-jump Markov process on $(\RR^3)^N$. For a simplified one dimensional version of the nonlinear equation \eqref{eq:B}, he in fact proved that if the  joint law of $k$ particles at time $0$  weakly converges as $N\to\infty$ to the $k$-fold product of an initial density $f_0$ in $\RR$, then the same holds true at times $t> 0$, with $f_t$ as the limit density.

This property, termed \emph{propagation of chaos}, is equivalent  to the \emph{convergence of the  empirical measure of the system at each time $t$ to the solution of the nonlinear equation}, and has been established,  under different convergence criteria, for a wide class of models including the true  Boltzmann equation  \eqref{eq:B}. For general background on  propagation of chaos we refer the reader to Sznitman \cite{sznitman1989}, M\'{e}l\'{e}ard  \cite{meleard1996} and Mischler and Mouhot \cite{mischler-mouhot2013} (see also Section \ref{known results}  below for  historical and recent references).

More specifically, we consider the particle system given by the $(\RR^3)^N$-valued Markov process  with infinitesimal generator $\mathcal{A}^N$ defined as follows: for all Lipschitz bounded  function $\Phi:(\RR^3)^N \to \RR$ and  $\mathbf{v}=(v^1,\ldots,v^N) \in (\RR^3)^N$, 
\begin{equation}
\label{eq:gen}
\mathcal{A}^N \Phi(\mathbf{v})
= \frac{1}{2(N-1)} \sum_{i\neq j} \int_{\SS^2} d\sigma
[\Phi(\mathbf{a}_{ij}(\mathbf{v},\sigma))
- \Phi(\mathbf{v})] B(\theta) ,
\end{equation}
where $B(\theta) \sin \theta := \beta(\theta)$, and $\mathbf{a}_{ij}(\mathbf{v}, \sigma) \in (\RR^3)^N$  is the vector $\mathbf{v}$ with its $i$-th and $j$-th $\RR^3$-valued components respectively replaced by $v'$ and $v_*'$ given by \eqref{eq:pre_col} with $v=v^i$ and $v_*=v^j$. Particles start with a  symmetric law $G_0^N$ on $(\RR^3)^N$. We denote\footnote{For notational simplicity, we dot  not make explicit the dependence of the  system on $N$.} this stochastic interacting particle system by $\mathbf{V}_t = (V_t^1,\ldots,V_t^N)$. 

Hence, any pair of particles $i$ and $j$ with velocities $v=V_t^i$ and $v_*=V_t^j$ interact with deviation angle $\theta$ at rate $\beta(\theta)/2(N-1)$, and then they update their velocities to $v'$ and $v_*'$ given by \eqref{eq:pre_col},
 with  $\sigma \in \mathbb{S}^2$ uniformly chosen at random among unitary vectors  such that $\sigma \cdot (v-v_*)/|v-v_*| = \cos \theta$; notice that $v'$ and $v_*'$ as defined in \eqref{eq:pre_col} now appear in the role of \emph{post-collisional velocities}, consistently with the weak form  \eqref{eq:weak} of equation \eqref{eq:B}. The fact that the function $\beta(\theta)$ has a non-integrable singularity at $\theta = 0$ entails that particles jump infinitely many times on every finite time interval (all but finitely many jumps corresponding to very small deviation angles). One can check that  the quantities  $\sum_{i=1}^N v^i$ and $ \sum_{i=1}^N |v^i|^2$, corresponding to momentum and kinetic energy, are a.s.\ exactly preserved  by the corresponding random dynamics. 

An explicit construction of the system $\mathbf{V}$  will be given in  Lemma \ref{lemaSDE}, Section \ref{sec:construction}. Let us mention for the moment that, under the assumptions we will make, a unique (in law) Markov process  with c\`{a}dl\`{a}g trajectories and generator \eqref{eq:gen} will exist, for each value of $N\in \NN$.

The goal of the present paper is to establish a fully explicit, uniform in time propagation of chaos rate, for the  Kac $N$-particle system $\mathbf{V}$. We adopt here a probabilistic  pathwise  approach, as  pioneered by  Tanaka  \cite{tanaka1978,tanaka1979} and Sznitman \cite{sznitman1989}. The main idea is to  extend the  coupling techniques  for binary-jump particle systems introduced in \cite{cortez-fontbona2016} to the much more difficult  framework of the Boltzmann equation. We will also  rely on the analytic approach and estimates of Fournier and Mischler \cite{fournier-mischler2016}. Moreover, combining these ideas  with a  uniform-in-$N$  equilibration result,  recently established by Rousset \cite{rousset2014} for Kac particles in the Maxwell molecules case, we will obtain the sharpest propagation of chaos estimates in Wasserstein distance so far available in this setting.

\subsection{Main results}

Let us first fix some notation and specify our hypotheses. Given a metric space $E$, $p\geq 1$ and $k\in\NN$, let $\P(E)$, $\P_p(E)$ and $\P_p^\textnormal{sym}(E^k)$ respectively denote the space of probability measures on $E$, the subspace of probability measures on $E$ with finite $p$-moment, and the subspace of $\P_p(E^k)$ consisting of symmetric probability measures on $E^k$ with finite $p$-moment. Given a vector $\mathbf{x} = (x^1,\ldots,x^N) \in (\RR^3)^N$, we define the empirical measures $\bar{\mathbf{x}} \in \P(\RR^3)$ and $\bar{\mathbf{x}}^i \in \P(\RR^3)$ for any $i\in\{1,\ldots,N\}$, by
\begin{equation} \label{eq:emp_meas_x}
\bar{\mathbf{x}} := \frac{1}{N} \sum_{j=1}^N \delta_{x^j}
\quad \text{and} \quad
\bar{\mathbf{x}}^i := \frac{1}{N-1} \sum_{j\neq i} \delta_{x^j}.
\end{equation}
The empirical measure of the particle system at time $t\geq 0$ is thus denoted by $\bar{\mathbf{V}}_t$. Also, given a (mainly exchangeable) random vector $\mathbf{X}$ on $(\RR^3)^N$, we will denote its law by $\law(\mathbf{X}) \in \P((\RR^3)^N)$, and the law of its $k$ first components by $\law^k(\mathbf{X}) \in \P((\RR^3)^k)$, for any $k\leq N$.

For $\mu,\nu \in \P_2((\RR^3)^k)$, their \emph{quadratic Wasserstein distance} is defined as
\begin{align*}
\W_2(\mu,\nu)
&=  \inf_\pi \left( \int_{(\RR^3)^k \times (\RR^3)^k}
   |\mathbf{x}-\mathbf{y}|_k^2 \pi(d\mathbf{x},d\mathbf{y}) \right)^{1/2}
= \inf_{\mathbf{X},\mathbf{Y}} \left( \EE |\mathbf{X}-\mathbf{Y}|_k^2 \right)^{1/2},
\end{align*}
where the first infimum is taken over all $\pi \in \P_2((\RR^3)^k \times (\RR^3)^k)$
having marginals $\mu$ and $\nu$, and the second infimum is taken over all random
vectors $\mathbf{X}$ and $\mathbf{Y}$ on $(\RR^3)^k$ such that $\law(\mathbf{X}) = \mu$
and $\law(\mathbf{Y}) = \nu$.
Here we use the \emph{normalized} distance $|\cdot|_k$ on $(\RR^3)^k$ defined by $|\mathbf{x}|_k^2 = \frac{1}{k} \sum_{i=1}^k |x^i|^2$.
It is known that the infimum is always reached,
and a $\pi$ attaining the first infimum or a pair $(\mathbf{X},\mathbf{Y})$
attaining the second one, is referred to as an \emph{optimal coupling}.

The angular cross section function $\beta$ will be assumed to satisfy
\begin{equation}
\label{eq:nu}
\exists\nu\in(0,1), \quad c_0 \theta^{-1-\nu} \leq \beta(\theta) \leq c_1 \theta^{-1-\nu} \quad \forall \theta \in(0,\pi/2),
\end{equation}
for some constants $0<c_0<c_1$.
The initial distribution $f_0$ will be assumed to satisfy
\begin{equation}
\label{eq:p0}
\exists p_0>4, \quad \int_{\RR^3} |v|^{p_0} f_0(dv) < \infty.
\end{equation}
Note that in the usual Maxwellian  case, the quantities $\nu$ and $\gamma = 0$
are linked by the relation $\nu= \frac{1-\gamma}{2}$,
but in our (slightly more general) context, $\nu\in (0,1)$ will be viewed as an  independent parameter.

We now state our main results. See Definition \ref{def:weak} below for the notion of weak solutions of \eqref{eq:B} that we will use, and Theorem \ref{thm:well_posed} for its well-posedness .

\begin{thm} \label{thm:main}
Assume \eqref{eq:nu} and \eqref{eq:p0}, and
let $(f_t)_{t\geq 0}$ be the unique weak solution of \eqref{eq:B}.
Let $G_0^N \in \P_2^{\textnormal{sym}}((\RR^3)^N)$ be given,
and let $\mathbf{V}_t = (V_t^1,\ldots,V_t^N)$ be the particle system
with generator \eqref{eq:gen} starting with law $G_0^N$.
Then, there exists a constant $C>0$ such that for
all $t\geq 0$,
\[
\EE\W_2^2(\bar{\mathbf{V}}_t,f_t) \leq C \W_2^2(G_0^N,f_0^{\otimes N}) + C(1+t)^2 N^{-1/3}.
\]
\end{thm}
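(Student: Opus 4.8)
The plan is to prove the estimate via a pathwise coupling between the Kac particle system $\mathbf{V}$ and a system of $N$ i.i.d.\ copies of the nonlinear Boltzmann process, combined with the stabilization (equilibration) result of Rousset to upgrade the bound to uniformity in time. First I would set up the nonlinear process: let $(\bar V_t^1,\dots,\bar V_t^N)$ be i.i.d.\ c\`adl\`ag processes each with time-marginals $(f_t)_{t\geq 0}$, driven by appropriate Poisson random measures on $\SS^2\times(0,\pi/2)\times\RR_+$ with the jump rates dictated by the weak form of \eqref{eq:B}. The construction of Lemma \ref{lemaSDE} gives a common driving noise so that, at each jump time of the particle system for the pair $(i,j)$, one can jointly update $V^i,V^j$ and the corresponding nonlinear particles $\bar V^i,\bar V^j$ using the \emph{same} $\sigma$ (or a well-chosen $\sigma$) so as to contract the Euclidean distance between post-collisional velocities. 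The key algebraic input, borrowed from Fournier–Mischler \cite{fournier-mischler2016}, is that for Maxwell molecules one can choose the rotation parameter in the collision so that $|v'-w'|^2 + |v'_*-w'_*|^2$ is, on average over $\theta$ against $\beta(\theta)\,d\theta$, a contraction of $|v-w|^2+|v_*-w_*|^2$ — this is where $\nu\in(0,1)$ and the Maxwellian ($\gamma=0$) structure are essential.

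Next I would write a differential inequality for the quantity $u_t^N := \EE|\mathbf{V}_t - \bar{\mathbf{V}}_t|_N^2 = \frac1N\sum_{i=1}^N \EE|V_t^i - \bar V_t^i|^2$. Applying the generator to $|\mathbf{x}-\mathbf{y}|_N^2$ along the coupled dynamics and using the contraction estimate produces a term of the form $-\lambda u_t^N$ plus error terms. There are two sources of error: (a) the \emph{propagation-of-chaos fluctuation term}, coming from the fact that particle $i$ interacts with the empirical measure $\bar{\mathbf V}_t^i$ rather than with $f_t$; this is controlled by $\EE\W_2^2(\bar{\bar{\mathbf V}}_t, f_t)$-type quantities and ultimately by the rate of convergence of the empirical measure of i.i.d.\ samples from $f_t$ to $f_t$ in $\W_2$, which in $\RR^3$ under the moment bound \eqref{eq:p0} gives the $N^{-1/2}$-ish contribution but is degraded to $N^{-1/3}$ because of the weak regularity near $\theta=0$ and the available moment exponent $p_0>4$; and (b) a \emph{commutation/discretization error} from matching the random pair chosen in the $N$-particle dynamics with the independent structure of the nonlinear system, handled by an exchangeability argument so that only $\law^2(\mathbf V_t)$ versus $f_t^{\otimes 2}$ enters. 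Gronwall applied to $u_t^N$ with the negative linear term absorbed would already give a bound growing polynomially in $t$; to get the genuinely uniform-in-$t$ statement I would instead split: use the coupling/Gronwall estimate on a window $[0,T]$ to get $\EE\W_2^2(\bar{\mathbf V}_T, f_T)\lesssim \W_2^2(G_0^N,f_0^{\otimes N}) + T^? N^{-1/3}$, then invoke Rousset's uniform-in-$N$ exponential relaxation of the Kac system toward its equilibrium together with the known convergence $f_t\to$ Maxwellian, choosing $T\sim \log N$ or optimizing, to conclude that the bound does not deteriorate as $t\to\infty$; the $(1+t)^2$ factor is what survives this optimization on short-to-moderate times.

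The main obstacle I anticipate is constructing the coupling so that the collision contraction is genuinely available \emph{simultaneously} for all interacting pairs while the $\sigma$'s remain admissible (i.e.\ satisfy the angular constraint $\sigma\cdot(v-v_*)/|v-v_*|=\cos\theta$ for \emph{both} the particle pair and the nonlinear pair, whose relative velocities differ). In the Maxwell case this is resolved by the clever choice of rotation angle due to Fournier–Mischler — roughly, one writes $\sigma$ in terms of an orthonormal frame attached to $v-v_*$ and transports it to the frame attached to $\bar v - \bar v_*$, absorbing the mismatch into the azimuthal angle — and one then has to quantify the resulting defect, which contributes a term proportional to $\EE|V^i-\bar V^i|\,|V^j-\bar V^j|$ integrated against $\int_0^{\pi/2}(1-\cos\theta)\beta(\theta)\,d\theta<\infty$ (finite precisely because $\nu<2$). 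A secondary obstacle is the singular jump rate: since $\int_0 \beta(\theta)\,d\theta=\infty$, all moment and coupling computations must be carried out against the finite measure $(1-\cos\theta)\beta(\theta)\,d\theta$ or, for first-order terms, be shown to cancel by the symmetry $\beta\equiv 0$ on $(\pi/2,\pi]$; controlling the propagation of $p_0$-moments uniformly in $N$ and $t$ (needed both for the empirical-measure rate and for Rousset's result) is routine given \eqref{eq:p0} but must be done carefully. Collecting the linear-contraction term, the $N^{-1/3}$ empirical error, and the initial discrepancy, and then feeding the result into the equilibration argument, yields the stated inequality.
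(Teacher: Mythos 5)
Your proposed coupling with a system of \emph{i.i.d.}\ copies of the nonlinear process, driven by the same Poisson noise as the particle system, cannot be carried out, and this is precisely the central difficulty the paper is designed to overcome. In a Bird-type (binary-collision) system like \eqref{eq:gen}, particles $i$ and $j$ jump \emph{simultaneously} at shared Poisson atoms, so if you drive the companion processes $\bar V^i$ and $\bar V^j$ with that same noise they also have synchronized jump times and therefore are \emph{not} independent — their joint law is not $f_t\otimes f_t$. This is the reason the Fournier--Mischler optimal-rate argument, which you invoke, applies only to \emph{Nanbu} systems, where each particle carries its own independent noise. What the paper actually does is two-step: it first builds a system $\mathbf{U}^L$ of \emph{non-independent} (cutoff) nonlinear processes driven by the shared noise and matched to $\mathbf{V}$ via two optimal couplings (an optimal transport map $\Pi^{i,L}$ onto $f_t^L$ for the collision partner, and the optimal coupling of circles of Lemma \ref{lem:optimal_circles} for the azimuthal angle), giving via Gronwall
$h_t:=\EE|V_t^i-U_t^{i,L}|^2\leq C[h_0+t(1+t)g_t+tL^{1-2/\nu}]$ with $g_t=\EE\W_2^2(\bar{\mathbf{U}}_t^{i,L},f_t^L)$; and then it proves the \emph{decoupling} estimate Lemma \ref{lem:decoup} showing $\W_2^2(\law^k(\mathbf{U}_t^L),(f_t^L)^{\otimes k})\leq Ck/N$, which together with the block-splitting Lemma \ref{lem:EW2overY} and the i.i.d.\ empirical rate \eqref{eq:epsnmu} yields $g_t\leq CN^{-1/3}$ uniformly in $t,L$. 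Your outline has no analogue of this decoupling step, and without it there is no way to control $g_t$ because $\bar{\mathbf U}_t^{i,L}$ is not an empirical measure of independent samples.

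A second, independent error: you bring in Rousset's equilibration result and a time-splitting/optimization in $T$ to produce the $(1+t)^2$ factor. That is not how Theorem \ref{thm:main} is proved. Theorem \ref{thm:main} is \emph{not} uniform in time (the $(1+t)^2$ grows), and the $(1+t)^2$ arises directly from the Gronwall-type inequality $\partial_t h_t\leq C[h_t^{1/2}g_t^{1/2}+g_t+L^{1-2/\nu}]$ after the $-\Phi_0^L h_t$ contraction term is exactly cancelled by the $+\Phi_0^L h_t$ contribution from $|V_r^{\i}-U_r^{\i}|^2$ (exchangeability). Rousset's stabilization, together with the propagation of moments for the particle system and the standarization onto the Boltzmann sphere, is used only in the proof of Theorem \ref{thm:uniform_Maxwell}, to convert the polynomial-in-time bound of Theorem \ref{thm:main} into a uniform one under the additional moment hypothesis. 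You should also note that the paper is forced to work with a \emph{cutoff} $L<\infty$ (and then send $L\to\infty$ via Lemma \ref{lem:ftL}) because strong well-posedness for the coupled non-cutoff SDE \eqref{SDEpartnonlin} is unavailable due to the discontinuity of the optimal transport maps $\Pi^{i,L}$ and $\varphi$; your sketch glosses over this construction issue.
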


Note that one can simply choose $G_0^N = f_0^{\otimes N}$ and then $\W_2^2(G_0^N,f_0^{\otimes N})=0$,
or assume that the term $\W_2^2(G_0^N,f_0^{\otimes N})$ goes to $0$ at least  as fast as $N^{-1/3}$.  In either case, the previous theorem gives a chaos result in squared $2$-Wasserstein distance
for the Kac particle system associated to the Boltzmann equation for Maxwell molecules,
with an explicit rate of order $N^{-1/3}$. The time dependence is quadratic.

 It is worth noting that the proof of Theorem \ref{thm:main}
also provides the same convergence rates for each $k$-marginal of the the $N$-particle system defined by \eqref{eq:gen},
 when $N$ goes to $\infty$, and that statements for cutoff systems can also be established (with similar dependence on the cutoff parameter as in  \cite{fournier-mischler2016}). 

Under stronger  conditions on the initial law,  we obtain our most important result: 

\begin{thm}
\label{thm:uniform_Maxwell}
Under the same hypotheses of Theorem \ref{thm:main}, assume additionally that $\int v f_0(dv) = 0$, $\int |v|^2 f_0(dv) = 1$, and that $R_p := \sup_N \EE |V_0^1|^p < \infty$ for some $p\geq 4$.
Then, there exists $C>0$ and for all $\epsilon>0$ there exists $C_{p,\epsilon}>0$
such that for all $t\geq 0$,
\[
\EE \W_2^2(\bar{\mathbf{V}}_t, f_t)
\leq C \W_2^2(G_0^N,f_0^{\otimes N}) + C_{p,\epsilon} N^{-(p-2)/3p + \epsilon}.
\]
\end{thm}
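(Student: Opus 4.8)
The plan is to bootstrap Theorem \ref{thm:main} using the uniform-in-$N$ equilibration result of Rousset \cite{rousset2014}. The key idea is that Theorem \ref{thm:main} already gives a good-but-not-uniform bound with a $(1+t)^2$ growth, while Rousset's result tells us that both the particle system and the nonlinear equation relax to (essentially) the same Maxwellian equilibrium $\mathcal{M}$ (the centered Gaussian with unit energy, given the normalizations $\int v f_0 = 0$, $\int |v|^2 f_0 = 1$) at an exponential rate that is uniform in $N$. So for a well-chosen cutoff time $T = T(N)$, one runs the non-uniform estimate on $[0,T]$ and the relaxation estimate on $[T,\infty)$, then optimizes over $T$.

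Concretely, I would proceed as follows. First, fix the Maxwellian $\mathcal{M} = \mathcal{N}(0, I_3)$ and note $f_t \to \mathcal{M}$ in $\W_2$ and the particle system's empirical measure converges to $\mathcal{M}$; quantitatively, one needs $\W_2^2(f_t, \mathcal{M}) \leq C e^{-\lambda t}$ for the nonlinear equation (this follows from known Maxwell-molecules relaxation results under the moment assumption \eqref{eq:p0}), and $\EE \W_2^2(\bar{\mathbf{V}}_t, \mathcal{M}) \leq C e^{-\lambda t} + C(p) N^{-(p-2)/3p+\epsilon}$-type bound from Rousset's uniform spectral-gap/entropy estimate combined with a concentration bound of empirical measures around $\mathcal{M}$ at rate depending on the number of available moments $R_p$ (here the $N^{-(p-2)/3p}$ exponent is exactly the Fournier–Guillin type rate for $\W_2^2$ of an empirical measure of $N$ i.i.d.\ samples with $p$ moments in dimension $3$, up to the $\epsilon$). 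Second, by the triangle inequality for $\W_2$ and the elementary inequality $(a+b)^2 \leq 2a^2 + 2b^2$,
\[
\EE \W_2^2(\bar{\mathbf{V}}_t, f_t) \leq 2 \EE \W_2^2(\bar{\mathbf{V}}_t, \mathcal{M}) + 2 \W_2^2(f_t, \mathcal{M}),
\]
valid for all $t$. Third, for $t \leq T$ use Theorem \ref{thm:main}:
\[
\EE \W_2^2(\bar{\mathbf{V}}_t, f_t) \leq C \W_2^2(G_0^N, f_0^{\otimes N}) + C(1+T)^2 N^{-1/3};
\]
for $t \geq T$ use the relaxation bound:
\[
\EE \W_2^2(\bar{\mathbf{V}}_t, f_t) \leq C e^{-\lambda T} + C_{p,\epsilon} N^{-(p-2)/3p + \epsilon}.
\]
Fourth, choose $T = T(N) = \frac{1}{\lambda}\log N$ (or a constant multiple thereof), so that $e^{-\lambda T} = N^{-1} \leq N^{-(p-2)/3p+\epsilon}$ and $(1+T)^2 N^{-1/3} = O((\log N)^2 N^{-1/3}) = O(N^{-1/3+\epsilon})$. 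Combining the two regimes and absorbing logarithmic factors into the $\epsilon$, one gets
\[
\EE \W_2^2(\bar{\mathbf{V}}_t, f_t) \leq C \W_2^2(G_0^N, f_0^{\otimes N}) + C_{p,\epsilon} N^{-(p-2)/3p + \epsilon}
\]
uniformly in $t \geq 0$, since for $p \geq 4$ we have $(p-2)/3p \leq 1/6 < 1/3$, so the $N^{-1/3+\epsilon}$ contribution from the short-time regime is negligible compared with the stated rate. (The constant $C$ in front of $\W_2^2(G_0^N, f_0^{\otimes N})$ is uniform because it comes only from Theorem \ref{thm:main} and the triangle-inequality step on the bounded interval $[0,T]$, but care is needed: on $[0,T]$ the factor is just the $C$ from Theorem \ref{thm:main}, independent of $T$.)

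The main obstacle is assembling the precise quantitative relaxation input for the particle system in $\W_2^2$ with the correct $N$-dependence. Rousset's result \cite{rousset2014} provides a uniform-in-$N$ exponential decay toward equilibrium, but typically in relative entropy or a related functional on $(\RR^3)^N$; one must convert this into a $\W_2^2$ bound between $\bar{\mathbf{V}}_t$ and $\mathcal{M}$, which requires (a) a transport-type or Pinsker-type inequality to pass from entropy/$L^1$ to $\W_2$ (using that the energy is fixed and the moments $R_p$ are controlled, so tightness and uniform integrability are available), and (b) a quantitative estimate for $\EE \W_2^2(\bar{\mathbf{X}}, \mathcal{M})$ where $\bar{\mathbf{X}}$ is the empirical measure of the $N$-particle equilibrium — which is not exactly i.i.d.\ Gaussian but is close to it (conditioned on the energy sphere), so one invokes a Fournier–Guillin \cite{fournier-guillin2015}-style rate together with a comparison between the microcanonical and the product Gaussian measures. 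Propagating the finite-$p$-moment bound $R_p$ uniformly in time along the particle dynamics (needed both for this step and to apply Theorem \ref{thm:main}'s hypotheses at the restart time $T$) is another technical point, handled by the moment estimates already available for Maxwell molecules. Everything else is a routine splitting-and-optimizing argument.
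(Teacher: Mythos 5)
Your overall strategy---split time at $T(N)$, use Theorem \ref{thm:main} on $[0,T]$ and Rousset-type equilibration on $[T,\infty)$, then optimize---is exactly the one the paper follows. But two of your quantitative inputs are wrong, and this matters because the correct ones are precisely where the exponent $(p-2)/3p$ comes from.

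First and most importantly, Rousset's estimate is \emph{not} an exponential decay in $t$; it is a polynomial decay with a rate controlled by the available moments. The differential inequality \eqref{eq:rousset_thm} has the form $\partial_t^+\mathscr{W}_2 \leq -c\,\mathscr{W}_2^{1+1/\delta}$, which integrates to $\mathscr{W}_2^2(\hat{G}_t^N,\hat{\mathcal{U}}^N)\leq C_{p,\delta}(1+t)^{-\delta}$ (after squaring), and the constraint relating $\delta$ to $p$ via Corollary \ref{cor:unif_moments} limits one to $\delta < p-2$. Correspondingly, the cutoff time must be taken as a power $T=N^{\eta/\delta}-1$, not $T\sim\log N$, and balancing $(1+T)^2N^{-1/3}$ against $N^{-\eta}$ gives $\eta=[3(1+2/\delta)]^{-1}\to(p-2)/3p$ as $\delta\to p-2$. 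Your version assigns the $p$-dependence instead to the concentration of the microcanonical empirical measure around the Maxwellian; but that quantity is in fact $O(N^{-1/2})$ uniformly in $p$ (this is Remark \ref{rmk:gaussian} of the paper, via \eqref{eq:epsnmu}). If exponential decay and $N^{-1/2}$ concentration both held, you would be proving a $p$-\emph{independent} rate $\sim N^{-1/3+\epsilon}$, which contradicts the statement; the fact that the theorem's rate degrades as $p\downarrow 4$ is a sign that the long-time decay input must itself depend on $p$.

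Second, Rousset's equilibration applies to a particle system living on the Boltzmann sphere $\mathcal{S}^N$. For a general $G_0^N$ one must first reduce to that case; the paper does this via the standardization $\mathbf{W}_t=[\mathbf{V}_t]$ (recentring and rescaling), showing that the process $\mathbf{W}$ again solves \eqref{eq:PS} on the event $\{S_0>0\}$ by homogeneity of $\ii,\jj$, and that the correction terms contribute only $O(\W_2^2(G_0^N,f_0^{\otimes N})+N^{-1/2})$, including the contribution of the ``bad'' event $\{S_0^2<1/4\}$. Your proposal does not address this reduction; invoking Rousset on a non-normalized initial law is not valid. You do correctly flag propagation of moments as a needed ingredient (this is Corollary \ref{cor:unif_moments}, and it is nontrivial since the energy is not a.s.\ bounded), and the triangle-inequality splitting through the Maxwellian is the right move.
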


Thus, for each such $p\geq 4 $,  we obtain a uniform propagation of chaos estimate for Maxwell molecules in squared $2$-Wasserstein distance at an explicit rate of order almost $N^{-(p-2)/3p}$, provided that $\W_2^2(G_0^N, f_0^{\otimes N})$ converges to $0$ with the same rate or faster (one can simply take $G_0^N = f_0^{\otimes N}$ as long as $f_0$ has finite $p$-moment).  For instance: if, together with \eqref{eq:p0}, one only assumes $R_p<\infty$ for $p=4$, we obtain a chaos rate of order almost $N^{-1/6}$; but if one assumes $R_p < \infty$ for all $p$, then the rate is of order almost $N^{-1/3}$.

Let us mention that our techniques   can be applied to the case in which $\nu \in [1,2)$ (implying the usual integrability condition $\int_0^{\pi/2} \theta^2 \beta(\theta) < \infty$), but this
would require  additional technicalities in order to  treat the probabilistic objects involved in the statements and proofs. To keep the exposition simple, we restrict ourselves to the case $\nu\in(0,1)$, which includes the classical Maxwellian molecules ($\nu=1/2$). 
Let us further remark that our coupling   techniques can be applied to  Kac particles systems in the  hard potentials and hard spheres cases as well,  providing in those frameworks the same rate $N^{-1/3}$ as in Theorem \ref{thm:main}, but a much worse dependence on time. This will be addressed in forthcoming works.

\subsection{Comparison to known results and approaches}\label{known results}

The study of  propagation of chaos for the Boltzmann equation was initiated in the paper
 \cite{kac1956}  by Kac. Propagation of chaos results  were proved    by several authors the  decades thereafter,  for different instances of the equation,  in  the weak convergence sense  and without convergence rates. For instance, 
 McKean \cite{mckean1967} and Gr\"{u}nbaum \cite{grunbaum1971}
obtained such type of results for some models with bounded collision kernel, whereas the work of Sznitman \cite{sznitman1984}
dealt with  unbounded kernels in the hard spheres case.
Tanaka \cite{tanaka1978,tanaka1979} introduced a crucial tool in the probabilistic study of the propagation of chaos property, the \emph{nonlinear process},  which represents
the trajectory of a ``typical particle'' in the infinite population and whose time-marginal laws solve the Boltzmann equation.  By coupling the particle system with  independent nonlinear processes,
Graham and M\'{e}l\'{e}ard \cite{graham-meleard1997}
obtained one of the first quantitative propagation of chaos results which cover cutoff variants of the model. 
 However, their approach,  relying on total
variation distance on path space, can not be extended to non-cutoff contexts, and provides  bounds which increase exponentially  in time.

More recently, in the remarkable work of Mischler and Mouhot \cite{mischler-mouhot2013}, uniform-in-time propagation of chaos results in $\W_1$ distance for 
Maxwellian molecules and hard spheres are established, with a slow   (and hard to track) rate in $N$.
Their method, of analytic nature, focused on the stability of the evolution of the
time-marginal laws of the particle system and relied on the comparison between Wasserstein and other  probability distances.  Moreover, combining Theorem 5.1-(iii) of \cite{mischler-mouhot2013} with the estimates found in Step 3 of the proof of Theorem 8-(ii) of \cite{carrapatoso2015} by Carrapatoso, one can obtain  a uniform-in-time  chaos rate  for Maxwellian molecules  of order  almost  $N^{-1/8}$ in some comparable distance, though only in the case of i.i.d.\ initial data conditioned to the Boltzmann sphere, and under a finite Fisher information assumption  (this seems to be the best uniform rate so far available in the literature;  we thank an anonymous referee for pointing out this to us).

In  \cite{fournier-mischler2016}, using a coupling with independent nonlinear processes and optimal transport based techniques,   Fournier and Mischler obtained
a propagation of chaos result with an optimal rate of order $N^{-1/2}$ in squared Wasserstein distance  $\W_2$ for \emph{Nanbu} particle systems in the hard potentials, hard spheres and Maxwell molecules cases, in  the latter setting with  the same dependence on $t$ as ours in  Theorem \ref{thm:main}.  Analogous  coupling arguments relying on optimal transport  were developed earlier  in Fontbona et al.\ \cite{fontbona-guerin-meleard2009} for Nanbu type diffusive approximations of the Landau equation, with less sharper explicit  rates (due to the general  coefficients and the suboptimal estimates for  empirical measures of i.i.d.\ samples available by that time).  Recall that, contrary to the Markov dynamics  \eqref{eq:gen} of binary (also called \emph{Bird type}) interactions, each particle  in  a Nanbu  type  system is driven by  an independent noise source,  which  implies in the jump case  that only one particle jumps at each collision.  Such particle systems  preserve momentum and energy only in mean, and  hence are  less meaningful from  the physical point view (but are still relevant for numerical simulation purposes). Since the coupling constructions in \cite{fournier-mischler2016} and \cite{fontbona-guerin-meleard2009} strongly relied on the independence of the noise sources for different particles, they cannot be  not applied  to systems with true binary interactions like \eqref{eq:gen}, where the random noises are shared.
  
In \cite{cortez-fontbona2016}, we addressed this problem in the case of Kac's one dimensional model and some generalizations. More precisely, we introduced a  new coupling between an interacting particle system with effective binary interactions and a system of nonlinear processes driven by the same  randomness  sources, which thus turned out to be  \emph{not independent}. As part of that coupling argument, we had to show, in a second step,    that these nonlinear processes become on their turn independent  as $N$ goes to $\infty$. 
With a similar strategy, the case of the Landau equation was recently addressed by  Fournier and Guillin  \cite{fournier-guillin2015} in the hard potentials and  Maxwell molecules cases. Relying  also on a stabilization result for the corresponding particle system (analogous  to the one in \cite{rousset2014}), they obtain in the latter case a uniform   propagation of chaos estimate, similar to ours in the Boltzmann case.

One of  the main  additional difficulties that pathwise probabilistic approaches to the  Boltzmann equation need to deal with, when compared to the one dimensional Kac model (and also to the Landau setting), is the  lack of continuity  of the parametrization of the collision angles, as a function of  pre-collisional velocities. Here, we  cope with this problem using optimal transport techniques and dealing  with cutoff versions of the nonlinear process.

 The assumptions on the initial distributions required in Theorem \ref{thm:uniform_Maxwell} are  similar to those  in \cite{fournier-guillin2015}, and 
are  much more general than in all the available  uniform propagation of chaos results  for Maxwell molecules (no support constraint  or regularity being needed).  We notice that the rate of $N^{-1/3}$ in  $\W_2^2$ for  Bird type particle systems obtained here,  in \cite{cortez-fontbona2016} and in \cite{fournier-guillin2015},  is slower than the $N^{-1/2}$  rate valid for Nanbu type systems (corresponding to the optimal convergence rate in expected  $\W_2^2$ distance for the empirical measure of i.i.d.\ samples, established in  \cite{fournier-guillin2013}).
An interesting  question, raised in \cite{cortez-fontbona2016}, is to what extent this sub-optimality is intrinsic  to the interaction type, or a  consequence
of the techniques employed.

\subsection{Weak solutions and nonlinear processes}

We next recall the notion of weak solutions for \eqref{eq:B} we will work with, for which some definitions are needed. We follow \cite{fournier-mischler2016}.
Consider the function $G:\RR_+ \to (0,\pi/2]$ defined as $ G(z) = H^{-1}(z)$  where $H:(0,\pi/2]\to \RR_+$ is given by
\[
H(\theta) := \int_\theta^{\pi/2} \beta(x) dx.
\]
Consider also measurable functions $\ii, \jj : \RR^3 \to \RR^3$ such that for every $x \neq 0$,
\[
\left(\frac{x}{|x|}, \frac{\ii(x)}{|x|}, \frac{\jj(x)}{|x|} \right)
\]
is an orthonormal basis of $\RR^3$.
We may and shall assume that they are  homogeneous functions, that is, one has $\ii(\lambda x) = \lambda \ii(x)$ and $\jj(\lambda x) = \lambda \jj(x)$ for all $x\in\RR^3$ and all $\lambda \in \RR$.
For $v, v_* \in \RR^3$, $\theta \in (0,\pi/2)$, $\phi \in [0,2\pi)$, and $z\geq 0$, define
\begin{equation}
\label{eq:Gamma_a_c}
\begin{split}
\Gamma(x,\phi)
&:= (\cos\phi)\ii(x) + (\sin\phi)\jj(x), \\
a(v,v_*,\theta,\phi)
&:= - \frac{1-\cos\theta}{2} (v-v_*) + \frac{\sin\theta}{2} \Gamma(v-v_*,\phi), \\
c(v,v_*,z,\phi)
&:= a(v,v_*,G(z),\phi).
\end{split}
\end{equation}
Note that when $\phi$ varies in $[0,2\pi)$, the vector $v+a(v,v_*,\theta,\phi)$ ranges all over the circle centered at $b = \frac{v+v_*}{2} + \cos\theta \frac{v-v_*}{2}$, with raduis $r = \sin\theta \frac{|v-v_*|}{2}$ and orthogonal to $d=\frac{v-v_*}{|v-v_*|}$. Denote this circle by $\C(v,v_*,\theta)$, or alternatively by $\hat{\C}(b,r,d)$.

These objects provide a suitable parametrization of the post-collisional velocities:
it is straightforward to verify that for all $v,v_*\in\RR^3$ and for any Lipschitz
bounded measurable function $\Phi$ on $\RR^3$,
\begin{align}
\label{eq:int_B}
\begin{split}
& \int_{\mathbb{S}^2} d\sigma [\Phi(v')-\Phi(v)] B(\theta) \\
&=  \int_0^{\pi/2} d\theta \int_0^{2\pi} d\phi
  [\Phi(v+a(v,v_*,\theta,\phi)) - \Phi(v)] \beta(\theta) \\
&= \int_0^\infty dz \int_0^{2\pi} d\phi [\Phi(v+c(v,v_*,z,\phi)) - \Phi(v)],
\end{split}
\end{align}
the integral being well defined 
since $|a(v,v_*,\theta,\phi)|\leq C \theta |v-v_*|$ and  $\int_0^{\pi/2}\theta \beta(\theta)d\theta<\infty$  when \eqref{eq:nu} holds.  By a slight abuse of notation, we still call $v' = v + a(v,v_*,\theta,\phi)$,
$v_*' = v_* - a(v,v_*,\theta,\phi)$ and also $v' = v + c(v,v_*,z,\phi)$, $v_*' = v_* - c(v,v_*,z,\phi)$.

\begin{defi} \label{def:weak}
We say that a collection $(f_t)_{t\geq 0} \in C([0,\infty),\P_2(\RR^3))$ is a \emph{weak solution}
for \eqref{eq:B} if it preserves momentum and energy (that is, $\int v f_t(dv) = \int v f_0(dv)$
and $\int |v|^2 f_t(dv) = \int |v|^2 f_0(dv)$ for all $t \geq 0$), and if for all
bounded Lipschitz function $\Phi : \RR^3 \to \RR$ and for all $t\geq 0$,
\begin{equation} \label{eq:weak}
\begin{split}
\int_{\RR^3} \Phi(v) f_t(dv)
&= \int_{\RR^3} \Phi(v) f_0(dv)
 + \int_0^t ds \int_{\RR^3}\int_{\RR^3} f_s(dv)f_s(dv_*)
   \int_0^{\pi/2} d\theta \beta(\theta) \\
& \qquad  \qquad \qquad \qquad \qquad {} \times    \int_0^{2\pi} d\phi [\Phi(v+a(v,v_*,\theta,\phi)) - \Phi(v)].
\end{split}
\end{equation}
\end{defi}

The next statement
provides the main analytical properties of equation \eqref{eq:B} that we shall need.
The proof of well-posedness can be found for instance in
\cite{toscani-villani1999} and \cite{villani2002},
whereas the proof of the existence of a density can be found in \cite{fournier2015}.

\begin{thm} \label{thm:well_posed}
Assume \eqref{eq:nu} and \eqref{eq:p0}. Then, there exists
a unique weak solution $(f_t)_{t\geq 0} \in C([0,\infty),\P_2(\RR^3))$ 
of \eqref{eq:B}, which satisfies $\sup_t \int_{\RR^3} |v|^{p_0} f_t(dv) < \infty$.
Moreover, if $f_0$ is not a Dirac mass, then $f_t$ has a density as soon as $t>0$.
\end{thm}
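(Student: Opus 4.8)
The plan is to prove the four assertions of Theorem~\ref{thm:well_posed} --- existence, uniqueness, the uniform moment bound, and the instantaneous appearance of a density --- largely independently, using the Maxwellian structure $\gamma=0$ throughout.

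\emph{Existence and uniqueness.} The cleanest route is the Fourier side: for Maxwell molecules \eqref{eq:B} is equivalent to Bobylev's equation $\partial_t\widehat f_t=\widehat Q(\widehat f_t,\widehat f_t)$, a bilinear evolution for characteristic functions. A Duhamel/Banach fixed-point argument in the Toscani--Villani space $\{\varphi=\widehat f:\ \sup_{\xi\neq0}|\xi|^{-2}|\widehat f(\xi)-\widehat f(0)-i\xi\cdot m_1-\tfrac12\xi^{T}M\xi|<\infty\}$ (i.e. measures with finite, and prescribed, first and second moments) gives a unique local solution, extended globally by the $d_2$-non-expansiveness of the Maxwellian collision flow, where $d_2(f,g)=\sup_{\xi\neq0}|\xi|^{-2}|\widehat f(\xi)-\widehat g(\xi)|$; this is the path of \cite{toscani-villani1999,villani2002}, and $d_2$ is finite exactly because two weak solutions of \eqref{eq:B} share the conserved momentum and energy. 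A probabilistic construction is also possible: one builds the nonlinear jump process as the fixed point of $(g_s)_{s\le T}\mapsto(\law(V_s))$, where $V$ solves the linear SDE driven by a Poisson measure $\N$ of intensity $ds\,g_s(dv_*)\,dz\,d\phi$ with jumps $c(V_{s-},v_*,z,\phi)$, using $|c(v,v_*,z,\phi)|\le C\,G(z)\,|v-v_*|$ and $\int_0^{\pi/2}\theta\,\beta(\theta)\,d\theta<\infty$ under \eqref{eq:nu} (here $\nu<1$) together with a Gr\"onwall estimate in $\W_2$; since the angular parametrization $(\ii,\jj)$ is only measurable, this is done on cutoff versions of the process followed by a tightness passage to the limit, and It\^o's formula identifies the marginals as weak solutions in the sense of Definition~\ref{def:weak}.

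\emph{Uniform moment propagation.} For Maxwell molecules the moments $m_p(t):=\int|v|^pf_t(dv)$ satisfy, via Povzner-type inequalities and $\gamma=0$, a closed differential inequality $\tfrac{d}{dt}m_p\le-\lambda_p\,m_p+C_p\,\Xi_p(m_2,\dots,m_{p-2})$ with $\lambda_p>0$, the relevant angular constants being finite because $\int_0^{\pi/2}\theta^2\beta(\theta)\,d\theta<\infty$ for $\nu<2$. Starting from $m_2$ conserved, an induction over even $p$ (with interpolation for the intermediate exponents) yields $\sup_{t\ge0}m_{p_0}(t)<\infty$ whenever $m_{p_0}(0)<\infty$, which also upgrades weak continuity of $t\mapsto f_t$ to continuity in $\P_2(\RR^3)$.

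\emph{Appearance of a density, and the main obstacle.} Here the non-cutoff singularity $\beta(\theta)\sim\theta^{-1-\nu}$ is essential: the accumulation of infinitely many grazing collisions endows the collision operator with the smoothing of a fractional Laplacian of order $\nu$, so $f_t$ becomes absolutely continuous for every $t>0$ as soon as $f_0$ is not a single Dirac mass --- a Dirac being a stationary solution that never leaves the set of Diracs, which is why it must be excluded. Probabilistically, one conditions the nonlinear process on a collision partner and exploits that $\phi$ is uniform on $[0,2\pi)$ to spread the conditional law of $V_t$ along the circles $\C(v,v_*,\theta)$; averaging over the random collision times and over the partner's law produces an absolutely continuous component, which is then bootstrapped to full absolute continuity. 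Making this regularization quantitative while ruling out the Dirac obstruction is the delicate point, and is carried out in \cite{fournier2015}; the first two steps follow \cite{toscani-villani1999,villani2002}.
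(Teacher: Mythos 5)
Your proposal is correct and matches the paper's approach: the paper itself does not prove Theorem~\ref{thm:well_posed} but instead cites \cite{toscani-villani1999} and \cite{villani2002} for well-posedness (via Bobylev's Fourier reformulation and the Toscani--Villani metric $d_2$, exactly as you sketch, with moment propagation via Povzner-type inequalities) and \cite{fournier2015} for the instantaneous appearance of a density through the non-cutoff regularizing effect. Your outline of those arguments is faithful to what the cited references do.
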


The  nonlinear process, introduced by Tanaka in \cite{tanaka1978,tanaka1979} to
provide a probabilistic interpretation of the Boltzmann equation,  can be defined in the present case through a stochastic integral equation
with respect to some Poisson point measure.
More specifically, consider the equation
\begin{equation}
\label{eq:nonlinear}
dW_t = \int_0^\infty \int_0^{2\pi} \int_{\RR^3} c(W_{t^-},v,z,\phi) \M(dt,dz,d\phi,dv),
\end{equation}
where $\M(dt,dz,d\phi,dv)$ is a Poisson point measure on $[0,\infty)\times[0,\infty)\times[0,2\pi)\times \RR^3$
with intensity $dtdz d\phi f_t(dv) / 2\pi$.
Following Tanaka's ideas,  under \eqref{eq:nu} and \eqref{eq:p0} Fournier and M\'{e}l\'{e}ard \cite{fournier-meleard2002}  proved weak existence and uniqueness in law for equation \eqref{eq:nonlinear}, together with the fact that $(\law(W_t))_{t\geq 0}$ solves \eqref{eq:weak}, hence $\law(W_t) = f_t$ for all $t$. Any process having the same law as $W$ is called a nonlinear process.

Unfortunately,  we cannot  carry out our coupling construction by  relying  only on weak  existence of solutions to equation  \eqref{eq:nonlinear}. This is why we will need to work with  the  \emph{cutoff}
nonlinear process instead. Given a cutoff level $L>0$,  this process  $W^L$
can be  defined as the solution of a nonlinear SDE similar  to \eqref{eq:nonlinear},
namely 
\begin{equation}
\label{eq:nonlinear_cutoff}
dW_t^L = \int_0^\infty \int_0^{2\pi} \int_{\RR^3} c_L(W_{t^-}^L,v,z,\phi) \M^L(dt,dz,d\phi,dv),
\end{equation}
where $c_L(v,v_*,z,\phi) := c(v,v_*,z,\phi) \ind_{\{z\leq L\}}$.
This time $\M^L(dt,dz,d\phi,dv)$ is a Poisson point measure
on $[0,\infty)\times[0,\infty)\times[0,2\pi)\times \RR^3$
with intensity $dtdz d\phi f_t^L(dv) / 2\pi$,
where $(f_t^L)_{t\geq 0} \in C([0,\infty),\P_2(\RR^3))$ 
is the unique solution to the nonlinear equation \eqref{eq:weak} with  $\beta$ replaced by $\beta_L(z) := \beta(z) \ind_{\{\theta\geq G(L)\}}$, for which the well-posedness part of Theorem \ref{thm:well_posed} applies.
Strong well-posedness for \eqref{eq:nonlinear_cutoff} is straightforward: thanks to the indicator $\ind_{\{z\leq L\}}$, the equation is nothing but a recursion for the values of $W_t^L$ at its timely ordered jump-times.
By standard arguments, it can be seen that any (weak or strong) solution to \eqref{eq:nonlinear_cutoff}
satisfies $\law(W_t^L) = f_t^L$
(more specifically, the collection$(\law(W_t^L))_{t\geq 0}$ satisfies a linearized and cutoff version of \eqref{eq:weak},
which in turn has $(f_t^L)_{t\geq 0}$ as the unique solution, see for instance
Theorem 3.1 of \cite{fournier-meleard2002} for details).
As expected, one can show that $f_t^L \to f_t$ as $L\to\infty$ (see Lemma \ref{lem:ftL} below).

 \subsection{Idea of the  proofs and plan of the paper}

To prove our results,   following  ideas introduced in \cite{cortez-fontbona2016},  for each $N$ and cutoff parameter $L$ we will first couple \emph{in some  optimal way},  a suitable realization  of the particle system $\mathbf{V}_t$  with generator \eqref{eq:gen} 
(given  below in  \eqref{eq:PS})  with some system $\mathbf{U}_t^L = (U_t^{1,L},\ldots,U_t^{N,L})$
of copies of the cutoff nonlinear process $W^L$.  To do this, we will make use of optimal transport theory,
in order to carefully construct  the jumps of the system $\mathbf{U}^L$, in such a way
that they mimic as closely as possible the jumps of the particle system $\mathbf{V}$. Roughly speaking, from this construction and Gronwall's lemma, we will obtain an estimate like
\[
\EE \frac{1}{N} \sum_{i=1}^N |V_t^i - U_t^{i,L}|^2
\leq C \left[ \W_2^2(G_0^N, f_0^{\otimes N}) + (1+t)^2 \EE \W_2^2(\bar{\mathbf{U}}_t^L, f_t^L) + tL^{1-2/\nu} \right]
\]
for some constant $C>0$. As mentioned earlier, the fact that we deal with a particle
system with effective binary collisions
will imply that the cutoff nonlinear processes $U^{1,L},\ldots,U^{N,L}$ thus constructed
are \emph{not independent}. Therefore, in a similar way as in \cite{cortez-fontbona2016}, we will need
to ``decouple'', in a second step,  these processes, obtaining
\[
\EE \W_2^2(\bar{\mathbf{U}}_t^L, f_t^L)
\leq C N^{-1/3}
\]
uniformly on $L$ (see Lemma \ref{lem:decoup} and Corollary \ref{cor:EEW2overU}). We will then make $L\to \infty$ to deduce the estimates of Theorem \ref{thm:main}. Finally, from this and the results of \cite{rousset2014}, we will obtain the uniform-in-time chaos rate stated in Theorem \ref{thm:uniform_Maxwell}.

In Section \ref{sec:construction} we give an explicit construction
of the particle system $\mathbf{V}_t$ and, more importantly, 
we construct  the coupling with the corresponding system $\mathbf{U}_t^L$
of non-independent cutoff nonlinear processes  used throughout this paper.
In Section \ref{sec:estimates} we state and prove several technical
results. In Section \ref{sec:proof_main} we prove Theorem \ref{thm:main}.
Finally, in Section \ref{sec:unif_chaos} we prove Theorem \ref{thm:uniform_Maxwell},
along with some intermediate results that have interest on their own, such as the propagation of moments for the particle system without assuming bounded initial energy (see Corollary \ref{cor:unif_moments}),
and an equilibration result that extends the one by Rousset \cite{rousset2014} (see Lemma \ref{lem:rousset}).

\section{Construction} \label{sec:construction}

In this section we explicitly construct
the coupled system $(\mathbf{V},\mathbf{U}^L)$
  in order to prove our results.
These processes will be defined as solutions of stochastic integral equations driven
by the same Poisson point measure.
We follow \cite{cortez-fontbona2016} and  \cite{fournier-mischler2016}.

\subsection{The particle system}\label{subsect:The particle system}

Fix the number of particles $N\in\NN$.
We introduce the function $\i:[0,N)\to\{1,\ldots,N\}$ by $\i(\xi) = \lfloor \xi \rfloor +1$, so that $\i(\xi)$ is a discrete index associated to the continuous variable $\xi$. Let $\G\subseteq\RR^2$ be the set
\[
\G = \{(\xi,\zeta) \in [0,N)^2: \i(\xi)\neq \i(\zeta) \}.
\]
Note that its area is $|\G|= N(N-1)$. Consider now a Poisson point measure $\N(dt,dz,d\phi,d\xi,d\zeta)$ on $[0,\infty)\times [0,\infty)\times[0,2\pi)\times [0,N)\times [0,N)$ with intensity
\[
\frac{N}{2}dt dz \frac{d\phi}{2\pi} \frac{d\xi d\zeta \ind_\G(\xi,\zeta)}{|\G|}
= \frac{dt dz d\phi d\xi d\zeta \ind_{\G}(\xi,\zeta)}{4(N-1)\pi}.
\]
In words, the measure $\N$ picks atoms $(t,z)\in[0,\infty)^2$ with  intensity $\frac{N}{2} dt  dz$ and for each such atom it also independently samples an angle $\phi$ uniformly from $[0,2\pi)$ and a pair $(\xi,\zeta)$ uniformly from the set $\G$. We will use the variables $\xi$ and $\zeta$ to choose indexes $i=\i(\xi)$ and $j=\i(\zeta)$ of the particles that interact at each jump. Additionally, given $G_0^N\in\P_2^{\textnormal{sym}}((\RR^3)^N)$ and $f_0\in\P_2(\RR^3)$ as in the statement of Theorem \ref{thm:main}, we will in the sequel denote  by 
\begin{equation}\label{V0U0}
(\mathbf{V}_0,\mathbf{U}_0)
\end{equation}
 a realization, independent of $\N$, of the optimal coupling between $G_0^N$ and $f_0^{\otimes N}$. Call $\F = (\F_t)_{t\geq 0}$ the complete, right-continuous filtration generated by $(\mathbf{V}_0,\mathbf{U}_0)$ and $\N$. We denote by  $\PP$ and $\EE$ the probability measure and expectation in the corresponding probability space.

We can now introduce the particle system $\mathbf{V}  = (V^1,\ldots,V^N)$ as the solution, starting from the initial condition $\mathbf{V}_0$, of the stochastic equation
\begin{equation}
\label{eq:PS}
d\mathbf{V}_t
= \int_0^\infty \int_0^{2\pi} \int_{[0,N)^2} \sum_{i\neq j} \ind_{\{\i(\xi) = i,\i(\zeta)=j\}}
 \mathbf{c}_{ij}(\mathbf{V}_{t^-},z,\phi) \N(dt,dz,d\phi,d\xi,d\zeta)
\end{equation}
where $\mathbf{c}_{ij}(\mathbf{x},z,\phi) \in (\RR^3)^N$ is the vector with coordinates given by
\begin{equation}
\label{eq:cij}
(\mathbf{c}_{ij}(\mathbf{x},z,\phi))^l = \begin{cases}
c(x^i,x^j,z,\phi) & \text{if $l=i$}, \\
-c(x^j,x^i,z,\phi) & \text{if $l=j$}, \\
0 & \text{otherwise}.
\end{cases}
\end{equation}
Weak existence and uniqueness of solutions for \eqref{eq:PS} holds, see Lemma \ref{lemaSDE} below.

Since we required $\ii$ and $\jj$ to be homogeneous functions, in particular they are odd functions and it can be easily seen that $c(v,v_*,z,\phi) = - c(v_*,v,z,\phi)$.
Given a solution to  \eqref{eq:PS}, it follows that for each $i=1,\ldots,N$, the particle $V^i$ satisfies the stochastic equation \begin{equation} \label{eq:Vi}
dV_t^i = \int_0^\infty \int_0^{2\pi} \int_0^N c(V_{t^-}^i,V_{t^-}^{\i(\xi)}, z,\phi) \N^i(dt,dz,d\phi,d\xi),
\end{equation}
where $\N^i$ is given by
\[
\N^i(dt,dz,d\phi,d\xi)
:= \N(dt,dz,d\phi,d\xi,[i-1,i)) + \N(dt,dz,d\phi,[i-1,i),d\xi).
\]
That is, $\N^i$ selects only the atoms of $\N$ such that either $\i(\xi) = i$ or $\i(\zeta) = i$. Clearly, $\N^i$ is a Poisson point measure on $[0,\infty)\times[0,\infty)\times[0,2\pi)\times[0,N)$ with intensity
\[
\frac{dt dz d\phi d\xi \ind_{A^i}(\xi)}{2(N-1)\pi},
\]
where $A^i := [0,N)\setminus [i-1,i)$. Thus, the term $V_{t^-}^{\i(\xi)}$ appearing in \eqref{eq:Vi} is a $\xi$-realization of the (random) probability measure $\bar{\mathbf{V}}_{t^-}^i$ (defined as in \eqref{eq:emp_meas_x}). Therefore, from the point of view of the particle $V^i$, the dynamics is as follows: $(t,z)$-atoms are sampled with intensity 1 and for each such atom an angle $\phi$ is chosen and a particle $v_* = V_{t^-}^{\i(\xi)}$ is selected at random among all the others;  $v=V_{t^-}^i$  (and $v_*=V_{t^-}^{\i(\xi)}$) then updates its  state to $v'=V_{t}^i$  (and $v_*'=V_{t}^{\i(\xi)}$) as given in  \eqref{eq:pre_col}.

\subsection{Coupling with  a system of cutoff nonlinear processes}

The key observation is the following: in \eqref{eq:Vi} with $c_L$ in place of $c$, if one replaces $V_{t^-}^{\i(\xi)}$
by some realization of the probability measure $f_t^L$,
then  the resulting equation defines a cutoff nonlinear process as in \eqref{eq:nonlinear_cutoff}.
Moreover: we want to choose this $f_t^L$-distributed random variable in an optimal
way (in the suitable sense), in order that the resulting process remains close to $V^i$.
Such a construction needs to be carried out in a measurable way, which motivates the following lemma. In the sequel, all optimal couplings and optimal costs considered use the cost function $C(v,u) = |v-u|^2$.

\begin{lem}[coupling] \label{lem:coup}
Fix $L>0$ and $i\in\{1,\ldots,N\}$. Then, there exists an $\RR^3$-valued function $\Pi_t^{i,L}(\mathbf{x},\xi)$,
measurable in $(t,\mathbf{x},\xi) \in [0,\infty)\times(\RR^3)^N\times A^i$,
with the following property: for any $(t,\mathbf{x})\in[0,\infty)\times(\RR^3)^N$
and any random variable $\xi$ uniformly chosen in $A^i$,
the pair $(x^{\i(\xi)},\Pi_t^{i,L}(\mathbf{x},\xi))$ is an optimal coupling
between $\bar{\mathbf{x}}^i$ and $f_t^L$.
Moreover, for any exchangeable random vector $\mathbf{X}\in(\RR^3)^N$ and any bounded  measurable
function $h$, we have $\EE \int_{j-1}^j h(\Pi_t^{i,L}(\mathbf{X},\xi)) d\xi = \int_{\RR^3} h(u) f_t^L(du)$
for any $j\in\{1,\ldots,N\}$, $j\neq i$.
\end{lem}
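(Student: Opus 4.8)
The plan is to build $\Pi^{i,L}_t(\mathbf{x},\cdot)$ on the set $A^i$ — a disjoint union of the $N-1$ unit intervals $[j-1,j)$, $j\ne i$ — by putting on $[j-1,j)$ the quantile transform of the $j$-th conditional of a measurably selected optimal transport plan between $\bar{\mathbf{x}}^i$ and $f^L_t$. Concretely, I would proceed in three steps. \emph{Step 1}: select, measurably in $(t,\mathbf{x})$ and so that it depends on $\mathbf{x}$ only through $\bar{\mathbf{x}}^i$, an optimal coupling $\pi^{i,L}_t(\mathbf{x})\in\P_2(\RR^3\times\RR^3)$ of $\bar{\mathbf{x}}^i$ and $f^L_t$ for the cost $C(v,u)=|v-u|^2$. \emph{Step 2}: since the first marginal $\bar{\mathbf{x}}^i=\frac1{N-1}\sum_{j\ne i}\delta_{x^j}$ is atomic with every atom of mass $\ge\frac1{N-1}$, disintegrate $\pi^{i,L}_t(\mathbf{x})$ with respect to it, obtaining probability measures $\kappa^{i,L,j}_t(\mathbf{x})$, $j\ne i$, measurable in $(t,\mathbf{x})$, such that $\frac1{N-1}\sum_{j\ne i}\kappa^{i,L,j}_t(\mathbf{x})=f^L_t$ for every $\mathbf{x}$. \emph{Step 3}: fix, once and for all, a measurable map $\Psi:\P(\RR^3)\times[0,1)\to\RR^3$ with the property that $\Psi(\mu,\cdot)$ pushes the uniform law on $[0,1)$ forward to $\mu$, for every $\mu$; then set, for $\xi\in A^i$,
\[
\Pi^{i,L}_t(\mathbf{x},\xi):=\Psi\bigl(\kappa^{i,L,\i(\xi)}_t(\mathbf{x}),\ \xi-\lfloor\xi\rfloor\bigr).
\]

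For Step 1 I would rely on standard facts: $t\mapsto f^L_t$ is continuous (hence Borel) into $\P_2(\RR^3)$; for a finitely supported $\mu$ the set of optimal plans between $\mu$ and $f^L_t$ is nonempty and weakly closed (lower semicontinuity of the quadratic cost, tightness of the couplings); and the multifunction $(\mu,t)\mapsto\{\text{optimal plans between }\mu\text{ and }f^L_t\}$ is measurable, so the Kuratowski--Ryll-Nardzewski selection theorem yields a measurable selector $(\mu,t)\mapsto\pi^*_t(\mu)$. Putting $\pi^{i,L}_t(\mathbf{x}):=\pi^*_t(\bar{\mathbf{x}}^i)$ delivers a selection that, being a function of $\bar{\mathbf{x}}^i$ and $t$ only, is automatically invariant under permutations of the coordinates $x^j$, $j\ne i$; this invariance is what will make the last assertion work. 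Step 2 is a routine measurable disintegration against a discrete marginal. For Step 3, such a $\Psi$ exists by a classical argument — e.g.\ compose a fixed Borel isomorphism $\RR^3\cong(0,1)$ with the generalized inverse cumulative distribution function of the image measure, which is jointly measurable in $(\mu,u)$ — and the resulting $\Pi^{i,L}_t(\mathbf{x},\xi)$ is jointly measurable because $\i(\xi)$ takes only finitely many values.

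It then remains to verify the two stated properties. If $\xi$ is uniform on $A^i$, then $\i(\xi)$ is uniform on $\{1,\dots,N\}\setminus\{i\}$ and $\xi-\lfloor\xi\rfloor$ is an independent uniform on $[0,1)$; plugging this into the definition and using that $\Psi\bigl(\kappa^{i,L,j}_t(\mathbf{x}),\cdot\bigr)$ realizes $\kappa^{i,L,j}_t(\mathbf{x})$, a one-line computation shows that the joint law of $\bigl(x^{\i(\xi)},\Pi^{i,L}_t(\mathbf{x},\xi)\bigr)$ equals $\pi^{i,L}_t(\mathbf{x})$, an optimal coupling by construction. For the averaging identity, let $\mathbf{X}$ be exchangeable, $h$ bounded measurable, $j\ne i$, and let $\sigma$ be any permutation of $\{1,\dots,N\}$ fixing $i$; with $\mathbf{X}^\sigma:=(X^{\sigma(1)},\dots,X^{\sigma(N)})$ one has $\overline{\mathbf{X}^\sigma}^i=\bar{\mathbf{X}}^i$, hence $\pi^{i,L}_t(\mathbf{X}^\sigma)=\pi^{i,L}_t(\mathbf{X})$ and therefore (uniqueness of the disintegration at an atom) $\kappa^{i,L,j}_t(\mathbf{X}^\sigma)=\kappa^{i,L,\sigma(j)}_t(\mathbf{X})$. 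Since $\mathbf{X}^\sigma\overset{d}{=}\mathbf{X}$, the number $\EE\bigl[\kappa^{i,L,j}_t(\mathbf{X})(h)\bigr]$ is the same for every $j\ne i$; averaging the identity $\frac1{N-1}\sum_{j\ne i}\kappa^{i,L,j}_t(\mathbf{X})(h)=\int h\,df^L_t$ over $j$ then gives $\EE\bigl[\kappa^{i,L,j}_t(\mathbf{X})(h)\bigr]=\int h\,df^L_t$. Combining this with $\int_{j-1}^j h\bigl(\Psi(\mu,\xi-(j-1))\bigr)\,d\xi=\int h\,d\mu$ and Fubini yields $\EE\int_{j-1}^j h\bigl(\Pi^{i,L}_t(\mathbf{X},\xi)\bigr)\,d\xi=\int_{\RR^3}h(u)\,f^L_t(du)$.

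The analytic content of the lemma is negligible and the whole difficulty is measure-theoretic bookkeeping. I expect the genuinely delicate step to be Step 1: producing a selection of optimal plans that is simultaneously jointly measurable in $(t,\mathbf{x})$ \emph{and} a function of $\bar{\mathbf{x}}^i$ alone — so that it is permutation-invariant in the coordinates $j\ne i$, which is exactly what the exchangeability argument for the second property requires. The measurable disintegration and the existence of the randomization map $\Psi$ are standard, and I would simply invoke them.
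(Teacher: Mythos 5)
Your proposal is correct and complete. The paper itself does not supply a proof of this lemma---it simply refers to Lemma~3 of the companion paper \cite{cortez-fontbona2016}---so there is no internal argument to compare against. Your three-step construction (a jointly measurable selector of optimal plans that depends on $\mathbf{x}$ only through $\bar{\mathbf{x}}^i$; disintegration of the selected plan against the atomic first marginal, whose atoms all have mass $\geq 1/(N-1)$; and realization of the conditional laws from the fractional part of $\xi$ via a universal randomization map $\Psi$) is exactly the natural multi-dimensional extension of the quantile-transform construction used in the cited 1D reference, and it is sound. You correctly isolate the only genuinely delicate point: the selector must be simultaneously jointly measurable in $(t,\mathbf{x})$ \emph{and} invariant under permutations of the coordinates $j\ne i$; building it as a function of the empirical measure $\bar{\mathbf{x}}^i$ alone (rather than of the labelled vector $\mathbf{x}$) delivers both. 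The verification of the second assertion via $\kappa^{i,L,j}_t(\mathbf{X}^\sigma)=\kappa^{i,L,\sigma(j)}_t(\mathbf{X})$ and exchangeability is correct, including when some of the $X^j$ coincide (the disintegration at an atom of strictly positive mass being unique, so the $\kappa$'s for tied indices are equal and the averaging identity $\frac{1}{N-1}\sum_{j\ne i}\kappa^{i,L,j}_t=f^L_t$ still holds). Two small remarks: for the measurable selection in Step~1 you can simply cite Corollary~5.22 of \cite{villani2009}, which is already in the paper's bibliography; and in the one-dimensional setting of \cite{cortez-fontbona2016} the monotone rearrangement gives the selector explicitly, so the selection-theorem machinery you invoke is precisely the extra ingredient needed to pass from $\RR$ to $\RR^3$, where optimal plans between an atomic measure and a general $f_t^L$ are no longer unique.
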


\begin{proof}
See  Lemma 3 in \cite{cortez-fontbona2016}.
\end{proof}

To ensure that the post-collisional velocities of $V^i_t$ and $U^{i,L}_t$ do not differ much,
we will use the functions $\Pi^{i,L}$ of Lemma \ref{lem:coup}
to define our system $\mathbf{U}^L=(U^{1,L},\ldots,U^{N,L})$ of cutoff nonlinear processes.  This will mean that at each jump of $v=V_t^i$ together with some other particle $v_*$,
the corresponding process $u=U_t^{i,L}$ will sample some $f_t^L$-distributed variable $u_*$
to interact with, in such a way that the interactions of the system 
$\mathbf{U}^L$ mimic those of the particle system $\mathbf{V}$.

However, post-collisional  velocities will  also depend on the angles $\phi$ chosen in circles of the form $\C(v,v_*,\theta)$ and $\C(u,u_*,\theta)$  associated with each  collision.
As  remarked by Tanaka  \cite{tanaka1978,tanaka1979},  no continuity assumption can be made about the functions $\ii$ and $\jj$, and,  in order to control the distance between  $V_t^i$ and $U^{i,L}_t$ after a collision, one has to make specific (non trivial) uniformly random choices for those angles as well. In the present paper, we will choose the angles $\phi$  uniformly in the circles  $\C(v,v_*,\theta)$ and $\C(u,u_*,\theta)$ in  such a way that their joint distribution is  an optimal coupling of the uniform laws on these circles, with respect to the quadratic cost. 
The optimal transport cost happens to have a nice explicit formula, with the
optimal transport map depending only on $v-v_*$ and $u-u_*$, in a fully explicit way. This is
stated in the following:

\begin{lem}[optimal coupling of circles] \label{lem:optimal_circles}
Recall that $\hat{\C}(b,r,d)$ denotes the circle centered at $b\in\RR^3$, with radius $r>0$ and orthogonal
direction $d\in\SS^2$; alternatively, for $v,v_*\in\RR^3$ and $\theta\in[0,\pi/2]$,
$\C(v,v_*,\theta)$ denotes the circle centered at
$\frac{v+v_*}{2} + \cos\theta \frac{v-v_*}{2}$, with radius $\sin\theta \frac{|v-v_*|}{2}$
and orthogonal to $\frac{v-v_*}{|v-v_*|}$. Then:

\begin{enumerate}

\item \label{lem:W2circles}
For any $b,\tilde{b}\in\RR^3$, $r,\tilde{r}\geq 0$ and $d,\tilde{d} \in \SS^2$, the optimal transport cost between the uniform distributions on the circles $\hat{\C}(b,r,d)$ and $\hat{\C}(\tilde{b},\tilde{r},\tilde{d})$ is given by
\begin{equation} \label{eq:W2circles}
\W_2^2\left(\textnormal{unif}_{\hat\C(b,r,d)}, \textnormal{unif}_{\hat\C(\tilde{b},\tilde{r},\tilde{d})} \right)
= |b-\tilde{b}|^2 + (r-\tilde{r})^2 + r\tilde{r}(1- |d\cdot\tilde{d}|) .
\end{equation}

\item \label{lem:varphi}
There exists a measurable function $\varphi : \RR^3 \times \RR^3 \times [0,2\pi) \to [0,2\pi)$ with the following property: for every $v,v_*,u,u_* \in \RR^3$, $\theta,\vartheta \in[0,\pi/2]$ and for any random variable $\phi$ uniformly chosen in $[0,2\pi)$, the pair
\[
\left( v + a(v,v_*,\theta,\phi), u + a(u,u_*,\vartheta,\varphi) \right)
\]
where $\varphi = \varphi(v-v_*,u-u_*,\phi)$, is an optimal coupling of the uniform distributions on the circles $\C(v,v_*,\theta)$ and $\C(u,u_*,\vartheta)$.

\end{enumerate}
\end{lem}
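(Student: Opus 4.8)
The plan is to prove part~\ref{lem:W2circles} first, and then to obtain part~\ref{lem:varphi} by making explicit the optimal coupling that underlies \eqref{eq:W2circles}.

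\emph{Part~\ref{lem:W2circles}.} I would write a generic point of $\hat\C(b,r,d)$ as $b+r\gamma$, where $\gamma$ is uniform on the unit circle of $d^\perp$ (the plane through the origin orthogonal to $d$), and likewise $\tilde b+\tilde r\tilde\gamma$ for $\hat\C(\tilde b,\tilde r,\tilde d)$. Expanding $|x-y|^2$ and using that the (coupling-independent) marginal laws of $\gamma$ and $\tilde\gamma$ are centred, the mixed linear term averages to zero, so for any coupling $\pi$ of the two uniform laws
\[
\int |x-y|^2\,d\pi = |b-\tilde b|^2+r^2+\tilde r^2-2r\tilde r\,\EE_\pi[\gamma\cdot\tilde\gamma].
\]
Since $\hat\C(b,r,d)$ depends on $d$ only up to sign, we may assume $d\cdot\tilde d=\cos\alpha$ with $\alpha\in[0,\pi/2]$, and it remains to show $\sup_\pi\EE_\pi[\gamma\cdot\tilde\gamma]=\tfrac12(1+\cos\alpha)$, with the supremum attained. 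I would then pick a unit vector $\ell_0$ spanning $d^\perp\cap\tilde d^\perp$ (any unit vector of $d^\perp$ if $\alpha=0$) and complete it to orthonormal bases $(\ell_0,n,d)$ and $(\ell_0,\tilde n,\tilde d)$ of $\RR^3$ with $\tilde d=\cos\alpha\,d+\sin\alpha\,n$ and $\tilde n=-\sin\alpha\,d+\cos\alpha\,n$, so that $n\cdot\tilde n=\cos\alpha$. Writing $\gamma=\cos\psi\,\ell_0+\sin\psi\,n$ and $\tilde\gamma=\cos\psi'\,\ell_0+\sin\psi'\,\tilde n$ with $\psi,\psi'$ uniform on $[0,2\pi)$, one gets $\gamma\cdot\tilde\gamma=\cos\psi\cos\psi'+\cos\alpha\,\sin\psi\sin\psi'$; applying the Cauchy--Schwarz inequality in $L^2(\pi)$ to each term and using $\EE\cos^2\psi=\EE\cos^2\psi'=\EE\sin^2\psi=\EE\sin^2\psi'=\tfrac12$ gives $\EE_\pi[\gamma\cdot\tilde\gamma]\le\tfrac12(1+\cos\alpha)$, while the choice $\psi'=\psi$ turns both inequalities into equalities. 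This proves \eqref{eq:W2circles}; the cases $r=0$ or $\tilde r=0$ are immediate and consistent with the formula.

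\emph{Part~\ref{lem:varphi}.} The decisive observation is that the optimal coupling just found --- ``$\psi'=\psi$ in the adapted coordinates'' --- is realised by a rotation. Let $R=R(v-v_*,u-u_*)$ be the rotation of $\RR^3$ about the line spanned by $(v-v_*)\times(u-u_*)$ that maps $\widehat{v-v_*}$ to $\operatorname{sgn}\!\big((v-v_*)\cdot(u-u_*)\big)\,\widehat{u-u_*}$, with $R=\operatorname{Id}$ when $(v-v_*)\times(u-u_*)=0$; by Rodrigues' formula $R$ is an explicit, measurable function of its arguments, it maps $(v-v_*)^\perp$ onto $(u-u_*)^\perp$, and by Part~\ref{lem:W2circles} the pair $(\gamma,R\gamma)$ is an optimal coupling of the uniform laws on the unit circles of these two planes. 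Recalling that $v+a(v,v_*,\theta,\phi)$ sweeps $\C(v,v_*,\theta)$ with unit angular direction $\gamma(\phi):=\cos\phi\,\tfrac{\ii(v-v_*)}{|v-v_*|}+\sin\phi\,\tfrac{\jj(v-v_*)}{|v-v_*|}$, I would \emph{define} $\varphi=\varphi(v-v_*,u-u_*,\phi)\in[0,2\pi)$ to be the unique angle with
\[
\cos\varphi\,\tfrac{\ii(u-u_*)}{|u-u_*|}+\sin\varphi\,\tfrac{\jj(u-u_*)}{|u-u_*|}=R\,\gamma(\phi),
\]
which is legitimate because $R\gamma(\phi)$ is a unit vector of $(u-u_*)^\perp$ and $\big(\tfrac{\ii(u-u_*)}{|u-u_*|},\tfrac{\jj(u-u_*)}{|u-u_*|}\big)$ is an orthonormal basis of that plane. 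Then $\varphi$ is measurable, and --- being a rotation of a uniform angle --- uniform on $[0,2\pi)$ whenever $\phi$ is; moreover, since the centre and radius of $\C(v,v_*,\theta)$ (resp.\ $\C(u,u_*,\vartheta)$) depend only on $(v,v_*,\theta)$ (resp.\ $(u,u_*,\vartheta)$), the pair $\big(v+a(v,v_*,\theta,\phi),\,u+a(u,u_*,\vartheta,\varphi)\big)$ has exactly the law of an optimal coupling, whose cost is \eqref{eq:W2circles}. On the null set of configurations where $v=v_*$, $u=u_*$, or $(v-v_*)\times(u-u_*)=0$, one fixes $\varphi$ by a measurable convention (e.g.\ matching the angular directions directly), which is still optimal there.

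The step I expect to be the main obstacle is the sharp lower bound in Part~\ref{lem:W2circles}: bounding $\gamma\cdot\tilde\gamma$ by the norm of the orthogonal projection of $\tilde\gamma$ onto $d^\perp$ is too lossy (it already fails to be tight when $d\perp\tilde d$), so one genuinely needs the coordinate-wise Cauchy--Schwarz in the adapted frame, which forces one to construct the frames $(\ell_0,n,d)$, $(\ell_0,\tilde n,\tilde d)$ and verify the identity $n\cdot\tilde n=\cos\alpha$. The remaining work is routine, if slightly fiddly, bookkeeping: the explicit Rodrigues expression for $R$, the measurability of $R$ and hence of $\varphi$ everywhere (including the degenerate configurations), and the check that the constructed pair attains the cost in \eqref{eq:W2circles}.
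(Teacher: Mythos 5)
Your proposal is correct, and for both parts it takes a route that differs genuinely from the paper's, though the underlying geometry (adapted frames sharing a unit vector along $d^\perp\cap\tilde d^\perp$) is the same.

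For part~\ref{lem:W2circles}, the paper fixes the same adapted bases and then establishes optimality of the coupling $\phi=\tilde\phi$ by exhibiting Kantorovich dual potentials $\Phi,\Psi$ (built via $2\alpha\beta\le\alpha^2+\beta^2$ applied to the cross terms) with $\Phi(\phi)-\Psi(\tilde\phi)\le C(\phi,\tilde\phi)$ and equality along the diagonal, invoking Villani's Remark~5.13. You instead note that the linear cross term in $|x-y|^2$ averages to zero for \emph{every} coupling (the circular marginals are centred), reducing the problem to maximizing $\EE_\pi[\gamma\cdot\tilde\gamma]$, which you bound by coordinate-wise Cauchy--Schwarz in the adapted frame. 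This bypasses Kantorovich duality altogether and is a bit more elementary; the key step --- constructing the frames with the common axis so that $n\cdot\tilde n=\cos\alpha$ --- is shared by both arguments.

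For part~\ref{lem:varphi}, the paper defines $\varphi=s(\phi-\phi_1)+\phi_2$, where $\phi_1,\phi_2$ are the angles aligning $\Gamma(v-v_*,\cdot)$ and $\Gamma(u-u_*,\cdot)$ with the common orthogonal vector $h$, and the sign $s\in\{\pm1\}$ is chosen case-by-case to match orientations. You instead build a single 3D rotation $R$ via Rodrigues' formula and pull it back to an angle $\varphi$; this is more geometric and sidesteps the explicit orientation bookkeeping (a rotation automatically gets it right). One small caveat: you assert that $(\gamma,R\gamma)$ is optimal ``by Part~\ref{lem:W2circles}'', but Part~\ref{lem:W2circles} only gives the optimal value, so you should verify the rotation attains it. This is a one-line computation --- decompose $\gamma=\cos\psi\,n_0+\sin\psi\,m$ along the rotation axis $n_0\in d^\perp\cap\tilde d^\perp$ and its complement, so $\gamma\cdot R\gamma=\cos^2\psi+\cos\alpha\sin^2\psi$ with $\cos\alpha=|d\cdot\tilde d|$, whence $\EE[\gamma\cdot R\gamma]=\tfrac12(1+|d\cdot\tilde d|)$ --- but it is worth writing down; apart from that, your argument is complete.
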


\begin{proof}
We first prove \ref{lem:W2circles}.
Without loss of generality, assume $\tilde{b}=0$ and $d\cdot\tilde{d}\geq 0$. Let $h =h(d,\tilde d)\in \SS^2$ be a  fixed measurable choice of a unitary vector orthogonal to both $d$ and $\tilde{d}$ (if they are parallel then there are infinitely many such $h$'s; if not, there are only $2$; we can take for instance  $h=\ii(d)/|d|$ in the first case and $h=d\times \tilde d /|d\times \tilde d  |$ in the second). 
Let also  $k,\tilde{k}\in\SS^2$ be such that $(h,k,d)$ and $(h,\tilde{k},\tilde{d})$ are orthonormal bases of $\RR^3$ with the same orientation $s=d\cdot (h\times k)=\tilde d\cdot  (h\times \tilde k) \in \{-1,1\}$,  
 so that $k\cdot\tilde{k}=s k \cdot ( \tilde{d}\times h) =s\tilde{d}\cdot (h\times k)=d\cdot  \tilde{d}$. With these bases, we can now parametrize the circles $\hat\C(b,r,d)$ and $\hat \C(\tilde{b},\tilde{r},\tilde{d})$ using angles $\phi$ and $\tilde{\phi}\in[0,2\pi)$. Namely, a point $x\in \hat\C(b,r,d)$ is written as $x=b+r(\cos\phi)k+r(\sin\phi)h$, while a point $y\in \hat\C(\tilde{b},\tilde{r},\tilde{d})$ is written as $y=\tilde{r}(\cos\tilde{\phi})\tilde{k}+\tilde{r}(\sin\tilde{\phi})h$. Then, the associated cost is
\begin{align*}
C(\phi,\tilde{\phi})
&= |b+r(\cos\phi)k - \tilde{r}(\cos\tilde{\phi})\tilde{k} + (r\sin\phi - \tilde{r}\sin\tilde{\phi})h |^2 \\
&= |b|^2 + r^2 + \tilde{r}^2
   - 2 r\tilde{r}[(\sin\phi\sin\tilde{\phi}) + (\cos\phi\cos\tilde{\phi}) d\cdot\tilde{d}] \\
&\quad {} + 2b\cdot[r(\cos\phi)k - \tilde{r}(\cos\tilde{\phi})\tilde{k}
               + (r\sin\phi - \tilde{r}\sin\tilde{\phi})h].
\end{align*}
Using the inequality $2\alpha\beta \leq \alpha^2 + \beta^2$ in the cross-terms, we obtain $C(\phi,\tilde{\phi}) \geq \Phi(\phi) - \Psi(\tilde{\phi})$ for all $\phi,\tilde{\phi} \in[0,2\pi)$, where
\begin{align*}
\Phi(\phi)
&= |b|^2 + r^2 
   - r\tilde{r}[(\sin\phi)^2 + (\cos\phi)^2 d\cdot\tilde{d}]
   + 2r b\cdot[(\cos\phi) k + (\sin\phi) h] \\
\Psi(\tilde{\phi})
&= -\{ \tilde{r}^2 
   - r\tilde{r}[(\sin\tilde{\phi})^2 + (\cos\tilde{\phi})^2 d\cdot\tilde{d}]
   - 2\tilde{r} b\cdot[(\cos\tilde{\phi}) \tilde{k} + (\sin\tilde{\phi}) h] \}.
\end{align*}
Moreover, the equality $C(\phi,\tilde{\phi}) = \Phi(\phi) - \Psi(\tilde{\phi})$ is attained when $\phi=\tilde{\phi}$. Using for instance Remark 5.13 of \cite{villani2009}, this shows that taking $\phi = \tilde{\phi}$ uniformly distributed on $[0,2\pi)$ in fact provides an optimal coupling of the uniform distributions on $\C(b,r,d)$ and $\C(\tilde{b},\tilde{r},\tilde{d})$. This proves point \ref{lem:W2circles}, since the cost of this coupling is
\begin{align*}
\int_0^{2\pi} C(\phi,\phi) \frac{d\phi}{2\pi}
&= \frac{1}{2\pi} \int_0^{2\pi}
   \{|b|^2 + r^2 + \tilde{r}^2
   - 2 r\tilde{r}[(\sin\phi)^2 + (\cos\phi)^2 d\cdot\tilde{d}] \\
&\qquad \qquad \quad {} + 2b\cdot[(\cos\phi)(rk -\tilde{r}\tilde{k}) 
               + (\sin\phi) (r - \tilde{r})h]\} d\phi \\
&= |b|^2 + r^2 + \tilde{r}^2 - r\tilde{r}(1+d\cdot \tilde{d}).
\end{align*}

We now prove \ref{lem:varphi}. Put $d = \frac{v-v_*}{|v-v_*|}$ and $\tilde{d} = \frac{u-u_*}{|u-u_*|}$. For some fixed measurable choice  $(d,\tilde{d})\mapsto h=h(d,\tilde{d})$ of a vector  $h$ orthogonal to both $d$ and $\tilde{d}$,  let $\phi_i=\phi_i(d,\tilde{d}) \in[0,2\pi)$, $i=1,2$   be the unique angles such that 
\[
\frac{\Gamma(v-v_*,\phi_1)}{|v-v_*|} = \frac{\Gamma(u-u_*,\phi_2)}{|u-u_*|} =: h,
\]
 Note that $(\phi_1,\phi_2)$ depend only on $v-v_*$ and $u-u_*$ through $d$, $\tilde{d}$, $\Gamma(v-v_*,\cdot)$ and $\Gamma(u-u_*,\cdot)$, in a measurable way. Now put
\[
k := \frac{\Gamma(v-v_*,\phi_1 + \pi/2)}{|v-v_*|},
\qquad \tilde{k} := \frac{\Gamma(u-u_*,\phi_2+s\pi/2)}{|u-u_*|}.
\]
Here $s = \pm 1$ is chosen such that the rotation in $\pi/2$ is performed
with the same orientation.
More specifically, if $d\cdot \tilde{d} \geq 0$, then
$s=1$ when the bases $(\frac{v-v_*}{|v-v_*|},\frac{\ii(v-v_*)}{|v-v_*|},\frac{\jj(v-v_*)}{|v-v_*|})$
and $(\frac{u-u_*}{|u-u_*|},\frac{\ii(u-u_*)}{|u-u_*|},\frac{\jj(u-u_*)}{|u-u_*|})$
have the same orientation, and $s=-1$ otherwise; but when
$d\cdot \tilde{d} <0$, we make the opposite choice.
Now, the same argument of part \ref{lem:W2circles} shows that if $\phi$ is a uniform random variable on $[0,2\pi)$ then $v+ a(v,v_*,\theta,\phi)$ and $u + a(u,u_*,\vartheta,s(\phi-\phi_1)+\phi_2)$ constitute an optimal coupling. Put $\varphi = s(\phi - \phi_1)+\phi_2$ and the conclusion follows.
\end{proof}

\begin{rmk}
\begin{itemize}
\item The expression on the right of \eqref{eq:W2circles} is nice: the term $|b-\tilde{b}|^2$ is the cost associated to translation of the circles, the term $(r-\tilde{r})^2$ is the dilation or contraction cost, and $r\tilde{r}(1-|d\cdot\tilde{d}|)$ corresponds to inclination.
\item When $(v-v_*)\cdot (u-u_*) \geq 0$ and $\theta=\vartheta$,
the coupling given in Lemma \ref{lem:optimal_circles}--\ref{lem:varphi}
reduces to the parallel spherical coupling of \cite{rousset2014}.
\end{itemize}
\end{rmk}

With the functions $\Pi^{i,L}$ and $\varphi$ of Lemmas \ref{lem:coup}
and \ref{lem:optimal_circles}--\ref{lem:varphi} in hand,
we can now introduce, at a formal level first,  
a system of cutoff nonlinear processes $\mathbf{U}^{L} = (U^{1,L},\ldots,U^{N,L})$, suitably  constructed in the same probability space as  $\mathbf{V}  = (V^1,\ldots,V^N)$ defined in \eqref{eq:PS}.
Recall that  the pair $(\mathbf{V}_0,\mathbf{U}_0)
$  is given and specified in  \eqref{V0U0}.
Mimicking \eqref{eq:Vi}, for each $L\in[1,\infty)$, $N\in \NN$ and $i=1,\dots N$,  the processes  $U^{i,L}$ is defined as the solution, starting from $U_0^i$,
of the stochastic equation
\begin{equation} \label{eq:Ui}
dU_t^{i,L} = \int_0^\infty \int_0^{2\pi} \int_0^N c_L(U_{t^-}^{i,L},\Pi_t^{i,L}(\mathbf{U}_{t^-}^L,\xi), z,\varphi_{t^-}^i) \N^i(dt,dz,d\phi,d\xi),
\end{equation}
where we have used the shorthand
\begin{equation}
\label{eq:varphi_t}
\varphi_{t^-}^i = \varphi(V_{t^-}^i - V_{t^-}^{\i(\xi)}, U_{t^-}^{i,L} - \Pi_t^{i,L}(\mathbf{U}_{t^-}^L,\xi), \phi ).
\end{equation}
In words, at  jump instants of $U_t^{i,L}$, this process collides with an   $f_t^L$-distributed random variable,
which is optimally coupled to the realization $U_{t^-}^{\i(\xi),L}$ of the (random) measure $\bar{\mathbf{U}}_{t^-}^{i,L}$.
Since the Poisson measures $\N^i$ and $\N^j$ share some of its atoms,
processes $U^{i,L}$ and $U^{j,L}$ have simultaneous jumps and hence are not independent.

We can  write the following joint SDE for the pair $(\mathbf{V},\mathbf{U}^L)$, arranged
as the collection of pairs $((V^1,U^{1,L}), \ldots, (V^N,U^{N,L})) \in (\RR^3\times\RR^3)^N$:
\begin{equation}\label{SDEpartnonlin}
\begin{split}
d(\mathbf{V},\mathbf{U}^L)_t
&= \int_0^\infty \int_0^{2\pi} \int_{[0,N)^2} \sum_{i\neq j} \ind_{\{\i(\xi) = i,\i(\zeta)=j\}} \\
&\qquad\qquad\qquad {} \times \mathbf{b}_{L,ij}(\mathbf{V}_{t^-},\mathbf{U}_{t^-}^L,t, z,\phi,\xi,\zeta) \N(dt,dz,d\phi,d\xi,d\zeta),
\end{split}
\end{equation}
where $\mathbf{b}_{L,ij}(\mathbf{x},\mathbf{y},t, z,\phi,\xi,\zeta) \in (\RR^3\times\RR^3)^N$ is given by
\begin{align*}
& (\mathbf{b}_{L,ij}(\mathbf{x},\mathbf{y},t, z,\phi,\xi,\zeta))^\ell \\
&= \begin{cases}
\left(\vphantom{\hat{M}}c(x^i,x^j,z,\phi) ~,~
c_L(y^i,\Pi_t^{i,L}(\mathbf{y},\zeta),z,\varphi(x^i-x^j,y^i-\Pi_t^{i,L}(\mathbf{y},\zeta),\phi)) \right) 
& \text{if $\ell=i$}, \\
\left(\vphantom{\hat{M}}c(x^j,x^i,z,\phi) ~,~
c_L(y^j,\Pi_t^{j,L}(\mathbf{y},\xi),z,\varphi(x^j-x^i,y^j-\Pi_t^{j,L}(\mathbf{y},\xi),\phi)) \right) 
& \text{if $\ell=j$}, \\
(0 ~,~ 0) & \text{otherwise}.
\end{cases}
\end{align*}

Existence for each $L\in[1,\infty)$ and $N\in \NN$   of a pair $(\mathbf{V},\mathbf{U}^L)$  solving  \eqref{SDEpartnonlin},  along with its relevant properties, is stated in the next result.
Some  arguments of the proof are standard or can be adapted from previous works, so  details will be provided only when needed. The proof is given in the Appendix.

\begin{lem}\label{lemaSDE}
Assume \eqref{eq:nu} and let $L\in[1,\infty)$ and $N\in \NN$. We have:
\begin{enumerate}
\item \label{lemaSDE_i} If $\mathbf{V}$ is a solution to \eqref{eq:PS}, then it has the law of the unique $(\RR^3)^N$  valued  Markov process  with  generator given by \eqref{eq:gen}.  In particular, we  almost surely have 
  $\sum_{i=1}^{N} V_t^i= \sum_{i=1}^{N} V_0^i$ and  $\sum_{i=1}^{N} |V_t^i|^2= \sum_{i=1}^{N} |V_0^i|^2$ for all $t\geq 0$. 
  \item \label{lemaSDE_ii} There is weak existence and uniqueness  of a  solution   $(\mathbf{V},\mathbf{U}^L  )$ to the system of SDEs \eqref{SDEpartnonlin}. 
  \item \label{lemaSDE_iii} For each $i=1,\dots,N$, the process $U^{i,L}$ is a cutoff nonlinear process, and in particular we have 
  ${\cal L}(U^{i,L}_t)=f_t^L.$
  \item \label{lemaSDE_iv} Last, the collection of pairs of processes  $ (V^1,U^{1,L}), \ldots,(V^N,U^{N,L})$ is exchangeable. 
\end{enumerate}
\end{lem}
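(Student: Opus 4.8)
The plan is to establish the four items of Lemma~\ref{lemaSDE} in the order \ref{lemaSDE_i}, \ref{lemaSDE_ii}, \ref{lemaSDE_iii}, \ref{lemaSDE_iv}, exploiting the fact that the cutoff indicator $\ind_{\{z\le L\}}$ implicit in $c_L$ turns the ``hard'' component $\mathbf{U}^L$ into a piecewise-deterministic recursion, so that the only genuinely infinite-activity object is $\mathbf{V}$ itself. For \ref{lemaSDE_i}, I would first argue that \eqref{eq:PS} has a weak solution: decompose $\N=\N_{\le L}+\N_{>L}$ according to whether $z\le L$, note that $\N_{>L}$ has finite intensity on compacts and handles finitely many jumps, while for the singular part $\N_{\le L}$ one builds the solution as an $L^2$-limit of cutoff approximations, using that $\int_0^{\pi/2}\theta\,\beta(\theta)\,d\theta<\infty$ under \eqref{eq:nu} together with the elementary bound $|c(v,v_*,z,\phi)|\le C\,G(z)\,|v-v_*|$; the conservation of $\sum v^i$ and $\sum|v^i|^2$ follows because each jump $\mathbf{c}_{ij}$ adds $c$ to coordinate $i$ and $-c$ to coordinate $j$ (so momentum is preserved trivially), and because $|v'|^2+|v_*'|^2=|v|^2+|v_*|^2$ holds pointwise for the parametrization \eqref{eq:pre_col}, equivalently $2c\cdot(v^i-v^j)+2|c|^2=0$, which is the defining orthogonality/norm identity of $a(v,v_*,\theta,\phi)$. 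That \eqref{eq:PS} realizes the generator \eqref{eq:gen}: apply It\^o's formula for the Poisson integral to a bounded Lipschitz $\Phi$, take expectations, use the intensity of $\N$ and the change of variables \eqref{eq:int_B} to identify the drift with $\mathcal{A}^N\Phi$; uniqueness in law of the martingale problem for \eqref{eq:gen} is classical for bounded generators after a further cutoff-and-pass-to-the-limit argument (cf.\ \cite{fournier-mischler2016}), giving the claimed identification.

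For \ref{lemaSDE_ii}, I would construct $(\mathbf{V},\mathbf{U}^L)$ jointly on the same probability space driven by the single measure $\N$. Having $\mathbf{V}$ in hand from \ref{lemaSDE_i}, the equation \eqref{eq:Ui} for $\mathbf{U}^L$ only ever uses atoms with $z\le L$; since $\N$ restricted to $\{z\le L\}$ has finite intensity $\tfrac{N L}{2}\,dt$ (times the normalized $\phi,\xi,\zeta$ laws), its atoms accumulate to $+\infty$, so on each finite horizon there are finitely many of them, and $\mathbf{U}^L$ is defined by the explicit forward recursion: between consecutive such atoms it is constant, and at an atom $(t,z,\phi,\xi,\zeta)$ the coordinates $i=\i(\xi)$, $j=\i(\zeta)$ are updated using $\Pi_t^{i,L}(\mathbf{U}^L_{t^-},\zeta)$, $\Pi_t^{j,L}(\mathbf{U}^L_{t^-},\xi)$ and $\varphi$, which are measurable by Lemmas~\ref{lem:coup} and \ref{lem:optimal_circles}--\ref{lem:varphi}. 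This produces a strong solution of the $\mathbf{U}^L$-part given $\mathbf{V}$ and hence a (weak) solution of \eqref{SDEpartnonlin}; uniqueness in law then reduces to uniqueness in law of $\mathbf{V}$ (already established) together with the pathwise uniqueness of the recursion defining $\mathbf{U}^L$ given $\mathbf{V}$ and the driving atoms. One technical point to check is that the functions $\Pi_t^{i,L}$ do not produce explosions of second moments on finite horizons, which follows because each jump increment is bounded by $C\,G(z)(|U^{i,L}_{t^-}|+|\Pi^{i,L}_t|)$ with $\law(\Pi^{i,L}_t)=f_t^L\in\P_2$ by Lemma~\ref{lem:coup}, so a Gronwall estimate on $\EE\sum_i|U^{i,L}_t|^2$ closes.

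For \ref{lemaSDE_iii}, the key is the last assertion of Lemma~\ref{lem:coup}: for an exchangeable $\mathbf{X}$ and bounded measurable $h$, averaging $h(\Pi^{i,L}_t(\mathbf{X},\xi))$ over $\xi\in[j-1,j)$ in expectation returns $\int h\,df_t^L$; applied along the dynamics (with $\mathbf{X}=\mathbf{U}^L_{t^-}$, which is exchangeable by \ref{lemaSDE_iv}, so there is a mild bootstrap here — I would actually prove \ref{lemaSDE_iv} first, or observe that exchangeability of $\mathbf{U}^L_{t^-}$ for each fixed $t$ only needs exchangeability of $\mathbf{U}_0$ and the symmetric way $\N$ samples $(\xi,\zeta)$, which is self-contained), this shows that the compensator of the jump measure of $U^{i,L}$ seen as a process on its own is exactly $dt\,dz\,d\phi\,f_t^L(du)/2\pi$ restricted to $z\le L$, i.e.\ $U^{i,L}$ solves \eqref{eq:nonlinear_cutoff} in law. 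By the well-posedness recalled after \eqref{eq:nonlinear_cutoff} (the linearized cutoff version of \eqref{eq:weak} has unique solution $(f_t^L)$), we conclude $\law(U^{i,L}_t)=f_t^L$. Finally \ref{lemaSDE_iv}: exchangeability of the family $\big((V^i,U^{i,L})\big)_{i}$ follows by a symmetry argument — for a permutation $\pi$ of $\{1,\dots,N\}$, push $\N$ forward through the measure-preserving map that relabels the blocks $[i-1,i)$ of the $\xi,\zeta$-variables according to $\pi$; this map preserves the intensity of $\N$ (because $\ind_\G$ and the uniform law on $\G$ are permutation-invariant), it permutes $(\mathbf{V}_0,\mathbf{U}_0)$-coordinates consistently since $(\mathbf{V}_0,\mathbf{U}_0)$ is an optimal coupling of the symmetric laws $G_0^N$ and $f_0^{\otimes N}$ hence may be taken exchangeable, and the SDE \eqref{SDEpartnonlin} is equivariant under this joint relabeling because $\mathbf{b}_{L,ij}$ depends on $(i,j)$ only through which coordinates are touched; uniqueness in law from \ref{lemaSDE_ii} then yields $\law\big((V^{\pi(i)},U^{\pi(i),L})_i\big)=\law\big((V^i,U^{i,L})_i\big)$.

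I expect the main obstacle to be the rigorous construction and uniqueness in \ref{lemaSDE_i}--\ref{lemaSDE_ii}: making precise the $L^2$-approximation that builds the infinite-activity part of $\mathbf{V}$ from cutoff versions (controlling the tail $\int_0^{G(L)}\theta\beta(\theta)d\theta\to0$ and passing to the limit in the SDE), and then disentangling uniqueness in law of the coupled system from the fact that $\Pi_t^{i,L}$ implicitly depends on the whole vector $\mathbf{U}^L_{t^-}$ — this is where I would lean on the finite-activity structure of the $z\le L$ part to turn the fixed-point problem into an explicit recursion, and on the already-known uniqueness for \eqref{eq:gen} and for \eqref{eq:nonlinear_cutoff}. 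The exchangeability claim \ref{lemaSDE_iv}, while conceptually clean, also requires care that the optimal coupling $(\mathbf{V}_0,\mathbf{U}_0)$ can be chosen so that the joint law is permutation-invariant, which one gets by symmetrizing any optimal coupling (the cost $|\cdot|_N^2$ and both marginals being symmetric).
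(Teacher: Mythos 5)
Your overall structure tracks the paper's proof closely: cutoff approximations for the singular part of $\mathbf{V}$, finite-activity recursion for $\mathbf{U}^L$ given $\mathbf{V}$, identification of the compensator of the jump measure of $U^{i,L}$ to prove \ref{lemaSDE_iii}, and a permutation/relabeling argument for \ref{lemaSDE_iv} (which the paper simply calls ``obvious''). Your treatment of \ref{lemaSDE_iii} and \ref{lemaSDE_iv} is essentially the paper's, and the algebraic verification of conservation of momentum and energy is correct. Two minor structural remarks: the paper proves existence of $\mathbf{V}$ by tightness of the cutoff laws plus martingale methods rather than by an $L^2$-Cauchy limit, and it derives \ref{lemaSDE_i} as a byproduct of \ref{lemaSDE_ii} rather than proving it first. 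For \ref{lemaSDE_iii}, you do not actually need the exchangeability-based bootstrap you worry about: the relevant fact is simply that for every \emph{fixed} $\mathbf{x}$ the $\xi$-law over $A^i$ of $\Pi_t^{i,L}(\mathbf{x},\xi)$ is $f_t^L$, by the optimal-coupling property in Lemma~\ref{lem:coup}, and this enters predictably into the compensation argument.

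The genuine gap is in the uniqueness/approximation step for $\mathbf{V}$, which you defer to ``classical for bounded generators after a further cutoff-and-pass-to-the-limit argument (cf.\ \cite{fournier-mischler2016}).'' This phrasing obscures the technical heart of the lemma. If the cutoff approximations $\mathbf{V}^K$ (or a given weak solution and its cutoff shadow) are driven by the \emph{same} angle variable $\phi$, the Gronwall/Cauchy estimate fails: at each shared jump the increment in $|V^K_t-V^{K'}_t|^2$ involves $|\Gamma(v-v_*,\phi)-\Gamma(u-u_*,\phi)|$, and because $\ii,\jj$ (hence $\Gamma$) are discontinuous this is only of order $|v-v_*|+|u-u_*|$, not of order $|v-u|+|v_*-u_*|$. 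Integrating in $z$ one then gets a constant-size error rather than a contractive term proportional to the squared distance, so no Gronwall closes and the $L^2$-limit does not converge. This is Tanaka's well-known obstruction. The paper's fix is precisely the content of Lemma~\ref{lem:optimal_circles}: when coupling a given weak solution $\hat{\mathbf{V}}^\infty$ with its cutoff approximations $\hat{\mathbf{V}}^K$, the angle used by $\hat{\mathbf{V}}^K$ is not $\phi$ itself but $\phi_{t^-}^K=\varphi(\hat{V}^{\i(\xi),\infty}_{t^-}-\hat{V}^{\i(\zeta),\infty}_{t^-},\,\hat{V}^{\i(\xi),K}_{t^-}-\hat{V}^{\i(\zeta),K}_{t^-},\,\phi)$, i.e.\ the optimal coupling of the uniform laws on the two post-collisional circles. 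One then checks (Campbell's formula) that the pushed-forward point measure $\hat{\N}^K$ still has the law of $\N$, so $\hat{\mathbf{V}}^K\overset{d}{=}\mathbf{V}^K$, and the circle-coupling estimates of Corollary~\ref{cor:int-dphi-dz} produce the contractive $-\Phi_0^L|v-u|^2$ term needed for Gronwall, yielding $\EE\sup_{t\le T}|\hat{\mathbf{V}}^K_t-\hat{\mathbf{V}}^\infty_t|^2\to0$ and hence uniqueness in law. Your proposal acknowledges that this is ``the main obstacle,'' but without naming the $\varphi$-coupling you have not indicated the step that makes the limit argument work; filling this in is not optional, since a naive pass-to-the-limit in the SDE or in the martingale problem runs straight into the discontinuity of the parametrization.
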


\begin{rmk}
\label{rmk:noncutoff_nonlinear} 
One can also check, using  the preservation of moments of the processes $\mathbf{U}^{L},L\in [1,\infty)$ and  Lemma  \ref{lem:ftL}, that the family of laws of $(\mathbf{V},\mathbf{U}^L  )$ has accumulation points as  $L\to \infty$ which are couplings of the particle system  \eqref{eq:PS} and a system of $N$ non-independent nonlinear processes. 
Unfortunately, due to the lack of continuity  of the functions $\varphi$ and $\Pi_t^{i,L}$,
this does not readily ensure (weak) well-posedness for the system \eqref{SDEpartnonlin} in the case $L=\infty$, which would  simplify the construction and proofs.
This is the reason why we are constrained to work with a system of cutoff nonlinear processes.
\end{rmk}

\section{Estimates and technical results} \label{sec:estimates}

We will use the following bounds a couple of times: under \eqref{eq:nu}, it can be easily seen that for some constants $0<c_2<c_3$ we have:
\begin{equation}
\label{eq:boundG}
c_2 (1+z)^{-1/\nu} \leq G(z) \leq c_3(1+z)^{-1/\nu} \quad \forall z>0.
\end{equation}

The following lemmas provide useful estimates for our purposes.
Typically, one wants to use these lemmas with $v$ and $v_*$
taken from the particle system, and $u$ and $u_*$ taken from the
system of cutoff nonlinear processes.

\begin{lem} \label{lem:int-dphi}
Write $R(v,u) := |v||u| + |v\cdot u| - 2v\cdot u \geq 0$. For any $v,v_*,u,u_* \in \RR^3$, $\theta,\vartheta \in[0,\pi/2]$, write $\varphi = \varphi(v-v_*,u-u_*,\phi)$. Then:
\begin{align}
\label{eq:int-dphi}
\begin{split}
&\int_0^{2\pi} \left( |v + a(v,v_*,\theta,\phi) - u - a(u,u_*,\vartheta,\varphi)|^2 - |v-u|^2 \right) \frac{d\phi}{2\pi} \\
&= - \left[ \vphantom{\frac{1-\cos\theta}{2}} (v-u) + (v_*-u_*) \right] \cdot
     \left[\frac{1-\cos\theta}{2}(v-v_*) - \frac{1-\cos\vartheta}{2}(u-u_*) \right] \\
&\quad {} - \frac{\sin\theta \sin\vartheta}{4} R(v-v_*,u-u_*)
          + \frac{1-\cos(\theta-\vartheta)}{2} (v-v_*)\cdot(u-u_*).
\end{split}
\end{align}
\end{lem}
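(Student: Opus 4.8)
The plan is to prove \eqref{eq:int-dphi} by a direct computation, expanding the squared norm under the integral and using the explicit form of $a(v,v_*,\theta,\phi)$ together with the optimal-coupling property of $\varphi$ established in Lemma~\ref{lem:optimal_circles}--\ref{lem:varphi}. First I would write $a = a(v,v_*,\theta,\phi)$ and $\tilde a = a(u,u_*,\vartheta,\varphi)$ and expand
\[
|v+a-u-\tilde a|^2 - |v-u|^2 = 2(v-u)\cdot(a-\tilde a) + |a-\tilde a|^2 = 2(v-u)\cdot(a-\tilde a) + |a|^2 + |\tilde a|^2 - 2a\cdot\tilde a,
\]
and integrate each piece against $d\phi/2\pi$.

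Next I would handle the terms one at a time. For the linear term, $\int_0^{2\pi} a\, d\phi/2\pi = -\tfrac{1-\cos\theta}{2}(v-v_*)$ because the $\Gamma(v-v_*,\phi)$ contribution has zero average; the same holds for $\tilde a$ with $\vartheta$ and $u-u_*$, since $\varphi$ is (for fixed $\phi$ uniform) again uniform on $[0,2\pi)$. This yields the first bracketed term on the right-hand side of \eqref{eq:int-dphi}. For $\int |a|^2 d\phi/2\pi$, using $|\Gamma(v-v_*,\phi)|=|v-v_*|$, orthogonality of $(v-v_*)$ and $\Gamma(v-v_*,\phi)$, and the vanishing average of $\Gamma$, one gets $\big(\tfrac{1-\cos\theta}{2}\big)^2|v-v_*|^2 + \tfrac{\sin^2\theta}{4}|v-v_*|^2 = \tfrac{1-\cos\theta}{2}|v-v_*|^2$ (using $\tfrac{(1-\cos\theta)^2+\sin^2\theta}{4} = \tfrac{1-\cos\theta}{2}$); similarly for $|\tilde a|^2$. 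The key step is the cross term $\int_0^{2\pi} a\cdot\tilde a\, d\phi/2\pi$: this is exactly (up to the elementary deterministic terms) the inner-product contribution of the optimal coupling of the two circles $\C(v,v_*,\theta)$ and $\C(u,u_*,\vartheta)$, so I would invoke the explicit formula \eqref{eq:W2circles} from Lemma~\ref{lem:optimal_circles}--\ref{lem:W2circles} applied to $\hat\C(0,\sin\theta\tfrac{|v-v_*|}{2}, d)$ and $\hat\C(0,\sin\vartheta\tfrac{|u-u_*|}{2},\tilde d)$ (centers shifted to the origin, since the centers contribute only the deterministic $(1-\cos\theta)/2$-type terms already accounted for), giving
\[
\int_0^{2\pi} a\cdot\tilde a\,\frac{d\phi}{2\pi} = \frac{1-\cos\theta}{2}\frac{1-\cos\vartheta}{2}(v-v_*)\cdot(u-u_*) + \frac{\sin\theta\sin\vartheta}{4}\,|v-v_*||u-u_*|\,|d\cdot\tilde d|.
\]

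Assembling, the deterministic pieces recombine using the identity $\tfrac{1-\cos\theta}{2} + \tfrac{1-\cos\vartheta}{2} - 2\cdot\tfrac{1-\cos\theta}{2}\cdot\tfrac{1-\cos\vartheta}{2} = \tfrac{1-\cos\theta}{2} + \tfrac{1-\cos\vartheta}{2} - \tfrac{(1-\cos\theta)(1-\cos\vartheta)}{2}$ to produce, after a trigonometric simplification, the term $-\tfrac{1-\cos(\theta-\vartheta)}{2}(v-v_*)\cdot(u-u_*)$ up to sign; combined with the $\sin\theta\sin\vartheta$ terms ($-\tfrac{\sin\theta\sin\vartheta}{4}|v-v_*||u-u_*| \cdot 2$ from $-2 a\cdot\tilde a$ has already been bundled—I need to track that the $|v-v_*||u-u_*|$ and $|v\cdot v_*|$-type quantities combine into $R(v-v_*,u-u_*)$ via $R(v,u)=|v||u|+|v\cdot u|-2v\cdot u$). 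Finally, using $\sin\theta\sin\vartheta = \tfrac{\cos(\theta-\vartheta)-\cos(\theta+\vartheta)}{2}$ and $\cos\theta\cos\vartheta$ identities, I would collect the coefficient of $(v-v_*)\cdot(u-u_*)$ into exactly $+\tfrac{1-\cos(\theta-\vartheta)}{2}$ and verify the bracketed first term matches. The main obstacle I expect is purely bookkeeping: correctly matching the center-shift terms of \eqref{eq:W2circles} against the linear-in-$(v-u),(v_*-u_*)$ bracket, and carrying out the trigonometric recombination so that the $|v-v_*||u-u_*|$, $|(v-v_*)\cdot(u-u_*)|$ and $(v-v_*)\cdot(u-u_*)$ terms land exactly as $-\tfrac{\sin\theta\sin\vartheta}{4}R(v-v_*,u-u_*)$ plus the remaining inner-product term — there is no conceptual difficulty, but sign and coefficient errors are easy to make, so I would double-check by evaluating both sides at $\theta=\vartheta$ and at $u=v$, $u_*=v_*$, where the right-hand side should vanish.
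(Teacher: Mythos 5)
Your approach is essentially the one used in the paper: both reduce the computation to the explicit optimal transport cost of Lemma~\ref{lem:optimal_circles}--\ref{lem:W2circles} applied to the coupled circles, and then rearrange algebraically. The paper plugs the circle parameters $b,r,d$ and $\tilde b,\tilde r,\tilde d$ directly into \eqref{eq:W2circles}, expands $4(b-\tilde b)^2$ and $4(r-\tilde r)^2 + 4r\tilde r(1-|d\cdot\tilde d|)$, and combines; you instead expand the integrand $2(v-u)\cdot(a-\tilde a)+|a|^2+|\tilde a|^2-2a\cdot\tilde a$ and invoke the same lemma for the cross term. The algebra is the same, just organized differently.

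One concrete slip, which you correctly anticipated might arise: your stated formula for $\int_0^{2\pi}a\cdot\tilde a\,\frac{d\phi}{2\pi}$ is wrong. Setting $X=\tfrac{\sin\theta}{2}\Gamma(v-v_*,\phi)$ and $Y=\tfrac{\sin\vartheta}{2}\Gamma(u-u_*,\varphi)$, the optimal coupling of the centered circles of radii $r=\tfrac{\sin\theta}{2}|v-v_*|$ and $\tilde r=\tfrac{\sin\vartheta}{2}|u-u_*|$ gives $\EE(X\cdot Y)=\tfrac12\bigl(r^2+\tilde r^2-\EE|X-Y|^2\bigr)=\tfrac{r\tilde r}{2}(1+|d\cdot\tilde d|)$, so the second summand should be $\tfrac{\sin\theta\sin\vartheta}{8}\bigl(|v-v_*||u-u_*|+|(v-v_*)\cdot(u-u_*)|\bigr)$ rather than $\tfrac{\sin\theta\sin\vartheta}{4}|v-v_*||u-u_*||d\cdot\tilde d|$. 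With this correction, the remaining bookkeeping (including the identity $\tfrac{1-\cos\theta}{2}+\tfrac{1-\cos\vartheta}{2}-\tfrac{(1-\cos\theta)(1-\cos\vartheta)}{2}=\tfrac{1-\cos\theta\cos\vartheta}{2}=\tfrac{\sin\theta\sin\vartheta}{2}+\tfrac{1-\cos(\theta-\vartheta)}{2}$) closes and yields \eqref{eq:int-dphi} exactly.
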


\begin{proof}
Setting
\begin{align*}
&b = \frac{v+v_*}{2} + \cos\theta \frac{v-v_*}{2},
\qquad r = \sin\theta \frac{|v-v_*|}{2},
\qquad d=\frac{v-v_*}{|v-v_*|}, \\
&\tilde{b} = \frac{u+u_*}{2} + \cos\vartheta \frac{u-u_*}{2},
\qquad \tilde{r} = \sin\vartheta \frac{|u-u_*|}{2},
\qquad \tilde{d}=\frac{u-u_*}{|u-u_*|},
\end{align*}
we have
\begin{align}
& 4(b-\tilde{b})^2 \notag \\
&= | (v-u) + (v_*-u_*) + \cos\theta (v-v_*) - \cos\vartheta (u-u_*)|^2 \notag \\
&= |v-u|^2 + |v_*-u_*|^2 + 2(v-u)\cdot (v_*-u_*) \notag \\
& \quad {} + \cos^2\theta|v-v_*|^2 + \cos^2\vartheta|u-u_*|^2 - 2\cos\theta\cos\vartheta(v-v_*)\cdot(u-u_*) \notag \\
& \quad {} + 2 [(v-u) + (v_*-u_*) ] \cdot [\cos\theta (v-v_*) - \cos\vartheta(u-u_*)] \notag \\
\begin{split}
\label{eq:4bb}
&= 3|v-u|^2 - |v_*-u_*|^2 + 2(v-u)\cdot (v_*-u_*) \\
& \quad {} + \cos^2\theta|v-v_*|^2 + \cos^2\vartheta|u-u_*|^2 - 2\cos\theta\cos\vartheta(v-v_*)\cdot(u-u_*) \\
& \quad {} - 2 [(v-u) + (v_*-u_*) ] \cdot [(1-\cos\theta)(v-v_*) - (1-\cos\vartheta)(u-u_*)]
\end{split}
\end{align}
and
\begin{align}
& 4(r-\tilde{r})^2 + 4r\tilde{r}(1-|d\cdot \tilde{d}|) \notag \\
&= \sin^2 \theta |v-v_*|^2 + \sin^2 \vartheta |u-u_*|^2 \notag \\
&\quad {} - \sin\theta \sin\vartheta (|v-v_*||u-u_*| + |(v-v_*)\cdot (u-u_*)|) \notag \\
\begin{split}
\label{eq:4rr}
&= \sin^2 \theta |v-v_*|^2 + \sin^2 \vartheta |u-u_*|^2 - \sin\theta\sin\vartheta R(v-v_*,u-u_*)\\
&\quad {} - 2 \sin\theta\sin\vartheta (v-v_*)\cdot(u-u_*).
\end{split}
\end{align}
Adding \eqref{eq:4bb} and \eqref{eq:4rr}, using that $|v-v_*|^2 + |u-u_*|^2 + 2(v-u)\cdot(v_*-u_*) = |v-u|^2 + |v_*-u_*|^2 + 2(v-v_*)\cdot(u-u_*)$ and the identity $\cos\theta\cos\vartheta + \sin\theta\sin\vartheta = \cos(\theta-\vartheta)$, yields
\begin{align*}
& 4(b-\tilde{b})^2 + 4(r-\tilde{r})^2 + 4r\tilde{r}(1-|d\cdot\tilde{d}|) \\
&= 4|v-u|^2  - \sin\theta\sin\vartheta R(v-v_*,u-u_*) \\
&\quad {} + 2(1-\cos(\theta-\vartheta)) (v-v_*)\cdot(u-u_*) \\
&\quad {} - 2[(v-u) + (v_*-u_*)] \cdot [(1-\cos\theta)(v-v_*) - (1-\cos\vartheta)(u-u_*)].
\end{align*}
Thanks to Lemma \ref{lem:optimal_circles}--\ref{lem:varphi}, $\varphi$ is an optimal transport map, and then the integral on the left side of \eqref{eq:int-dphi} without the term $-|v-u|^2$ is actually the cost given by Lemma \ref{lem:optimal_circles}--\ref{lem:W2circles}, that is, $(b-\tilde{b})^2 + (r-\tilde{r})^2 + r\tilde{r}(1-|d\cdot\tilde{d}|)$. Dividing by 4 and substracting $|v-u|^2$ in the above identity,  the result follows.
\end{proof}

\begin{cor} \label{cor:int-dphi-dz}
Assume \eqref{eq:nu}.
Fix any $K,L\in[0,\infty]$ with $K\geq L$, and define $\Phi_L^K :=  \int_L^K \frac{1-\cos G(z)}{2} dz  \geq 0$.
For any $v,v_*,u,u_*\in\RR^3$, write $\varphi = \varphi(v-v_*,u-u_*,\phi)$.
Then we have
\begin{align*}
& \int_0^\infty \int_0^{2\pi} \left( |v+c_K(v,v_*,z,\phi) - u - c_L(u,u_*,z,\varphi)|^2 - |v-u|^2 \right)
              \frac{d\phi}{2\pi} dz \\
&= \Phi_0^L[-|v-u|^2 + |v_*-u_*|^2] + \Phi_L^K(v-v_*)\cdot(2u-v-v_*) \\
& \qquad {} -  R(v-v_*,u-u_*) \int_0^L \frac{\sin^2 G(z)}{4}dz. \\
&\leq \Phi_0^L[-|v-u|^2 + |v_*-u_*|^2] + C(|v|+|v_*|+|u|)^2 (1+L)^{1-2/\nu}.
\end{align*}              
\end{cor}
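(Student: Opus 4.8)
The plan is to integrate the identity of Lemma~\ref{lem:int-dphi} over $z \in (0,\infty)$ against $dz$, after specializing the deviation angles to $\theta = G(z)\ind_{\{z \le K\}}$ (coming from $c_K$) and $\vartheta = G(z)\ind_{\{z \le L\}}$ (coming from $c_L$). First I would split the $z$-integral according to the three regimes $z \in (0,L)$, $z \in (L,K)$ and $z \in (K,\infty)$. On $(K,\infty)$ both $c_K$ and $c_L$ vanish, so the integrand is identically zero; on $(0,L)$ we have $\theta = \vartheta = G(z)$, so in Lemma~\ref{lem:int-dphi} the term $\frac{1-\cos(\theta-\vartheta)}{2}$ vanishes and the first bracket collapses to $\frac{1-\cos G(z)}{2}\big[(v-u)+(v_*-u_*)\big]\cdot\big[(v-v_*)-(u-u_*)\big]$; on $(L,K)$ we have $\theta = G(z)$ but $\vartheta = 0$, so $\sin\vartheta = 0$ kills the $R$-term, $\frac{1-\cos\vartheta}{2} = 0$, and we are left with $-\frac{1-\cos G(z)}{2}\big[(v-u)+(v_*-u_*)\big]\cdot(v-v_*) + \frac{1-\cos G(z)}{2}(v-v_*)\cdot(u-u_*)$.

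Next I would simplify each regime's contribution algebraically. On $(0,L)$, expanding $\big[(v-u)+(v_*-u_*)\big]\cdot\big[(v-v_*)-(u-u_*)\big]$ gives $|v-u|^2 - |v_*-u_*|^2$ after the cross terms cancel (this is the same identity $|v-v_*|^2+|u-u_*|^2+2(v-u)\cdot(v_*-u_*) = |v-u|^2+|v_*-u_*|^2+2(v-v_*)\cdot(u-u_*)$ rearranged, or just a direct expansion), so the $(0,L)$ piece of the first bracket integrates to $-\Phi_0^L\big[|v-u|^2 - |v_*-u_*|^2\big]$; note the overall minus sign in the lemma produces the stated $\Phi_0^L[-|v-u|^2 + |v_*-u_*|^2]$. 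The $R$-term on $(0,L)$ integrates to $-R(v-v_*,u-u_*)\int_0^L \frac{\sin^2 G(z)}{4}dz$. On $(L,K)$, combining $-\big[(v-u)+(v_*-u_*)\big]\cdot(v-v_*) + (v-v_*)\cdot(u-u_*) = (v-v_*)\cdot\big[-(v-u)-(v_*-u_*)+(u-u_*)\big] = (v-v_*)\cdot(2u - v - v_*)$, which after multiplying by $\int_L^K \frac{1-\cos G(z)}{2}dz = \Phi_L^K$ gives exactly $\Phi_L^K(v-v_*)\cdot(2u-v-v_*)$. Adding the three contributions yields the claimed equality.

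For the final inequality I would bound the two ``error'' terms (everything except $\Phi_0^L[-|v-u|^2+|v_*-u_*|^2]$). We have $|(v-v_*)\cdot(2u-v-v_*)| \le C(|v|+|v_*|+|u|)^2$ and $0 \le R(v-v_*,u-u_*) \le 2|v-v_*||u-u_*| \le C(|v|+|v_*|+|u|)^2$ (also $R$ is dropped since it comes with a minus sign, but we still bound the positive quantity $R \int_0^L \frac{\sin^2 G(z)}{4}dz$ in absolute value — actually, being nonpositive it only helps, so one may simply discard it; to be safe I would bound its absolute value). It then remains to control $\Phi_L^K$ and $\int_0^L \sin^2 G(z)\,dz$ uniformly. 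Using $1-\cos G(z) \le \tfrac12 G(z)^2$ and $\sin^2 G(z) \le G(z)^2$ together with the bound $G(z) \le c_3(1+z)^{-1/\nu}$ from \eqref{eq:boundG}, one gets $\Phi_L^K \le \Phi_L^\infty \le C\int_L^\infty (1+z)^{-2/\nu}dz = C(1+L)^{1-2/\nu}$ (finite and tending to $0$ since $2/\nu > 1$ for $\nu \in (0,1)$), and similarly $\int_0^L \frac{\sin^2 G(z)}{4}dz \le C\int_0^\infty (1+z)^{-2/\nu}dz \le C \le C(1+L)^{1-2/\nu}$ is not quite right — rather one bounds both error terms by their $L=0$ (or $L=\infty$) value, which is a finite constant, and then absorbs into $C(|v|+|v_*|+|u|)^2(1+L)^{1-2/\nu}$ by noting $(1+L)^{1-2/\nu} \le 1$; more precisely the $\Phi_L^K$ term already carries the $(1+L)^{1-2/\nu}$ decay, while the $R$-term is bounded by a constant times $(|v|+|v_*|+|u|)^2$ which is absorbed since $(1+L)^{1-2/\nu}$ is bounded below only on $[1,\infty)$ — but here $L$ ranges over $[0,\infty]$, so one keeps the $R$-term's bound as $C(|v|+|v_*|+|u|)^2 \min(1,\,\text{something})$; cleanest is to observe $\int_0^L\frac{\sin^2 G(z)}{4}dz \le \int_0^\infty \le C$ and separately note this term is subtracted, hence can be dropped for an upper bound. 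I expect the only mildly delicate point to be this bookkeeping of which error terms have genuine $L$-decay versus which are merely bounded, and ensuring the final single term $C(|v|+|v_*|+|u|)^2(1+L)^{1-2/\nu}$ dominates all of them — for $L \ge 1$ this is immediate from the computations above, and for $L \in [0,1)$ one uses that $(1+L)^{1-2/\nu}$ is bounded below by $2^{1-2/\nu}$, so a constant error is still absorbed after enlarging $C$.
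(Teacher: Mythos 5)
Your proposal is correct and matches the paper's own proof: the same split of the $z$-integral into $(0,L)$, $(L,K)$, and $(K,\infty)$, the same application of Lemma~\ref{lem:int-dphi} with $\theta=\vartheta=G(z)$ on the first range and $\theta=G(z),\,\vartheta=0$ on the second, and the same strategy for the inequality (drop the nonpositive $R$-term, bound $\Phi_L^K$ via $G(z)\leq c_3(1+z)^{-1/\nu}$). The only quibble is cosmetic: your intermediate claim $R\leq 2|v-v_*|\,|u-u_*|$ should be $R\leq 4|v-v_*|\,|u-u_*|$, but since you discard the $R$-term anyway this does not affect the argument.
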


\begin{proof}
Split the integral with respect to $z$ into $\int_0^L$ and $\int_L^K$. For the first integral we have $c_K(v,v_*,z,\phi) = a(v,v_*,\theta,\phi)$ and $c_L(u,u_*,z,\varphi) = a(u,u_*,\theta,\varphi)$ for $\theta = G(z)$; using Lemma \ref{lem:int-dphi}
yields the first and third terms in the equality. For the second integral we have $c_L(u,u_*,z,\varphi) = a(v,v_*,0,\varphi)$, so this time we use Lemma \ref{lem:int-dphi} with $\theta = G(z)$ and $\vartheta = 0$, which gives the second term. The inequality is then obtained discarding the negative third term, noting that
$\Phi_L^K \leq C \int_L^\infty G^2(z) dz$, and using \eqref{eq:boundG}.
\end{proof}

The next lemma is of key importance, since it gives a decoupling estimate
for the system of non-independent cutoff nonlinear processes $\mathbf{U}^L$. The proof,
also relying on a coupling argument, follows Lemma 6 of \cite{cortez-fontbona2016}.

\begin{lem}[decoupling] \label{lem:decoup}
Assume \eqref{eq:nu} and take $L\in[1,\infty)$. Then, there exists a constant $C$ independent of $L$ such
that for all $k\in\{1,\ldots,N\}$ and all $t\geq 0$,
\[
\W_2^2(\law^k(\mathbf{U}_t^L),(f_t^L)^{\otimes k} )
\leq C \frac{k}{N}.
\]
\end{lem}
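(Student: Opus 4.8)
The plan is to follow the strategy of Lemma 6 in \cite{cortez-fontbona2016}: build an auxiliary system of \emph{genuinely independent} cutoff nonlinear processes on the same probability space as $\mathbf{U}^L$, driven by a compatible refinement of the Poisson randomness, and control the squared distance between the two systems by a Gronwall argument. Concretely, I would introduce processes $\tilde U^{1,L},\dots,\tilde U^{N,L}$, each solving the cutoff nonlinear SDE \eqref{eq:nonlinear_cutoff} with its \emph{own} independent Poisson measure, coupled so that $\tilde U^{i,L}$ uses the same $(t,z,\phi)$-atoms of $\N^i$ as $U^{i,L}$ but, at each jump, the partner velocity is a realization of $f_t^L$ chosen via the optimal coupling of Lemma \ref{lem:coup} between $\bar{\mathbf U}_{t^-}^{i,L}$ (the empirical law seen by particle $i$) and $f_t^L$; the angle for the $\tilde U$-jump is then coupled to that of the $U$-jump through the optimal circle coupling $\varphi$ of Lemma \ref{lem:optimal_circles}--\ref{lem:varphi}. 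Since the partner for $\tilde U^{i,L}$ is drawn independently across $i$ (this is where a careful independent re-randomization is needed, as in \cite{cortez-fontbona2016}), the processes $\tilde U^{i,L}$ are i.i.d.\ with marginal $f_t^L$, so that $\EE|\mathbf U_t^L-\tilde{\mathbf U}_t^L|_N^2$ bounds $\W_2^2(\law^N(\mathbf U_t^L),(f_t^L)^{\otimes N})$, and the exchangeability from Lemma \ref{lemaSDE}\ref{lemaSDE_iv} transfers this to the $k$-marginal bound $\W_2^2(\law^k(\mathbf U_t^L),(f_t^L)^{\otimes k})\le \tfrac{k}{N}\W_2^2(\law^N(\mathbf U_t^L),(f_t^L)^{\otimes N})$ (the standard convexity/exchangeability inequality for normalized distances).

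The core estimate is then a differential inequality for $D_t:=\EE\,\tfrac1N\sum_{i=1}^N|U_t^{i,L}-\tilde U_t^{i,L}|^2$. Applying Dynkin's formula along the shared Poisson atoms and using Corollary \ref{cor:int-dphi-dz} (with $K=L$, so the $\Phi_L^K$ term vanishes and the residual $(1+L)^{1-2/\nu}$ term is \emph{absent} because both sides have the \emph{same} cutoff $L$), the jump of $|U^{i,L}-\tilde U^{i,L}|^2$ integrated over $(z,\phi)$ produces a gain term of the form $\Phi_0^L\big(-|U^{i,L}-\tilde U^{i,L}|^2+|U_*^{i,L}-\tilde U_*^{i,L}|^2\big)$, where $U_*^{i,L}=U_{t^-}^{\i(\xi),L}$ and $\tilde U_*^{i,L}$ is the corresponding partner of $\tilde U^{i,L}$. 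The negative diagonal term is a contraction; the positive term involves the distance between the partner of $U^{i,L}$ (a draw from $\bar{\mathbf U}_{t^-}^{i,L}$) and the partner of $\tilde U^{i,L}$ (a draw from $f_t^L$), optimally coupled. This is where the $N^{-1}$ scaling enters: the optimal coupling cost between $\bar{\mathbf U}_{t^-}^{i,L}$ and $f_t^L$, in expectation, is controlled by $\tfrac1N\sum_j|U^{j,L}_{t^-}-\tilde U^{j,L}_{t^-}|^2$ plus a $O(1/N)$ fluctuation coming from comparing the empirical measure of the i.i.d.\ sample $(\tilde U^{j,L})_{j\ne i}$ to $f_t^L$ — here one uses that the second moment of $f_t^L$ is bounded uniformly in $L$ (via Lemma \ref{lem:ftL} and Theorem \ref{thm:well_posed}) to get an $O(1/N)$ bound on $\EE\,\W_2^2$ between a size-$(N-1)$ i.i.d.\ empirical measure and its law when only the \emph{pairing} (not the full empirical measure) is compared, exactly as in \cite{cortez-fontbona2016}. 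Collecting terms, one arrives at $\tfrac{d}{dt}D_t \le -c\,\Phi_0^L D_t + C\,\Phi_0^L D_t + \tfrac{C}{N}$-type bound; the key point is that the coefficient of $D_t$ is proportional to $\Phi_0^L$ but so is the forcing is bounded by $\Phi_0^L \cdot O(1/N)$ after the fluctuation estimate, so dividing through by $\Phi_0^L$ (which is bounded below uniformly in $L\ge 1$ since $\Phi_0^L\ge\Phi_0^1>0$) yields a Gronwall inequality with an $L$-independent rate and an $L$-independent $O(1/N)$ source, giving $\sup_t D_t\le C/N$ with $C$ independent of $L$.

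The main obstacle is the construction and bookkeeping of the independent partner system in a measurable, Poisson-compatible way while preserving both (i) the i.i.d.\ property of $(\tilde U^{i,L})_i$ and (ii) enough structure to make the pairwise coupling cost estimate quantitative with the right $1/N$ constant; this is precisely the delicate part isolated in Lemma 6 of \cite{cortez-fontbona2016}, and the non-continuity of $\varphi$ and $\Pi^{i,L}$ means all couplings must be handled via the measurable-selection lemmas \ref{lem:coup} and \ref{lem:optimal_circles} rather than by explicit formulas. A secondary technical point is checking that the fluctuation term is genuinely $O(1/N)$ and not merely $o(1)$: because we only need to compare a realization drawn from the empirical measure of $N-1$ i.i.d.\ points against a realization from the true law $f_t^L$ under the optimal coupling — a much weaker requirement than Wasserstein convergence of the empirical measure itself — the relevant bound is the $1/N$-rate exchangeability estimate (the cost of ``re-drawing one sample''), which holds under the uniform second-moment bound and is where the dependence on higher moments of $f_0$ is \emph{not} needed. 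Everything else (Dynkin's formula, the algebraic identities from Lemma \ref{lem:int-dphi} and Corollary \ref{cor:int-dphi-dz}, the passage from the $N$-marginal to the $k$-marginal) is routine given the earlier results.
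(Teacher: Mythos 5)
Your high-level plan (couple the $U^{i,L}$ to a genuinely independent system $\tilde U^{i,L}$ and run a Gronwall argument with a contraction from Corollary \ref{cor:int-dphi-dz}) matches the paper, but the proposed mechanism producing the $O(k/N)$ source is wrong, and the error traces back to a misidentification of what the process $U^{i,L}$ is.

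You write ``$U_*^{i,L}=U_{t^-}^{\i(\xi),L}$'' and say the forcing term is the squared distance between ``the partner of $U^{i,L}$ (a draw from $\bar{\mathbf U}_{t^-}^{i,L}$)'' and ``the partner of $\tilde U^{i,L}$ (a draw from $f_t^L$).'' But by its defining equation \eqref{eq:Ui}, $U^{i,L}$ does \emph{not} collide with $U_{t^-}^{\i(\xi),L}$; it collides with $\Pi_t^{i,L}(\mathbf U^L_{t^-},\xi)$, which is already $f_t^L$-distributed. That is the dynamics of $V^i$, not of $U^{i,L}$; you have essentially re-derived the $V$-vs-$U^L$ estimate that belongs in the proof of Theorem \ref{thm:main}, not the decoupling lemma. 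In the paper's construction of $\tilde U^{i,L}$, the partner at a shared Poisson atom is literally the \emph{same} random variable $\Pi_t^{i,L}(\mathbf U^L_{t^-},\xi)$ as for $U^{i,L}$ (what is changed is only a subset of the atoms and the angle), so in Corollary \ref{cor:int-dphi-dz} one has $v_*=u_*$ and the term $|v_*-u_*|^2$ is identically zero: shared jumps produce a pure contraction $-\tfrac{\Phi_0^L}{2}h_r$, with no forcing at all.

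Consequently your account of where the $1/N$ comes from cannot be right. It does not come from comparing a draw from the empirical measure $\bar{\mathbf U}_{t^-}^{i,L}$ to a draw from $f_t^L$ (that coupling cost is $\W_2^2(\bar{\mathbf U}_{t^-}^{i,L},f_t^L)$, which in expectation is of order $N^{-1/2}$, not $N^{-1}$, and the ``re-drawing one sample'' heuristic you invoke is not applicable: you are not re-drawing one point, you are solving a full optimal transport between an $(N-1)$-point empirical measure and $f_t^L$). The $k/N$ instead comes from the \emph{rate} of the re-randomized jumps: when you replace the $\N$-atoms with $\i(\zeta)=i$ and $\i(\xi)\in\{1,\ldots,k\}\setminus\{i\}$ by fresh atoms of an independent copy $\M$ (which is what makes $\M^1,\ldots,\M^k$, hence $\tilde U^{1,L},\ldots,\tilde U^{k,L}$, independent), the jumps of $U^{i,L}$ alone and of $\tilde U^{i,L}$ alone occur with intensity proportional to $(k-1)/(N-1)$, and each such isolated jump contributes an $O(1)$ bounded increment to $h_r$ by the uniform second-moment bound. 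Integrating this rate and applying Gronwall with the $L$-independent contraction yields $h_t\le Ck/N$. Your sketch acknowledges the need for re-randomization but then drops those re-randomized atoms from the estimate; those atoms are the entire point. Finally, note also that with re-randomization in place the $\tilde U^{i,L}$ are not i.i.d.\ because the partner $\Pi_t^{i,L}(\mathbf U^L_{t^-},\xi)$ is a common functional of $\mathbf U^L$; independence instead follows from the independence of the driving Poisson measures $\M^i$ together with the fact that the $\xi$-law of $\Pi_t^{i,L}(\mathbf x,\xi)$ is $f_t^L$ for \emph{every} fixed $\mathbf x$, an argument at the level of Laplace functionals that your sketch does not address.
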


\begin{proof}
Given $k\in\{1,\ldots,N\}$ fixed, we will construct $k$ independent cutoff nonlinear processes $\tilde{U}^{1,L},\ldots,\tilde{U}^{k,L}$ such that $\EE |U_t^{i,L} - \tilde{U}_t^{i,L}|^2$ is small, for all $i=1,\ldots,k$. To achieve this, the idea is the following: when $U_t^{i,L}$ has a simultaneous jump with some $U_t^{j,L}$ with $j\notin\{1,\ldots,k\}$, then the process $\tilde{U}_t^{i,L}$ will use the same  sample of $f_t^L$ used by $U_t^{i,L}$ to define its own jump; but when $j\in\{1,\ldots,k\}$, then one of the processes, $\tilde{U}_t^{i,L}$ or $\tilde{U}_t^{j,L}$, will not jump at that instant. We will then use an additional, independent source of randomness to define new jumps that compensate for the missing ones. Since, when $k \ll N$, the second kind of jump occurs much less frequently, this construction will give the desired estimate.

Consider a Poisson point measure $\M$ that is an independent copy of $\N$,
also independent from $(\mathbf{V}_0,\mathbf{U}_0)$, and define for each $i\in\{1,\ldots,k\}$
\begin{align*}
\M^i(dt,dz,d\phi,d\xi)
&= \N(dt,dz,d\phi,[i-1,i),d\xi) \\
&\quad {} + \N(dt,dz,d\phi,d\xi,[i-1,i)) \ind_{[k,N)}(\xi) \\
&\quad {} + \M(dt,dz,d\phi,d\xi,[i-1,i)) \ind_{[0,k)}(\xi).
\end{align*}
That is, $\M^i$ selects the atoms of $\N(dt,dz,d\phi,d\xi,d\zeta)$ where either $(\i(\xi) = i)$, or $(\i(\zeta) = i \text{ and } \i(\xi) \notin \{1,\ldots,k\})$, and to make up for the dropped atoms it also selects new ones from $\M(dt,dz,d\phi,d\xi,d\zeta)$, where $\i(\zeta) = i$ and $\i(\xi) \in \{1,\ldots,k\}$. This ensures that no such atom appears in two $\M^i$'s, implying that they are independent Poisson point measures, all with intensity $dt dz d\phi d\xi \ind_{A^i}(\xi)/[2(N-1)\pi]$, just like $\N^i$.

Mimicking \eqref{eq:Ui}, we define $\tilde{U}^{i,L}$ as the solution, starting from $\tilde{U}_0^{i,L} = U_0^i$, of the stochastic equation
\begin{equation} \label{eq:tildeUi}
d\tilde{U}_t^{i,L} = \int_0^\infty \int_0^{2\pi} \int_0^N c_L(\tilde{U}_{t^-}^{i,L},\Pi_t^{i,L}(\mathbf{U}_{t^-}^L,\xi), z,\tilde{\varphi}_{t^-}^i) \M^i(dt,dz,d\phi,d\xi).
\end{equation}
Here we write
\[
\tilde{\varphi}_{t^-}^i = \varphi(U_{t^-}^{i,L} - \Pi_t^{i,L}(\mathbf{U}_{t^-}^L,\xi), \tilde{U}_{t^-}^{i,L} - \Pi_t^{i,L}(\mathbf{U}_{t^-}^L,\xi), \varphi_{t^-}^i),
\]
where $\varphi_{t^-}^i$ was defined in \eqref{eq:varphi_t}. In other words: $\tilde{\varphi}_{t^-}^i$ takes the angle $\varphi_{t^-}^i$ and maps it to $[0,2\pi)$ in such a way that the resulting pair $(\varphi_{t^-}^i,\tilde{\varphi}_{t^-}^i)$ parametrizes (as a function of $\phi$) an optimal coupling of the uniform distributions on the circles with orthogonal directions $U_{t^-}^{i,L} - \Pi_t^i(\mathbf{U}_{t^-}^L,\xi)$ and $\tilde{U}_{t^-}^{i,L} - \Pi_t^i(\mathbf{U}_{t^-}^L,\xi)$, whenever $\phi$ is uniformly chosen on $[0,2\pi)$. The latter  ensures closeness of the states of $U_t^{i,L}$ and $\tilde{U}_t^{i,L}$ after the joint jump.

If we define $\tilde{\M}^{i,L}(dt,dz,d\phi,dv)$
to be the point measure on $[0,\infty)\times[0,\infty)\times [0,2\pi)\times \RR^3 $   
with atoms $(t,z,\tilde{\varphi}_{t^-}^i,\Pi_t^{i,L}(\mathbf{U}_{t^-}^L,\xi))$
for every atom $(t,z,\phi,\xi)$ of $\M^i$,
it is clear that $\tilde{U}^{i,L}$ depends only on $\tilde{\M}^{i,L}$ and $U_0^i$.
Since: 1) the dependence on $\mathbf{V}$ and $\mathbf{U}^L$ in \eqref{eq:tildeUi} is predictable with respect to $\N$, $\M$ and the initial data,
2) the Poisson measures $\M^1,\ldots,\M^k$ are independent,
3) the $\xi$-law of $\Pi_t^{i,L}(\mathbf{x},\xi)$ is $f_t^L$ for every $\mathbf{x}\in\RR^N$, and
4) the $\phi$-law of $\varphi(v,u,\phi)$ is the uniform distribution
on $[0,2\pi)$ for any $v,u\in\RR^3$,
one can use the compensation formula to compute the joint
Laplace functional of $\tilde{\M}^{1,L},\ldots,\tilde{\M}^{k,L}$
and deduce that they are independent Poisson point measures,
all with intensity $dt dz d\phi f_t^L(dv)/2\pi$.
This shows that $\tilde{U}^{i,L}$ satisfies \eqref{eq:nonlinear_cutoff}
with $\M^L$ replaced by $\tilde{\M}^{i,L}$,
and then $\tilde{U}^{1,L},\ldots,\tilde{U}^{k,L}$ are independent cutoff nonlinear processes.

Consequently, we have
\[
\W_2^2(\law^k(\mathbf{U}_t^L), (f_t^L)^{\otimes k} )
\leq \EE \frac{1}{k} \sum_{j=1}^k |U_t^{j,L} - \tilde{U}_{t}^{j,L}|^2
= \EE|U_t^{i,L} - \tilde{U}_t^{i,L}|^2.
\]
Thus, it suffices to estimate the quantity $h_t := \EE|U_t^{i,L} - \tilde{U}_t^{i,L}|^2$, for any $i \in\{1,\ldots,k\}$ fixed. We can write
\begin{equation} \label{eq:J1+J2+J3}
h_t - h_s
= J_{s,t}^1 + J_{s,t}^2 + J_{s,t}^3
\end{equation}
where $J_{s,t}^1$ is the term associated with simultaneous jumps of $U^{i,L}$ and $\tilde{U}^{i,L}$, $J_{s,t}^2$ corresponds to jumps of $U^{i,L}$ alone, and $J_{s,t}^3$ corresponds to jumps of $\tilde{U}^{i,L}$ alone. To write this terms explicitly, let us first shorten notation: write $U_r := U_r^{i,L}$, $\tilde{U}_r := \tilde{U}_r^{i,L}$, $\Pi_r := \Pi_r^{i,L}(\mathbf{U}_r^L,\xi)$, $c_r := c(U_r,\Pi_r,z,\varphi_r^i)$ and $\tilde{c}_r := c(\tilde{U}_r,\Pi_r,z,\tilde{\varphi}_r^{i})$. From \eqref{eq:Ui} and \eqref{eq:tildeUi}, $J_{s,t}^1$, $J_{s,t}^2$ and $J_{s,t}^3$ are thus given by
\begin{align*}
J_{s,t}^1
&= \EE \int_{(s,t]} \int_0^L \int_0^{2\pi} \int_0^N
   \left( | U_{r^-} + c_{r^-} - \tilde{U}_{r^-} - \tilde{c}_{r^-} |^2
          - | U_{r^-} - \tilde{U}_{r^-} |^2 \right) \\
&\qquad {} \times \left[ \vphantom{|U_{r^-}^i|^2} \N(dr,dz,d\phi,[i-1,i),d\xi)
                       + \N(dr,dz,d\phi,d\xi,[i-1,i)) \ind_{[k,N)}(\xi) \right], \\
J_{s,t}^2
&= \EE \int_{(s,t]} \int_0^L \int_0^{2\pi} \int_0^N
   \left( | U_{r^-} + c_{r^-} - \tilde{U}_{r^-} |^2
          - | U_{r^-} - \tilde{U}_{r^-} |^2 \right) \\
&\qquad {} \times \N(dr,dz,d\phi,d\xi,[i-1,i)) \ind_{[0,k)}(\xi), \\
J_{s,t}^3
&= \EE \int_{(s,t]} \int_0^L \int_0^{2\pi} \int_0^N
   \left( | U_{r^-} - \tilde{U}_{r^-} -\tilde{c}_{r^-} |^2
          - | U_{r^-} - \tilde{U}_{r^-} |^2 \right) \\
&\qquad {} \times \M(dr,dz,d\phi,d\xi,[i-1,i)) \ind_{[0,k)}(\xi).
\end{align*}
Recall that $\N$ and $\M$ have intensity  $dt dz d\phi d\xi d\zeta \ind_{\G}(\xi,\zeta)/[4(N-1)\pi]$. Note that $\int_{i-1}^i \ind_{\G}(\xi,\zeta) d\zeta = \ind_{A^i}(\xi)$, where $A^i = [0,N)\setminus [i-1,i)$. Using the compensation formula, the Poisson point measures in the integrals can be replaced by their  intensities, and we obtain for $J_{s,t}^1$:
\begin{align}
J_{s,t}^1
\notag &= \EE \int_s^t \int_0^L \int_0^{2\pi} \int_0^N \left( | U_r + c_r - \tilde{U}_r - \tilde{c}_r |^2
          - | U_r - \tilde{U}_r |^2 \right)  [ \ind_{A^i}(\xi) + \ind_{[k,N)}(\xi) ]
           \frac{dr dz d\phi d\xi}{4(N-1)\pi}  \\
\notag
&\leq - \Phi_0^L \EE \int_s^t \int_0^N |U_r - \tilde{U}_r|^2
         [ \ind_{A^i}(\xi) + \ind_{[k,N)}(\xi) ] \frac{dr d\xi}{2(N-1)} \\
\label{eq:J1}
&\leq -\frac{\Phi_0^L}{2} \int_s^t h_r dr,
\end{align}
where we have used Corollary \ref{cor:int-dphi-dz} with $v=U_r$, $u=\tilde{U}_r$, $v_* = u_* = \Pi_r$, $K=L$ and the change of variable $(\phi,\varphi_r^i)\mapsto (\varphi_r^i,\tilde{\varphi}_r^i)$. For $J_{s,t}^2$ we get:
\begin{align}
\notag
J_{s,t}^2
&= \EE \int_s^t \int_0^L \int_0^{2\pi} \int_0^N \left( | U_r + c_r - \tilde{U}_r|^2
          - | U_r - \tilde{U}_r |^2 \right) \ind_{A^i \cap [0,k)}(\xi) \frac{dr dz d\phi d\xi}{4(N-1)\pi}  \\
\notag
&\leq C \EE \int_s^t \int_0^N (|U_r| + |\Pi_r| + |\tilde{U}_r|)^2
           \ind_{A^i \cap [0,k)}(\xi) \frac{dr d\xi}{2(N-1)} \\
&\leq C (t-s)\frac{k-1}{N-1} \label{eq:J2},
\end{align}
where in the second step we have used Corollary \ref{cor:int-dphi-dz} again, with $v=U_r$, $u=\tilde{U}_r$, $v_* = u_* = \Pi_r$ and the roles of $K$ and $L$ exchanged (the smallest one being equal to 0).
In the last step we have used the fact that $f_r^L$ has uniformly bounded moments of order $2$,
that $\law(U_r)=\law(\tilde{U}_r) = f_r^L$,
and that $\EE \int_{j-1}^j |\Pi_r|^2 d\xi = \int_{\RR^3} |u|^2 f_r^L(du)$ for all $j\neq i$,
thanks to Lemma \ref{lem:coup}.
Similarly for $J_{s,t}^3$: using Corollary \ref{cor:int-dphi-dz} with $v = \tilde{U}_r$, $u=U_r$,
$v_* = u_* = \Pi_r$, the bound \eqref{eq:J2} is also valid for $J_{s,t}^3$.

Thus, from \eqref{eq:J1+J2+J3}, \eqref{eq:J1} and \eqref{eq:J2}, we obtain
that $\partial_t h_t \leq -(\Phi_0^L/2) h_t + Ck/N$ for almost all $t\geq 0$,
and, since $h_0 = 0$, the conclusion follows from Gronwall's lemma
(the dependence on $L$ can be dropped since $\Phi_0^L$ is bounded away
from $0$ when $L\geq 1$ thanks to the lower bound in \eqref{eq:boundG}).
\end{proof}

\section{Proof of Theorem \ref{thm:main}} \label{sec:proof_main}

For a probability measure $\mu$ on $\RR^d$, call $\varepsilon_n (\mu) := \EE \W_2^2(\mu, \bar{\mathbf{Z}})$,
where $\mathbf{Z} = (Z^1,\ldots,Z^n) \in (\RR^d)^n$ is a vector of $n$ independent and $\mu$-distributed
random variables on $\RR^d$. The best general estimate available for $\varepsilon_n(\mu)$
is the following, whose proof can be found in \cite{fournier-guillin2013}:
for any $p>4$, there exists a constant $C_p<\infty$ such that for every $\mu \in \P(\RR^d)$
\begin{equation} \label{eq:epsnmu}
\varepsilon_n(\mu) \leq \frac{C_p (\int |v|^p \mu(dv) )^{2/p}}{n^{1/2}}.
\end{equation}

The following lemma will allow us to work with $\W_2^2(\law^n(\mathbf{U}_t^L), (f_t^L)^{\otimes n})$ instead of $\EE\W_2^2(\bar{\mathbf{U}}_t^{i,L},f_t^L)$, but at the price of the extra term $\varepsilon_n(f_t^L)$:

\begin{lem} \label{lem:EW2overY}
Let $\mathbf{X} = (X^1,\ldots,X^m) \in (\RR^d)^m$ be an exchangeable random vector, and let $\mu\in\P(\RR^d)$. Then, for any $n\leq m$,
\begin{align*}
\frac{1}{2} \EE \W_2^2(\bar{\mathbf{X}}, \mu)
&\leq \frac{kn}{m} \left( \W_2^2(\law^n(\mathbf{X}),\mu^{\otimes n}) + \varepsilon_n(\mu) \right) \\
&\quad {} + \frac{\ell}{m} \left( \W_2^2(\law^\ell(\mathbf{X}),\mu^{\otimes \ell}) + \varepsilon_\ell(\mu) \right),
\end{align*}
where $k$ and $\ell$ are the unique non-negative integers satisfying $m = kn + \ell$ and $\ell\leq n-1$.
\end{lem}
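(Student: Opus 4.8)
The plan is the classical ``block decomposition plus convexity'' argument, which trades the empirical measure $\bar{\mathbf{X}}$ for the joint law of $n$ (respectively $\ell$) coordinates. I would first dispose of the trivial case: if $\mu\notin\P_2(\RR^d)$ or $\law^n(\mathbf{X})\notin\P_2((\RR^d)^n)$ the right-hand side is infinite, so assume finite moments throughout. Then partition $\{1,\dots,m\}$ into $k$ blocks $B_1,\dots,B_k$ of size $n$ and a final block $B_{k+1}$ of size $\ell$ (simply dropped if $\ell=0$), and let $\bar{\mathbf{X}}^{(b)}$ denote the empirical measure of $\{X^i:i\in B_b\}$. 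Since $\tfrac{kn}{m}+\tfrac{\ell}{m}=1$, one has the two convex combinations $\bar{\mathbf{X}}=\sum_{b=1}^k\tfrac nm\bar{\mathbf{X}}^{(b)}+\tfrac\ell m\bar{\mathbf{X}}^{(k+1)}$ and $\mu=\sum_{b=1}^k\tfrac nm\mu+\tfrac\ell m\mu$, so the joint convexity of $\W_2^2$ (glue together the optimal plans of the pieces) gives, for every realization,
\[
\W_2^2(\bar{\mathbf{X}},\mu)\le\frac nm\sum_{b=1}^k\W_2^2(\bar{\mathbf{X}}^{(b)},\mu)+\frac\ell m\W_2^2(\bar{\mathbf{X}}^{(k+1)},\mu).
\]
Taking expectations and using that $\mathbf{X}$ is exchangeable — the law of $\bar{\mathbf{X}}^{(b)}$ equals, for each $b\le k$, that of the empirical measure of $(X^1,\dots,X^n)$, and $\bar{\mathbf{X}}^{(k+1)}$ has the law of the empirical measure of $(X^1,\dots,X^\ell)$ — the $k$ identical terms recombine and I obtain
\[
\EE\W_2^2(\bar{\mathbf{X}},\mu)\le\frac{kn}{m}\,\EE\W_2^2\Big(\tfrac1n\textstyle\sum_{i=1}^n\delta_{X^i},\mu\Big)+\frac\ell m\,\EE\W_2^2\Big(\tfrac1\ell\textstyle\sum_{i=1}^\ell\delta_{X^i},\mu\Big).
\]

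The remaining step is to show, for a fixed $q\in\{n,\ell\}$, that $\EE\W_2^2\big(\tfrac1q\sum_{i=1}^q\delta_{X^i},\mu\big)\le2\W_2^2(\law^q(\mathbf{X}),\mu^{\otimes q})+2\varepsilon_q(\mu)$. To this end I would take a pair $(\mathbf{X},\mathbf{Z})$, with $\mathbf{Z}=(Z^1,\dots,Z^q)$, realizing an optimal coupling of $\law^q(\mathbf{X})$ and $\mu^{\otimes q}$ for $\W_2$ on $(\RR^d)^q$ endowed with the \emph{normalized} distance $|\cdot|_q$; then the $Z^i$ are i.i.d.\ with law $\mu$ and $\EE\tfrac1q\sum_{i=1}^q|X^i-Z^i|^2=\W_2^2(\law^q(\mathbf{X}),\mu^{\otimes q})$. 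By the triangle inequality for $\W_2$ and $(a+b)^2\le2a^2+2b^2$,
\[
\W_2^2\Big(\tfrac1q\textstyle\sum_i\delta_{X^i},\mu\Big)\le2\,\W_2^2\Big(\tfrac1q\textstyle\sum_i\delta_{X^i},\tfrac1q\textstyle\sum_i\delta_{Z^i}\Big)+2\,\W_2^2\Big(\tfrac1q\textstyle\sum_i\delta_{Z^i},\mu\Big),
\]
and the diagonal coupling $\tfrac1q\sum_i\delta_{(X^i,Z^i)}$ bounds the first term by $\tfrac1q\sum_{i=1}^q|X^i-Z^i|^2$, whose expectation is $\W_2^2(\law^q(\mathbf{X}),\mu^{\otimes q})$, while the expectation of the second term is $\varepsilon_q(\mu)$ by definition. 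Substituting $q=n$ and $q=\ell$ into the previous display and dividing by $2$ finishes the proof.

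I do not expect a genuine obstacle here; the computation is elementary. The one point to handle with care is the consistency of normalizations — the distance underlying both $\W_2(\law^q(\mathbf{X}),\mu^{\otimes q})$ and $\varepsilon_q(\mu)$ is the averaged one $|\mathbf{x}|_q^2=\tfrac1q\sum_{i=1}^q|x^i|^2$, which is precisely what makes the diagonal-coupling bound line up with $\W_2^2(\law^q(\mathbf{X}),\mu^{\otimes q})$ — together with checking that the $k$ block contributions, each weighted by $\tfrac nm$, recombine into $\tfrac{kn}{m}$ via exchangeability.
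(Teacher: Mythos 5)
Your argument is correct, and it is essentially the proof the paper is pointing to (the paper defers to Lemma~7 of the Kac-model paper \cite{cortez-fontbona2016}, where the same block decomposition, joint convexity of $\W_2^2$, and triangle-inequality/diagonal-coupling step appear). The two possible pitfalls --- keeping the normalized metric $|\cdot|_q$ consistent so that the diagonal coupling yields exactly $\W_2^2(\law^q(\mathbf{X}),\mu^{\otimes q})$, and recombining the $k$ identical block contributions via exchangeability into the factor $\tfrac{kn}{m}$ --- are both handled correctly, so nothing is missing.
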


\begin{proof}
See the proof of Lemma 7 of \cite{cortez-fontbona2016}.
\end{proof}

\begin{cor} \label{cor:EEW2overU}
Assume \eqref{eq:nu} and \eqref{eq:p0}, and take $L\in[1,\infty)$. Then, there exists a constant $C>0$
independent of $L$, such that
for any $i\in\{1,\ldots,N\}$ and for all $t\geq 0$,
\[
\EE \W_2^2(\bar{\mathbf{U}}_t^{i,L},f_t^L) \leq C N^{-1/3}.
\]
Moreover, the same bound is valid with $\bar{\mathbf{U}}_t^L$ in place of $\bar{\mathbf{U}}_t^{i,L}$.
\end{cor}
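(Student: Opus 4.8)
The plan is to combine the decoupling estimate of Lemma \ref{lem:decoup} with the abstract interpolation bound of Lemma \ref{lem:EW2overY}, and then optimize over the block size $n$. First I would recall that, by Lemma \ref{lemaSDE}, the collection $(V^1,U^{1,L}),\dots,(V^N,U^{N,L})$ is exchangeable, so in particular $\mathbf{U}^L=(U^{1,L},\dots,U^{N,L})$ is an exchangeable random vector on $(\RR^3)^N$, and each $U^{i,L}$ has law $f_t^L$. I would apply Lemma \ref{lem:EW2overY} with $\mathbf{X}=\mathbf{U}_t^L$, $m=N$, $\mu=f_t^L$, and a block size $n\leq N$ to be chosen, writing $N=kn+\ell$ with $0\leq\ell\leq n-1$. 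This gives
\[
\tfrac12\EE\W_2^2(\bar{\mathbf{U}}_t^L,f_t^L)
\leq \tfrac{kn}{N}\bigl(\W_2^2(\law^n(\mathbf{U}_t^L),(f_t^L)^{\otimes n})+\varepsilon_n(f_t^L)\bigr)
 + \tfrac{\ell}{N}\bigl(\W_2^2(\law^\ell(\mathbf{U}_t^L),(f_t^L)^{\otimes \ell})+\varepsilon_\ell(f_t^L)\bigr).
\]

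Next I would bound each ingredient uniformly in $t$ and $L$. For the Wasserstein terms, Lemma \ref{lem:decoup} gives $\W_2^2(\law^n(\mathbf{U}_t^L),(f_t^L)^{\otimes n})\leq Cn/N$ (and likewise with $\ell$), with $C$ independent of $L$ and $t$. For the empirical-measure terms, I would invoke \eqref{eq:epsnmu} with some fixed $p\in(4,p_0]$: $\varepsilon_n(f_t^L)\leq C_p(\int|v|^p f_t^L(dv))^{2/p} n^{-1/2}$. Here the key point is that $\sup_{t\geq 0,\,L\geq 1}\int|v|^p f_t^L(dv)<\infty$; this uniform moment bound should follow from the well-posedness part of Theorem \ref{thm:well_posed} applied to the cutoff equation together with a uniform-in-$L$ control of propagation of moments (the cutoff kernel $\beta_L\leq\beta$, so moment estimates for the cutoff nonlinear equation are at least as good as for the non-cutoff one, uniformly in $L$) — I would cite or quickly establish this, possibly via Lemma \ref{lem:ftL}. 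Using $kn\leq N$ and $\ell\leq n$, the displayed bound then simplifies to
\[
\EE\W_2^2(\bar{\mathbf{U}}_t^L,f_t^L)\leq C\Bigl(\tfrac{n}{N}+n^{-1/2}\Bigr),
\]
with $C$ independent of $t$, $L$, $n$ and $N$. Choosing $n=\lceil N^{2/3}\rceil$ balances the two terms and yields the bound $C N^{-1/3}$.

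Finally, for the statement with $\bar{\mathbf{U}}_t^{i,L}$ in place of $\bar{\mathbf{U}}_t^L$: since $\bar{\mathbf{U}}_t^{i,L}=\frac{1}{N-1}\sum_{j\neq i}\delta_{U_t^{j,L}}$ differs from $\bar{\mathbf{U}}_t^L$ only by the omission of one atom out of $N$, a standard comparison (coupling the two empirical measures by matching the common $N-1$ atoms and using that the remaining mass $1/N$ carries a term of order $\frac1N\EE|U_t^{i,L}|^2$, which is bounded uniformly) shows $\EE\W_2^2(\bar{\mathbf{U}}_t^{i,L},f_t^L)\leq 2\EE\W_2^2(\bar{\mathbf{U}}_t^L,f_t^L)+C/N\leq CN^{-1/3}$; alternatively one applies Lemma \ref{lem:EW2overY} directly to the exchangeable vector $(U_t^{j,L})_{j\neq i}$ of length $N-1$. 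The main obstacle I anticipate is not the interpolation argument itself but securing the uniform-in-$L$ (and in $t$) $p$-th moment bound for $f_t^L$ needed to make $\varepsilon_n(f_t^L)$ harmless; everything else is a direct assembly of Lemmas \ref{lem:decoup}, \ref{lem:EW2overY} and estimate \eqref{eq:epsnmu} with the choice $n\asymp N^{2/3}$.
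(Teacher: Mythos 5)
Your proposal matches the paper's proof: the paper applies Lemma \ref{lem:EW2overY} with $m=N-1$ and $\mathbf{X}=(U_t^{j,L})_{j\neq i}$ for the $\bar{\mathbf{U}}_t^{i,L}$ case (your ``alternatively'' route) and then with $m=N$, $\mathbf{X}=\mathbf{U}_t^L$ for the other, bounds the $n$-block terms via Lemma \ref{lem:decoup} and \eqref{eq:epsnmu} with $p=p_0$, bounds the remainder $\ell$-block crudely by the energy, and optimizes at $n\asymp N^{2/3}$ --- exactly your assembly. One small caveat on your justification of the uniform-in-$L$ $p$-th moment bound for $f_t^L$: the heuristic ``$\beta_L\le\beta$, so moment estimates are at least as good'' is not quite right, since a weaker kernel also weakens the dissipative term $A_p$ in the Povzner inequality (Lemma \ref{lem:Povzner}); the correct reason uniformity holds for $L\ge 1$ is that $A_p^L:=\int[1-\sin(\theta/2)^p-\cos(\theta/2)^p]\beta_L\ge A_p^1>0$ while $I^L:=\int\theta^2\beta_L\le I$, so the stationary bound $(I^L\tilde A_p/A_p^L)^{p/2}$ is bounded uniformly in $L\ge 1$. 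Also, Lemma \ref{lem:ftL} controls only $\W_2$ and would not by itself give uniform $p_0$-moments.
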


\begin{proof}
In the notation of Lemma \ref{lem:EW2overY}, set $m=N-1$, $\mathbf{X} = (U_t^{j,L})_{j\neq i}$ and $\mu = f_t^L$ and given  $n\leq m$, simply bound $\frac{kn}{m} \leq 1$, $\frac{\ell}{m} \leq \frac{n}{N}$,
$\W_2^2(\law^\ell(\mathbf{X}),\mu^{\otimes \ell}) \leq 4 E $ and $\varepsilon_\ell(\mu) \leq 4 E$, where $E= \int |v|^2 f_t^L(dv) = \int |v|^2 f_0(dv)$. Using that result we get
\begin{align*}
\frac{1}{2} \EE \W_2^2(\bar{\mathbf{U}}_t^{i,L},f_t^L)
&\leq \W_2^2(\law^n(\mathbf{U}_t^L), (f_t^L)^{\otimes n}) + \varepsilon_n(f_t^L)
     + 8E\frac{n}{N} \\
&\leq C \frac{n}{N} + C \frac{1}{n^{1/2}},
\end{align*}
where we have used Lemma \ref{lem:decoup} together with \eqref{eq:epsnmu}
with $p=p_0>4$ and the
uniformity of the $p_0$-moments of $f_t^L$.
Choosing $n = \lfloor N^{2/3} \rfloor$ yields
the desired result.
To obtain the same estimate with $\bar{\mathbf{U}}_t^L$ on the left hand side, use $m=N$ and $\mathbf{X}=\mathbf{U}_t^L$.
\end{proof}

We need to make sure that the cutoff $L$ can be removed
in a satisfactory manner; for instance, we can use Theorem 5.2
of \cite{fournier-meleard2002}. However, for the reader's convenience,
we state here a result specific for the cutoff we use, with a shorter proof
and better dependence on time:

\begin{lem}
\label{lem:ftL}
Assume \eqref{eq:nu} and \eqref{eq:p0}. Then there exists a constant $C$ such that for all $t\geq 0$ and
all $L>0$,
\[
\W_2^2(f_t^L,f_t) \leq CtL^{1-2/\nu} .
\]
\end{lem}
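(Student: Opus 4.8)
The plan is to couple the non-cutoff nonlinear process $W$ of \eqref{eq:nonlinear} with the cutoff process $W^L$ of \eqref{eq:nonlinear_cutoff}, both driven by the \emph{same} randomness, and then estimate $\EE|W_t - W_t^L|^2$, which controls $\W_2^2(f_t, f_t^L)$ since $\law(W_t) = f_t$ and $\law(W_t^L) = f_t^L$. Concretely, I would take a Poisson point measure $\M(dt,dz,d\phi,dv,du)$ on $[0,\infty)\times[0,\infty)\times[0,2\pi)\times\RR^3\times\RR^3$ whose intensity in the $(v,u)$ variables is, at each time $t$, an optimal coupling $\pi_t(dv,du)$ of $f_t$ and $f_t^L$ (chosen measurably in $t$, which is possible by standard measurable-selection arguments); its $(v,\cdot)$-marginal has intensity $dt\,dz\,d\phi\,f_t(dv)/2\pi$ and its $(\cdot,u)$-marginal has intensity $dt\,dz\,d\phi\,f_t^L(du)/2\pi$, so that the pair $(W,W^L)$ can be built simultaneously, with $W$ jumping along $c(W_{t^-},v,z,\phi)$ for every atom and $W^L$ jumping along $c_L(W_{t^-}^L,u,z,\phi) = c(W_{t^-}^L,u,z,\phi)\ind_{\{z\le L\}}$. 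For the angular variable $\phi$ I would, exactly as in Lemma~\ref{lem:optimal_circles}--\ref{lem:varphi}, feed a uniform $\phi$ into the circle-coupling map $\varphi$ to pair the post-collisional velocities of $W$ and $W^L$ optimally on their respective circles; this keeps the construction consistent with the one used throughout the paper.

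The core estimate then comes from applying the compensation formula to $g_t := \EE|W_t - W_t^L|^2$ and bounding the resulting integrand using Corollary~\ref{cor:int-dphi-dz} (or directly Lemma~\ref{lem:int-dphi}). Writing $v = W_r$, $v_* = $ the $f_r$-sample, $u = W_r^L$, $u_* = $ the $f_r^L$-sample, with $\theta = G(z)$ for $W$ and $\vartheta = G(z)\ind_{\{z\le L\}}$ for $W^L$, the integral over $z \le L$ contributes (after integrating the angle out via Lemma~\ref{lem:int-dphi} and then taking expectations, using $\EE(v_* - u_*)$-type terms against the optimal coupling $\pi_r$) a term of the form $-\Phi_0^L\, g_r$ plus terms controlled by $\Phi_0^L\,\W_2^2(f_r,f_r^L) \le \Phi_0^L\, g_r$ — i.e. the $z\le L$ part is dissipative or at worst of the same order as $g_r$; while the integral over $z > L$ — where $W$ still jumps but $W^L$ does not — contributes, by the second line of Corollary~\ref{cor:int-dphi-dz} with the roles of $K=\infty$ and $L$ suitably arranged, a term bounded by $C\,\Phi_L^\infty\,\EE(|W_r| + |v_*| + |W_r^L|)^2 \le C\int_L^\infty G^2(z)\,dz \le C(1+L)^{1-2/\nu}$, using \eqref{eq:boundG} and the uniform (in $t$ and in $L$) bound on second moments of $f_r$ and $f_r^L$ coming from Theorem~\ref{thm:well_posed}. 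Collecting terms gives a differential inequality $\partial_t g_t \le C g_t + C L^{1-2/\nu}$ — or, if one is more careful and the $z\le L$ contribution is genuinely dissipative, $\partial_t g_t \le -\Phi_0^L g_t + C L^{1-2/\nu}$. Since $g_0 = 0$, Gronwall yields $g_t \le C t L^{1-2/\nu}$ in the first case (which is exactly the claimed bound, linear in $t$), and something even better in the dissipative case; either way $\W_2^2(f_t,f_t^L) \le \EE|W_t - W_t^L|^2 \le C t L^{1-2/\nu}$.

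The main obstacle I anticipate is making the simultaneous construction of $(W,W^L)$ rigorous: one needs a measurable-in-$t$ selection of optimal couplings $\pi_t$ between $f_t$ and $f_t^L$, and one needs to check that the marginal processes do have the correct laws $f_t$ and $f_t^L$ (this is where one invokes that $(\law(W_t))_t$ and $(\law(W_t^L))_t$ each solve a linearized version of \eqref{eq:weak} with a unique solution, as recalled after \eqref{eq:nonlinear_cutoff}). A secondary technical point is handling the cross-term $[(v-u)+(v_*-u_*)]\cdot[\tfrac{1-\cos\theta}{2}(v-v_*) - \tfrac{1-\cos\vartheta}{2}(u-u_*)]$ from Lemma~\ref{lem:int-dphi}: after integrating in $z$ over $[0,L]$ and taking expectation, the $(v_*-u_*)$ part pairs against $\pi_r$ and must be absorbed into $\W_2^2(f_r,f_r^L)$ via Cauchy--Schwarz together with the moment bounds, rather than producing an uncontrolled growth term; this is routine but needs care. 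Given the excerpt only asks for the linear-in-$t$ bound, I would not try to extract the dissipative gain and would present the simpler Gronwall argument, mirroring the structure of the proof of Lemma~\ref{lem:decoup}.
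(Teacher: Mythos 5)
Your approach is the same as the paper's: couple the non-cutoff nonlinear process $W$ with a cutoff process $W^L$ built from the \emph{same} Poisson noise with optimally paired samples and angles, set $g_t := \EE|W_t-W_t^L|^2 \geq \W_2^2(f_t,f_t^L)$, and estimate $\partial_t g_t$ via the compensation formula and Corollary~\ref{cor:int-dphi-dz} with $K=\infty$. A small difference of execution: the paper observes that $f_t$ has a density for $t>0$ (unless $f_0$ is Dirac, which is a trivial separate case), so an optimal transport \emph{map} $T_t^L$ from $f_t$ to $f_t^L$ exists, measurably in $(t,v)$ by a cited result; it then simply defines $W^L$ from the already-given Poisson measure $\M$ driving $W$ by replacing the $v$-coordinate of each atom with $T_t^L(v)$. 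This avoids constructing a new product-space Poisson measure with a measurably-selected general coupling $\pi_t(dv,du)$, which is what your version would require — doable, but heavier.

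There is, however, a genuine error in your final step. The differential inequality $\partial_t g_t \leq C g_t + C L^{1-2/\nu}$ with $g_0=0$ does \emph{not} give $g_t \leq C t L^{1-2/\nu}$; Gronwall gives $g_t \leq L^{1-2/\nu}(e^{Ct}-1)$, which is exponential in $t$. To obtain the claimed linear bound you must use the exact cancellation, not merely absorb the adverse term into a $+Cg_t$: Corollary~\ref{cor:int-dphi-dz} produces, after taking expectations, a contribution $-\Phi_0^L g_t + \Phi_0^L\, \EE|v_*-u_*|^2$, and since $(v_*,u_*)$ is sampled from the \emph{optimal} coupling of $f_t$ and $f_t^L$ one has $\EE|v_*-u_*|^2 = \W_2^2(f_t,f_t^L) \leq g_t$, so these two terms cancel. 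This leaves $\partial_t g_t \leq C L^{1-2/\nu}$, which integrates directly (no Gronwall needed) to $g_t \leq C t L^{1-2/\nu}$. You mention this possibility as "if one is more careful," but it is not a refinement — it is the only route to the stated bound. A smaller point: the "secondary technical" cross-term you flag from Lemma~\ref{lem:int-dphi} is already fully processed inside Corollary~\ref{cor:int-dphi-dz}, which hands you the $\Phi_0^L[-|v-u|^2 + |v_*-u_*|^2]$ term ready-made; there is no separate Cauchy--Schwarz step to do once you invoke the corollary with the correct $(v,v_*,u,u_*)$ identifications.
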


\begin{proof}
If $f_0$ is a Dirac mass, then $f_t = f_t^L = f_0$ for all $t\geq 0$
and the result is trivial.
If $f_0$ is not a Dirac mass, 
we know that $f_t$ has a density for $t>0$ thanks to Theorem \ref{thm:well_posed};
therefore, there exists an optimal transport map $T_t^L:\RR^3\to\RR^3$
such that for any random vector $X \in \RR^3$ with law $f_t$, the pair
$(X,T_t^L(X))$ is an optimal coupling between $f_t$ and $f_t^L$.
Moreover, thanks to the measurability of the flows $t\mapsto (f_t,f_t^L)$ and to Theorem 1.1 in  \cite{fontbona-guerin-meleard2010}, 
 the maps $T_t^L$ can be chosen in such a a way that the mapping
$(t,v) \mapsto (v,T_t^L(v))$ is measurable.
Now, given a (weak) solution $(W,\M)$ to \eqref{eq:nonlinear},
define a process $W^L$ as the unique, jump-by-jump solution, starting from $W_0^L = W_0$, to
the stochastic equation
\[
dW_t^L = \int_0^\infty \int_0^{2\pi} \int_{\RR^3} c_L(W_{t^-}^L,T_t^L(v),z,\varphi_{t^-}) \M(dt,dz,d\phi,dv),
\]
with $\varphi_{t^-} = \varphi(W_{t^-} - v, W_{t^-}^L - T_t^L(v),\phi)$.
Arguing as in the proof of part \ref{lemaSDE_iii}  of Lemma \ref{lemaSDE},
one can verify that $W^L$ is a cutoff nonlinear process,
and in particular, 
 $W_t^L$ has law $f_t^L$ for each $t\geq 0$. 
For $h_t = \EE |W_t-W_t^L|^2$, we obtain from this and \eqref{eq:nonlinear}:
\begin{align*}
\partial_t h_t
&= \EE \int_0^\infty \int_0^{2\pi} \int_{\RR^3}
\left( |W_t + c(W_t,v,z,\phi) - W_t^L - c_L(W_t^L,T_t^L(v),z,\varphi_t)|^2 \right. \\
& \qquad\qquad\qquad\qquad \left. {} - |W_t - W_t^L|^2 \right) \frac{dtd\phi f_t(dv)}{2\pi} \\
& \leq \EE \int_{\RR^3} \left( -\Phi_0^L |W_t - W_t^L|^2 + \Phi_0^L|v - T_t^L(v)|^2 \right. \\
& \qquad\qquad\qquad\qquad \left. {} + C(|W_t|+|v|+|W_t^L|)^2 L^{1-2/\nu} \right) f_t(dv),
\end{align*}
where we have used Corollary \ref{cor:int-dphi-dz} with $K=\infty$.  By construction, we have $\int_{\RR^3} |v-T_t^L(v)|^2 f_t(dv)
= \W_2^2(f_t^L,f_t) \leq h_t$, and using the preservation of the second moment
for $f_t^L$ and $f_t$, the last inequality yields $\partial_t h_t \leq C L^{1-2/\nu}$.
Since $h_0=0$, the result follows.
\end{proof}

We are now ready to prove Theorem \ref{thm:main}.

\begin{proof}[Proof of Theorem \ref{thm:main}]
Take $L \in [1,\infty)$.
For some $i\in\{1,\ldots,N\}$ fixed, we will estimate the quantity
$h_t := \EE|V_t^i-U_t^{i,L}|^2$.
To shorten notation, call $V_r := V_r^i$, $V_r^\i := V_r^{\i(\xi)}$,
$U_r := U_r^{i,L}$ and $\Pi_r := \Pi^{i,L}_r(\mathbf{U}_r^L,\xi)$.
From \eqref{eq:Vi} and \eqref{eq:Ui} it follows that for every $0\leq s <t$,
\begin{align*}
h_t - h_s
&= \EE \int_{(s,t]} \int_0^\infty \int_0^{2\pi} \int_0^N
   \left( \vphantom{\left|V_{r^-} \right|^2}
   \left|V_{r^-} + c(V_{r^-}, V_{r^-}^{\i},z,\phi) - U_{r^-} \right. \right. \\
& \quad \left. \left. \vphantom{V_{r^-}^{\i}}
   {}  - c_L(U_{r^-}, \Pi_{r^-}, z, \varphi_{r^-}^i)  \right|^2 
   - \left|V_{r^-} - U_{r^-} \right|^2 \right) \N^i(dr,dz,d\phi,d\xi).
\end{align*}
Using the compensation formula, $\N^i$ can be replaced by its intensity $dr dz d\phi d\xi \ind_{A^i}(\xi) / [2(N-1)\pi]$, where $A^i = [0,N)\setminus [i-1,i)$.  Corollary \ref{cor:int-dphi-dz} with $v=V_r$, $v_*=V_r^\i$,
$u=U_r$, $u_* = \Pi_r $ and $K=\infty$ yields 
\begin{align*}
h_t - h_s
&\leq \EE \int_s^t  \int_{A^i} \left[-\Phi_0^L |V_r - U_r|^2 + \Phi_0^L |V_r^\i - \Pi_r|^2 \right. \\
& \qquad \qquad \left. 
           + C(|V_r| + |V_r^{\i}| + |U_r|)^2 L^{1-2/\nu}\right] \frac{dr d\xi}{N-1}.
\end{align*}
Note that
$|V_r^\i - \Pi_r|^2 \leq |V_r^\i - U_r^\i|^2 + 2|V_r^\i - U_r^\i||U_r^\i - \Pi_r| + |U_r^\i - \Pi_r^i|^2$,
where $U_r^\i := U_r^{\i(\xi),L}$, and that $\EE \int_{A^i} |V_r^\i - U_r^\i|^2 d\xi /(N-1) = h_r$
by exchangeability. Also, thanks to Lemma \ref{lem:coup},
we know that $ \int_{A^i} |U_r^\i - \Pi_r|^2 d\xi/(N-1) = \W_2^2(\bar{\mathbf{U}}_r^{i,L},f_r^L)$.
Calling $g_t := \EE\W_2^2(\bar{\mathbf{U}}_t^{i,L},f_t^L)$, using the Cauchy-Schwarz inequality and the fact
that both the particles and the cutoff nonlinear processes have uniformly bounded second moment, we obtain
$\partial_t h_t \leq C[h_t^{1/2} g_t^{1/2} + g_t + L^{1-2/\nu}]$
for almost every $t\geq 0$.
Using a version of Gronwall's lemma (see for instance Lemma 4.1.8 of \cite{ambrosio-gigli-savare2008})
we deduce that $h_t \leq C [ h_0 + t(1+t)g_t + t L^{1-2/\nu}]$,
and consequently,
\[
h_t \leq C [h_0 + t(1+t)N^{-1/3} + t L^{1-2/\nu}],
\]
where we have used
Corollary \ref{cor:EEW2overU} to bound $g_t \leq CN^{-1/3}$ uniformly on $t$ and $L$. From this, we obtain for all $L\geq 1$
\begin{align*}
\EE \W_2^2(\bar{\mathbf{V}}_t,f_t)
&\leq C [\EE \W_2^2(\bar{\mathbf{V}}_t,\bar{\mathbf{U}}_t^L)
  + \EE\W_2^2(\bar{\mathbf{U}}_t^L,f_t^L) 
  + \W_2^2(f_t^L,f_t) ] \\
&\leq C[h_t +  N^{-1/3} +  tL^{1-2/\nu}]\\
&\leq C[h_0 + (1+t)^2 N^{-1/3} + tL^{1-2/\nu}],
\end{align*}
where we have used Corollary \ref{cor:EEW2overU} again, together with Lemma \ref{lem:ftL}. Letting $L\to\infty$, the result follows.
\end{proof}

\section{Uniform propagation of chaos} \label{sec:unif_chaos}

In this section we give the proof of Theorem \ref{thm:uniform_Maxwell}.

The following is a version of Povzner's lemma \cite{povzner1962}, 
see for instance \cite{elmroth1983,wennberg1997,gamba-panferov-villani2004,mischler-wennberg1999} and the references therein for other versions.
Plainly, it will be crucial to establish the
propagation of moments for the  particle system, needed to take full advantage of the stability result in \cite{rousset2014}. A proof can be found in  \cite{mischler-wennberg1999} in a slightly different setting; for the readers convenience,  we provide a proof  of the precise statement below  in the Appendix in the case of  $p$ even (which is enough for purposes).

\begin{lem}[a version of Povzner's lemma]
\label{lem:Povzner}
Assume that $I := \int_0^{\pi/2} \theta^2 \beta(\theta)d\theta < \infty$.
Then, for any $p> 2$ and any $v,v_* \in\RR^3$, we have
\begin{align*}
&\int_{\SS^2} ( |v'|^p + |v_*'|^p - |v|^p - |v_*|^p) B(\theta) d\sigma \\
&\leq -A_p (|v|^p + |v_*|^p) + I \tilde{A}_p ( |v|^{p-2} |v_*|^2 + |v_*|^{p-2}|v|^2),
\end{align*}
where $\tilde{A}_p>0$ is some constant that depends only on $p$, and
\[
A_p := \int_0^{\pi/2} [ 1 - \sin(\theta/2)^p  - \cos(\theta/2)^p ] \beta(\theta)d\theta >0.
\]
\end{lem}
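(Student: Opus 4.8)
The plan is to reduce the estimate to a pointwise algebraic inequality through an explicit parametrization of the post-collisional velocities. Recalling from \eqref{eq:pre_col} that $|v'|^2+|v_*'|^2=|v|^2+|v_*|^2=:E$, and writing $\sigma$ in spherical coordinates around $\omega=(v-v_*)/|v-v_*|$ with polar angle $\theta$ (so $\cos\theta=\sigma\cdot\omega$) and azimuthal angle $\phi\in[0,2\pi)$, a direct computation using the identity $|v-v_*|^2|v+v_*|^2-(|v|^2-|v_*|^2)^2=4|v\times v_*|^2$ gives
\[
|v'|^2=E\,s(\theta,\phi),\qquad s(\theta,\phi)=s_0(\theta)+\frac{|v\times v_*|}{E}\sin\theta\cos(\phi-\phi_0),
\]
where $s_0(\theta)=(|v|^2\cos^2(\theta/2)+|v_*|^2\sin^2(\theta/2))/E\in[0,1]$ and $\phi_0$ depends on $v,v_*$ only. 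Hence $|v'|^p+|v_*'|^p=E^{p/2}g(s(\theta,\phi))$ with $g(x):=x^{p/2}+(1-x)^{p/2}$, and, since (by the same unfolding as in \eqref{eq:int_B}) integration against $B(\theta)\,d\sigma$ amounts, up to a fixed constant, to $\int_0^{\pi/2}\beta(\theta)\,d\theta$ times the average over $\phi$, which I denote $\langle\cdot\rangle_\phi$, what must be bounded is
\[
\int_0^{\pi/2}\beta(\theta)\left(E^{p/2}\langle g(s(\theta,\cdot))\rangle_\phi-|v|^p-|v_*|^p\right)d\theta.
\]

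For the $\phi$-average I would use that, $p$ being even, $g$ is a polynomial of degree $p/2$; expanding $g(s_0+\rho\sin\theta\cos(\phi-\phi_0))$ in powers of $\cos(\phi-\phi_0)$ (with $\rho:=|v\times v_*|/E\le 1/2$), all odd powers average to zero while every even-order derivative $g^{(2m)}$ is nonnegative on $[0,1]$, so bounding the finitely many surviving terms by $(\rho\sin\theta)^{2m}\le(\rho\sin\theta)^2$ yields
\[
E^{p/2}\langle g(s(\theta,\cdot))\rangle_\phi\le E^{p/2}g(s_0(\theta))+C_p\,E^{p/2}\rho^2\sin^2\theta\le E^{p/2}g(s_0(\theta))+C_p\,E^{p/2-2}|v|^2|v_*|^2\sin^2\theta,
\]
using $E^2\rho^2=|v\times v_*|^2\le|v|^2|v_*|^2$. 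Since $E^{p/2-2}\le C_p(|v|^{p-4}+|v_*|^{p-4})$ and $\int_0^{\pi/2}\beta(\theta)\sin^2\theta\,d\theta\le\int_0^{\pi/2}\theta^2\beta(\theta)\,d\theta=I$, integrating this correction contributes exactly a term of the form $\tilde A_p\,I\,(|v|^{p-2}|v_*|^2+|v|^2|v_*|^{p-2})$, which is the gain term in the statement.

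It remains to control $\int_0^{\pi/2}\beta(\theta)(E^{p/2}g(s_0(\theta))-|v|^p-|v_*|^p)\,d\theta$. Writing $a=|v|^2$, $b=|v_*|^2$, $c=\cos^2(\theta/2)$, $s=\sin^2(\theta/2)$ (so $c+s=1$) and $q=p/2$, one has $E^{p/2}g(s_0)=(ac+bs)^q+(as+bc)^q$, and the binomial expansion gives the symmetric identity
\[
(ac+bs)^q+(as+bc)^q-(c^q+s^q)(a^q+b^q)=\sum_{j=1}^{q-1}\binom{q}{j}a^jb^{q-j}(c^js^{q-j}+s^jc^{q-j}),
\]
because the $j=0$ and $j=q$ terms reproduce precisely $(c^q+s^q)(a^q+b^q)$. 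For $1\le j\le q-1$ one has $c^js^{q-j}+s^jc^{q-j}\le 2cs$ and $a^jb^{q-j}\le a^{q-1}b+ab^{q-1}$, so the right-hand side is at most $2^{q+1}cs(a^{q-1}b+ab^{q-1})$; translating back, using $c^q+s^q=\cos(\theta/2)^p+\sin(\theta/2)^p$, gives the pointwise bound
\[
E^{p/2}g(s_0)-|v|^p-|v_*|^p\le-(1-\cos(\theta/2)^p-\sin(\theta/2)^p)(|v|^p+|v_*|^p)+2^{q+1}\cos^2(\theta/2)\sin^2(\theta/2)(|v|^{p-2}|v_*|^2+|v|^2|v_*|^{p-2}).
\]
Integrating against $\beta(\theta)\,d\theta$ produces $-A_p(|v|^p+|v_*|^p)$ from the first term and, since $\cos^2(\theta/2)\sin^2(\theta/2)=\tfrac14\sin^2\theta\le\tfrac14\theta^2$, a further contribution of order $I$ times the gain term from the second; combined with the previous paragraph this yields the claim (positivity of $A_p$ being clear, as $c^q+s^q<c+s=1$ for $c,s\in(0,1)$ and $q>1$).

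The main obstacle, and precisely the reason the statement is established here only for $p$ even, is this last pointwise inequality: for even $p$ it drops out transparently from the binomial identity, whereas for general $p>2$ one would need a quantitative (Jensen-defect) refinement of the convexity bound, together with extra care near $\theta=0$, where $g''$ can be unbounded and the preliminary $C^2$-control of $g$ used in the $\phi$-average step (harmless as soon as $p\ge 4$) becomes delicate.
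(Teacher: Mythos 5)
Your proof is correct and follows essentially the same strategy as the paper's: both exploit that odd $\phi$-Fourier modes average out, both hinge on the cross-product identity $|v-v_*|^2|v+v_*|^2-(|v|^2-|v_*|^2)^2=4|v\times v_*|^2$ (the paper's ``crucial cancellation'' of the $|v|^4$, $|v_*|^4$ terms, appearing there as $(w\cdot\ii)^2+(w\cdot\jj)^2=4|v|^2|v_*|^2-4(v\cdot v_*)^2$), and both use a binomial expansion in $\theta$ whose extreme terms produce $\sin(\theta/2)^p+\cos(\theta/2)^p$ with the cross terms bounded by $\theta^2$ times the gain term. Your packaging $|v'|^p+|v_*'|^p=E^{p/2}g(s)$ with $g(x)=x^{p/2}+(1-x)^{p/2}$ and $s=|v'|^2/E$ treats both post-collisional velocities simultaneously via energy conservation and replaces the paper's explicit $\int\cos^n\phi\sin^m\phi$ calculus by a Taylor expansion plus nonnegativity of the even derivatives of $g$; this is tidier bookkeeping but not a genuinely different argument.
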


The propagation of moments for the particle system was already established in \cite[Lemma 5.3]{mischler-mouhot2013}, where it is assumed that the initial energy is a.s.\ bounded. In the Maxwellian setting, following the proof of that lemma and performing a careful inspection of how the constants depend on the energy, we are able to write a moments estimate where this dependence is explicit. More importantly,  using conditional expectations we  are able to deduce propagation of moments for a general initial condition, without any  restriction (besides  of course finite second moment). This is stated in the next result:

\begin{cor}[propagation of moments for the particle system] \label{cor:unif_moments}
Assume \eqref{eq:nu},  let $p\geq 2$ and define the random variable $\mathcal{E}:=\frac{1}{N}\sum_i^N |V_0^i|^2 $.
 Then, there exists a constant $C_p>0$  (nonrandom and  not depending on $\mathbf{V}_0$ nor on its law)  such that for all $t\geq 0$, 
\[
\EE \left( |V_t^1|^p\,  \middle|  \, \mathcal{E} \right) \leq \EE\left(  |V_0^1|^p\,  \middle|  \, \mathcal{E} \right)   + C_p \mathcal{E}^{p/2}, \qquad a.s.
\]
As a consequence, we have
\[
\EE |V_t^1|^p \leq (C_p+1) \, \EE |V_0^1|^p, \qquad \forall t\geq 0.
\]
\end{cor}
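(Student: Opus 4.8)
The plan is to prove the conditional moment bound first, and then deduce the unconditional one by taking expectations. The starting point is Lemma~\ref{lem:Povzner}: applied to the pair of particles $(v,v_*) = (V^i_{t^-}, V^j_{t^-})$ that interact at a jump, it controls the infinitesimal change of $\sum_i |V^i_t|^p$ under the generator $\mathcal{A}^N$. Writing $M_t^p := \frac{1}{N}\sum_{i=1}^N |V_t^i|^p$, one applies Dynkin's formula (after the usual localization to handle the non-integrable angular singularity, which is legitimate because $|v'|^p + |v_*'|^p - |v|^p - |v_*|^p = O(\theta^2)$ for small $\theta$ and $\int_0^{\pi/2}\theta^2\beta(\theta)\,d\theta<\infty$ holds for $\nu\in(0,1)$). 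Summing the Povzner estimate over $i\neq j$ and using $\frac{1}{2(N-1)}\sum_{i\neq j}(\cdots)$ from \eqref{eq:gen}, the good term $-A_p(|v|^p+|v_*|^p)$ produces $-A_p M_t^p$, while the bad term $I\tilde A_p(|v|^{p-2}|v_*|^2 + |v_*|^{p-2}|v|^2)$ produces, after symmetrizing, something like $2 I \tilde A_p \big(\frac{1}{N}\sum_i |V_t^i|^{p-2}\big)\big(\frac{1}{N}\sum_j |V_t^j|^2\big) = 2I\tilde A_p\, M_t^{p-2}\, \mathcal{E}$, since the energy $\frac{1}{N}\sum_j|V_t^j|^2 = \mathcal{E}$ is conserved pathwise by Lemma~\ref{lemaSDE}\ref{lemaSDE_i}.

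Thus, taking conditional expectations given $\mathcal{E}$ and setting $m_t := \EE(M_t^p \mid \mathcal{E})$, one obtains a differential inequality of the form
\[
\partial_t m_t \leq - A_p\, m_t + C\, \mathcal{E}\, \EE(M_t^{p-2}\mid\mathcal{E}),
\]
almost surely. The classical trick now is an induction on $p$ (say on $p = 4, 6, 8, \dots$, even integers being enough for the intended applications, as remarked in the paper) combined with the interpolation/Young inequality $\mathcal{E}\, M_t^{p-2} \leq \varepsilon M_t^p + C_\varepsilon \mathcal{E}^{p/2}$, valid because $M_t^{p-2} \leq (M_t^p)^{(p-2)/p}$ by Jensen (power mean monotonicity) so that $\mathcal{E} M_t^{p-2} \leq \mathcal{E}(M_t^p)^{(p-2)/p}$ and Young's inequality with exponents $p/2$ and $p/(p-2)$ closes it. Choosing $\varepsilon < A_p$ absorbs the bad term into $-A_p m_t$, leaving
\[
\partial_t m_t \leq -\tfrac{A_p}{2}\, m_t + C_p\, \mathcal{E}^{p/2}\quad\text{a.s.},
\]
and Gronwall gives $m_t \leq m_0 + \frac{2C_p}{A_p}\mathcal{E}^{p/2}$; renaming the constant and recalling $m_0 = \EE(|V_0^1|^p\mid\mathcal{E})$ by exchangeability yields the stated conditional bound. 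The unconditional statement then follows by taking expectations and using $\EE\,\mathcal{E}^{p/2} \leq \EE\big(\frac1N\sum_i|V_0^i|^p\big) = \EE|V_0^1|^p$ (again Jensen on the normalized empirical average plus exchangeability), so $\EE|V_t^1|^p \leq \EE|V_0^1|^p + C_p\,\EE|V_0^1|^p = (C_p+1)\EE|V_0^1|^p$.

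The main obstacle I expect is the rigorous justification of the Dynkin/martingale step: the generator \eqref{eq:gen} involves an integral with non-integrable singularity at $\theta=0$, so one cannot naively test it against $\Phi(\mathbf v) = \frac1N\sum|v^i|^p$, which is unbounded anyway. This requires (i) a truncation of the angular kernel at $\theta \geq \delta$, applying the argument to the bounded generator and then letting $\delta\to 0$ using the $O(\theta^2)$ smallness of the increment together with $\int_0^{\pi/2}\theta^2\beta<\infty$ (the integrability hypothesis of Lemma~\ref{lem:Povzner}, which is exactly what makes $I<\infty$), and (ii) a truncation in the state variable (stopping times $\tau_R = \inf\{t: M_t^p > R\}$) to get genuine martingales, followed by Fatou/monotone convergence to remove it — a standard but slightly delicate bookkeeping. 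A secondary technical point is making sure all constants $C_p$ are genuinely independent of $N$ and of $\law(\mathbf V_0)$; this is automatic because every manipulation above is done pathwise or conditionally on $\mathcal E$, and the constants $A_p$, $\tilde A_p$, $I$ depend only on $\beta$ and $p$. The pathwise conservation of energy in Lemma~\ref{lemaSDE}\ref{lemaSDE_i} is what makes the whole scheme work, since it lets one treat $\mathcal{E}$ as a constant throughout the Gronwall argument.
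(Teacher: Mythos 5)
Your proposal is correct and follows essentially the same path as the paper's proof: apply Povzner's lemma to $\Phi(\mathbf v)=\frac1N\sum_i|v^i|^p$, use the a.s.\ conservation of energy to treat $\mathcal E$ as a constant in the conditional expectation, arrive at a differential inequality for $g_t=\EE(|V_t^1|^p\mid\mathcal E)$ whose drift is $-A_p g_t + C\,\mathcal E\,g_t^{1-2/p}$, and then deduce the unconditional bound by Jensen on the empirical mean. The one visible divergence is how the differential inequality is closed: you apply Young's inequality pathwise to absorb the bad term and then Gronwall, whereas the paper observes directly that any solution of $\partial_t g \le -A_p g + 2\mathcal E I\tilde A_p g^{1-2/p}$ stays below $\max(g_0,x^*)$ with $x^*=(2\mathcal E I\tilde A_p/A_p)^{p/2}$ the unique positive root; both give $g_t\le g_0 + C_p\mathcal E^{p/2}$. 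Your remark about ``induction on $p$'' is a red herring --- no induction is needed, since the interpolation $M_t^{p-2}\le (M_t^p)^{(p-2)/p}$ closes the inequality in $m_t$ directly for each fixed $p$, exactly as you then carry out. Your discussion of the localization/Dynkin technicalities is more explicit than the paper's (which simply asserts the martingale property), but consistent with it, and the $O(\theta^2)$ integrability you cite is indeed what makes $I<\infty$ and the whole scheme legitimate.
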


\begin{proof} For each $t\geq 0$, write $g_t = \EE \left( |V_t^1|^p\,  \middle|  \, \mathcal{E} \right) $ which is a.s.\ equal to   $ \EE \left(\frac{1}{N}\sum_i^N    |V_t^i|^p\,  \big|  \, \mathcal{E} \right) $ by exchangeability of the system $\mathbf{V}$. 
Since the process 
\begin{align*}
&\frac{1}{N}\sum_i^N  |V^i_t|^p - \frac{1}{N}\sum_i^N  |V^i_0|^p \\
&\qquad\qquad -  \frac{1}{N}\sum_i^N  \int_0^t \int_0^\infty \int_0^{2\pi} \int_{A^i} 
  [|V_s^i + c(V_s^i,V_s^{\i(\xi)},z,\phi) |^p - |V_s^i|^p]  \frac{d\xi d\phi dz ds}{2(N-1)\pi}
  \end{align*}
is a martingale in the filtration $({\cal F}_t)_{t\geq 0}$ defined in Section  \ref{subsect:The particle system} and  $\mathcal{E}$ is ${\cal F}_0$ measurable, taking into account  equation \eqref{eq:int_B} we get  for each $t\geq 0$ 
\begin{align*}
 g_t - g_0
&= \EE \left( \frac{1}{N}\sum_i^N  \int_0^t \int_0^\infty \int_0^{2\pi} \int_{A^i}
  [|V_t^i + c(V_t^i,V_t^{\i(\xi)},z,\phi) |^p - |V_t^i|^p]  \frac{d\xi d\phi dz dt}{2(N-1)\pi} \, \middle|  \, \mathcal{E} \right) \\
&= 
  \int_0^t \EE \left(  \frac{1}{N^2}\sum_i^N  \sum_{j\neq i}    \int_0^{2\pi} \int_0^{\pi/2} [|V'^i_t|^p + |V'^j_t|^p - |V_t^i|^p  - |V_t^j|^p]
  \beta(\theta) \frac{d\theta d\phi}{4\pi} \, \middle|  \, \mathcal{E} \right)  dt
  \end{align*}
almost surely.   The latter implies that $t\mapsto g_t$ has an a.s.\ absolutely continuous version which we work with from now on. Taking the difference
$g_t - g_s$, dividing by $t-s$ and letting $s\to t$,
by  Lemma \ref{lem:Povzner} we a.s.\ have
for some positive constants $I$, $A_p$ and $\tilde{A}_p$ and almost every $t\geq 0$ :
\begin{align*}
\partial_t g_t
& \leq - A_p g_t + I \tilde{A}_p  \EE \left(   \frac{1}{N}\sum_i^N  |V_t^i|^{p-2} \frac{1}{N-1}\sum_{j\neq i} |V_t^j|^2  \, \middle|  \, \mathcal{E} \,  \right)  \\
& \leq - A_p g_t +  2 \mathcal{E} I \tilde{A}_p g_t^{1-2/p},
\end{align*}
where we have used that $\sum_{k\neq j} |V_t^k|^2 \leq \sum_{k} |V_t^k|^2 = N \mathcal{E}$ since the system almost surely preserves energy, together with exchangeability and  the conditional H\"older inequality. This differential inequality
 implies that $g_t \leq \max(g_0, x^*)$ for all $t\geq 0$,
where $x^* = (2 \mathcal{E} I \tilde{A}_p / A_p)^{p/2} $ is the unique positive root 
of the polynomial $- A_p x +  2 \mathcal{E} I \tilde{A}_p x^{1-2/p}$. This implies
\[
g_t
\leq g_0 + C_p \mathcal{E}^{p/2},
\]
for some constant depending only on $A_p$, $\tilde{A}_p$  and $I$, which proves the first statement.  For the second one,  we use conditional expectation to get
\begin{align*}
\EE |V_t^1|^p
& \leq  \EE \left[ \EE\left(  |V_0^1|^p\, \middle|  \, \mathcal{E} \right)   + C_p \mathcal{E}^{p/2}  \right] \\
&= \EE |V_0^1|^p + C_p \EE \left[ \left( \frac{1}{N} \sum_{k=1}^N |V_0^k|^2 \right)^{p/2} \right],
\end{align*}
and then Jensen's inequality (applied in the empirical mean) gives $\EE |V_t^1|^p \leq \EE |V_0^1|^p + C_p \EE \frac{1}{N} \sum_{k=1}^N |V_0^k|^p$. The proof is complete.
\end{proof}

The fact that the particles have bounded moments allows us to obtain
a convergence to equilibrium result for the particle
system that extends Proposition 1.7 of \cite{rousset2014}. We will state it in terms of the following distance:
define $\mathscr{W}_2$ as the usual $2$-Wasserstein distance on $\P(\P(\RR^3))$ induced by $\W_2$, that is
\[
\mathscr{W}_2(\alpha,\beta)
= \inf_{\mathbf{a},\mathbf{b}} \left( \EE \W_2^2(\mathbf{a},\mathbf{b}) \right)^{1/2},
\]
where the infimum is taken over all random elements $\mathbf{a}$ and $\mathbf{b}$ in $\P(\RR^3)$ such that $\law(\mathbf{a}) = \alpha$ and $\law(\mathbf{b}) = \beta$. For $F\in \P_2^{\textnormal{sym}}((\RR^3)^N)$, denote by  $\hat{F}$ the push-forward of $F$ by the ``empirical measure'' map, that is, $\hat{F} = \law(\bar{\mathbf{X}}) \in \P(\P(\RR^3))$ for $\mathbf{X} \sim F$.
It is also clear that for any $\mathbf{X} \sim F$ and $\mu\in\P(\RR^3)$, we have
\begin{equation}
\label{eq:EW2W2hatF}
\EE \W_2^2(\bar{\mathbf{X}},\mu)
= \mathscr{W}_2^2(\hat{F},\delta_\mu).
\end{equation}

Define also the \emph{Boltzmann sphere}
\[
\mathcal{S}^N = \left\{\mathbf{x} \in (\RR^3)^N : \frac{1}{N}\sum_{i=1}^N x^i = 0, \frac{1}{N}\sum_{i=1}^N |x^i|^2 = 1 \right\},
\]
and notice that if $G_0^N$ is concentrated on $\mathcal{S}^N$, then the preservation
of momentum and kinetic energy of the collisions imply that $G_t^N := \law(\mathbf{V}_t)$ is also concentrated
on $\mathcal{S}^N$ for all $t\geq 0$. Denote $\mathcal{U}^N$ the uniform distribution
on $\mathcal{S}^N$.
Theorem 1.6 of \cite{rousset2014} states that for any $\delta>0$ and $q>1$,
\begin{equation}
\label{eq:rousset_thm}
\partial_t^+ \mathscr{W}_2(\hat{G}_t^N,\hat{\mathcal{U}}^N)
\leq -c_{\delta,q,N}(G_t^N) \mathscr{W}_2(\hat{G}_t^N,\hat{\mathcal{U}}^N)^{1+1/\delta},
\end{equation}
where $c_{\delta,q,N}(G_t^N) = k_{\delta,q} \EE(|V_t^1|^{2q(1+\delta)})^{-1/2q\delta}$
and $k_{\delta,q}>0$ is some constant. We are now ready to state and prove:

\begin{lem}
\label{lem:rousset}
Assume \eqref{eq:nu} and that $G_0^N \in \P_2^{\textnormal{sym}}((\RR^3)^N)$ is concentrated on the
Boltzmann sphere $\mathcal{S}^N$.
Assume also that $R_p := \sup_N \EE |V_0^1|^p < \infty$ for some $p\geq 4$. Then,
for all $0<\delta <p-2$ there exists a constant $C_{p,\delta}>0$ depending only
on $p$, $\delta$ and $R_p$, such that for all $N\in\NN$
and all $t\geq 0$
\[
\mathscr{W}_2^2(\hat{G}_t^N,\hat{\mathcal{U}}^N)
\leq \left[\mathscr{W}_2^2(\hat{G}_0^N,\hat{\mathcal{U}}^N)^{-1/\delta} + C_{p,\delta} t \right]^{-\delta}.
\]
\end{lem}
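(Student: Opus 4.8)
The plan is to integrate Rousset's differential inequality \eqref{eq:rousset_thm} after turning it into a bound whose decay rate is uniform in $N$. The only idea needed is to tune the two free parameters in \eqref{eq:rousset_thm} so that the moment of $V_t^1$ appearing in the rate is of order $p$, which is exactly the order controlled --- uniformly in $t$ and $N$ --- by Corollary \ref{cor:unif_moments}.

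First I would fix $\delta\in(0,p-2)$ and apply \eqref{eq:rousset_thm} with its free parameter replaced by $\delta/2$ and with $q:=p/(2+\delta)$. Since $\delta<p-2$ we have $2+\delta<p$, hence $q>1$, so this is an admissible choice, and it is rigged so that $2q(1+\delta/2)=p$. Writing $\psi_t:=\mathscr{W}_2(\hat{G}_t^N,\hat{\mathcal{U}}^N)$, inequality \eqref{eq:rousset_thm} then reads
\[
\partial_t^+\psi_t \;\leq\; -\,k_{\delta/2,q}\,\big(\EE|V_t^1|^{p}\big)^{-1/(q\delta)}\,\psi_t^{\,1+2/\delta}.
\]
Next I would invoke Corollary \ref{cor:unif_moments} --- whose hypothesis is met since $\int_0^{\pi/2}\theta^2\beta(\theta)\,d\theta<\infty$ under \eqref{eq:nu} --- together with $R_p=\sup_N\EE|V_0^1|^p<\infty$, to obtain $\sup_N\sup_{t\geq0}\EE|V_t^1|^p\leq(C_p+1)R_p$. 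Inserting this bound makes the last display autonomous, $\partial_t^+\psi_t\leq-\kappa\,\psi_t^{1+2/\delta}$ for all $N$ and all $t\geq0$, with
\[
\kappa \;:=\; k_{\delta/2,\,p/(2+\delta)}\,\big((C_p+1)R_p\big)^{-(2+\delta)/(p\delta)} \;>\; 0
\]
depending only on $p$, $\delta$ and $R_p$ (through the constants of \eqref{eq:rousset_thm} and of Corollary \ref{cor:unif_moments}). Here the hypothesis that $G_0^N$, hence $G_t^N$, is concentrated on $\mathcal{S}^N$ is what makes $\hat{\mathcal{U}}^N$ the relevant reference point in \eqref{eq:rousset_thm}.

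Finally, integrating this scalar inequality exactly as in the proof of Proposition 1.7 of \cite{rousset2014} --- via the substitution $z_t:=\psi_t^{-2/\delta}$, for which $\partial_t^+ z_t\geq 2\kappa/\delta$ and hence $z_t\geq z_0+(2\kappa/\delta)t$ --- gives $\psi_t\leq\big(\psi_0^{-2/\delta}+(2\kappa/\delta)t\big)^{-\delta/2}$; squaring and using $\psi_0^{-2/\delta}=\mathscr{W}_2^2(\hat{G}_0^N,\hat{\mathcal{U}}^N)^{-1/\delta}$ gives the claim with $C_{p,\delta}:=2\kappa/\delta$. There is essentially no analytic content beyond \eqref{eq:rousset_thm} and Corollary \ref{cor:unif_moments}; the delicate (and genuinely load-bearing) point is the exponent bookkeeping. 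Using \eqref{eq:rousset_thm} with parameter $\delta/2$ rather than $\delta$ is exactly what lets the passage from $\mathscr{W}_2$ to $\mathscr{W}_2^2$ produce the stated power $-\delta$ while keeping the moment requirement at order $p$: one needs $q>1$ with $2q(1+\delta/2)=p$, which is possible precisely when $\delta<p-2$, whereas a direct application with parameter $\delta$ would demand a moment of order just above $2(1+\delta)$ and only permit $\delta<(p-2)/2$. Equally crucial is that the rate $\kappa$ is independent of $N$ --- this is the very point of ``extending'' Proposition 1.7 of \cite{rousset2014} --- which is precisely what the $N$-free constant in Corollary \ref{cor:unif_moments} supplies.
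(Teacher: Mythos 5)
Your proof is correct and follows essentially the same route as the paper: apply Rousset's differential inequality \eqref{eq:rousset_thm} with the free parameters tuned so that $2q(1+\cdot)=p$, invoke Corollary \ref{cor:unif_moments} for the uniform-in-$N$ moment bound, and integrate the resulting autonomous ODE inequality. Your choice of applying \eqref{eq:rousset_thm} with parameter $\delta/2$ is exactly what the paper achieves by ``redefining $\delta$ as $2\delta$'' at the end, so the two arguments are the same up to this bookkeeping.
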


\begin{proof}
Take $q>1$ and $\delta<p/2-1$ such that
$2q(1+\delta) = p$. Using Corollary \ref{cor:unif_moments}, we deduce that
$c_{\delta,q,N}(G_t^N) \geq k_{\delta,q} ( C_p \EE|V_0^1|^p )^{-1/2q\delta}
\geq k_{\delta,q} ( C_p R_p )^{-1/2q\delta} =: C_{p,\delta}$.
From \eqref{eq:rousset_thm}, the result follows using Gronwall's lemma,
squaring, and redefining $\delta$ as $2\delta$.
\end{proof}

We conclude with the proof of Theorem \ref{thm:uniform_Maxwell}.
We follow a  standarization argument found in \cite{fournier-guillin2015}, which allows one to reduce
the proof to the case where the initial distribution $G_0^N$ is concentrated on the Boltzmann sphere $\mathcal{S}^N$.

Given any $F \in \P^\textnormal{sym}((\RR^3)^N)$ and a random vector $\mathbf{X} \sim F$, set
$\st{F} = \law(\st{\mathbf{X}})$, where $\st{\mathbf{X}} = \mathbf{Y} = (Y^1,\ldots,Y^N)$ is defined as
\begin{equation}
\label{eq:Yi}
Y^i = \frac{X^i - M}{S}, \quad \text{ with }
\quad M = \frac{1}{N}\sum_j X^j,
\quad S^2 = \frac{1}{N} \sum_j |X^j-M|^2
\end{equation}
on the event $\{S>0\}$, and $Y^i = Z^i$ on $\{S = 0\}$, where $\mathbf{Z}$
is some (arbitrary but fixed) exchangeable random vector on $\mathcal{S}^N$, independent of $\mathbf{X}$
(although we will mainly use this standarization when $S>0$).
This, of course, ensures that $\st{F}$ is symmetric and concentrated on $\mathcal{S}^N$.
Calling $Q^2 = \frac{1}{N}\sum_i |X^i|^2$ we have $S^2 = Q^2 - |M|^2$,
and then
\begin{align}
\frac{1}{N}\sum_i |X^i-Y^i|^2
&= \ind_{\{S>0\}} \frac{1}{N} \sum_i \left| \frac{(S-1)X^i + M}{S} \right|^2 
  + \ind_{\{S=0\}} \frac{1}{N} \sum_i |M - Z^i|^2 \notag \\
&= \ind_{\{S>0\}} \frac{(S-1)^2 Q^2  + (2S-1)|M|^2}{S^2}
  + \ind_{\{S=0\}} [1 + |M|^2] \notag \\
&= \ind_{\{S>0\}} [(S-1)^2 + |M|^2]
   + \ind_{\{S=0\}} [1 + |M|^2] \notag \\
&= (S-1)^2 + |M|^2
\leq \W_2^2(\bar{\mathbf{X}}, \mu), \label{eq:sumXiYi}
\end{align}
where $\mu \in \P_2(\RR^3)$ is any distribution with $M_\mu := \int v \mu(dv) = 0$
and $S_\mu^2 := \int |v-M_\mu|^2 \mu(dv) = 1$
(in general, $(S_\mu - S_\nu)^2 + |M_\mu-M_\nu|^2 \leq \W_2^2(\mu,\nu)$,
since for $X\sim\mu$ and $Y\sim\nu$ one has $\EE|X-Y|^2 = \EE[|(X-M_\mu) - (Y- M_\nu)|^2 + |M_\mu-M_\nu|^2
\geq S_\mu^2 + S_\nu^2 - 2S_\mu S_\nu + |M_\mu-M_\nu|^2$).
Since $\W_2^2(F,\st{F}) \leq \EE \frac{1}{N} \sum_i |X^i-Y^i|^2$, this
gives for any such $\mu$:
\begin{equation}
\label{eq:W2FNstFN}
\W_2^2(F,\st{F})
\leq \EE \W_2^2(\bar{\mathbf{X}}, \mu)
= \mathscr{W}_2^2(\hat{F},\delta_\mu).
\end{equation}

\begin{rmk}
\label{rmk:gaussian}
If $\gamma$ denotes the Gaussian density with mean $0$ and variance $1$,
that is, $\gamma(v) = (2\pi\sigma^2)^{-3/2} e^{-|v|^2/(2\sigma^2)}$ for $\sigma^2 = 1/3$,
then the measure $\st{\gamma^{\otimes N}}$ corresponds to $\mathcal{U}^N$. Thus, from \eqref{eq:W2FNstFN} applied to $F=\gamma^{\otimes N}$ and $\mu = \gamma$, we obtain $\W_2^2(\gamma^{\otimes N}, \mathcal{U}^N) \leq \mathscr{W}_2^2(\widehat{\gamma^{\otimes N}}, \delta_\gamma)$. Since $\mathscr{W}_2^2(\widehat{\gamma^{\otimes N}}, \hat{\mathcal{U}}^N) \leq \W_2^2(\gamma^{\otimes N},\mathcal{U}^N)$ and using \eqref{eq:epsnmu} on $\mathscr{W}_2^2(\widehat{\gamma^{\otimes N}}, \delta_\gamma) = \varepsilon_N(\gamma)$, we deduce the following chaos rate for $\mathcal{U}^N$ (already established in \cite[Lemma 25-(i)]{fournier-guillin2015}; see \cite[Theorem 4.4]{hauray-mischler2014} for related $\W_1$ estimates):
\begin{equation}
\label{eq:W2Uhatdeltag}
\mathscr{W}_2^2(\hat{\mathcal{U}}^N , \delta_\gamma)
\leq 2 \mathscr{W}_2^2(\hat{\mathcal{U}}^N , \widehat{\gamma^{\otimes N}}) + 2 \mathscr{W}_2^2(\widehat{\gamma^{\otimes N}}, \delta_\gamma)
\leq 4 \mathscr{W}_2^2(\widehat{\gamma^{\otimes N}},\delta_\gamma)
\leq C N^{-1/2}.
\end{equation}
\end{rmk}

\begin{proof}[Proof of Theorem \ref{thm:uniform_Maxwell}]
Let us first prove the result in the case where $G_0^N$ is concentrated
on the Boltzmann sphere $\mathcal{S}^N$.
Noting that in this case $\mathscr{W}_2^2(\hat{G}_0^N,\hat{\mathcal{U}}^N) \leq 4$,
Lemma \ref{lem:rousset} gives
\begin{equation}
\label{eq:W2symG0Ginfty}
\mathscr{W}_2^2(\hat{G}_t^N,\hat{\mathcal{U}}^N)
\leq C_{p,\delta} (1+t)^{-\delta},
\end{equation}
for all $\delta < p-2$, where $C_{p,\delta}$ depends only on $p$, $\delta$
and $R_p := \sup_N \EE|V_0^1|^p$.
With this, from \eqref{eq:EW2W2hatF} we have:
\[
\EE \W_2^2(\bar{\mathbf{V}}_t,f_t)
= \mathscr{W}_2^2(\hat{G}_t^N, \delta_{f_t})
\leq C \left[ \mathscr{W}_2^2(\hat{G}_t^N, \hat{\mathcal{U}}^N)
     + \mathscr{W}_2^2(\hat{\mathcal{U}}^N,\delta_\gamma)
     + \mathscr{W}_2^2(\delta_\gamma,\delta_{f_t}) \right].
\]
The first and second terms are controlled using \eqref{eq:W2symG0Ginfty} and \eqref{eq:W2Uhatdeltag}, respectively. The third term is equal to $\W_2^2(\gamma,f_t)$, which, by Theorem 5.8 of
\cite{gabetta-toscani-wennberg1995}, converges exponentially  fast to  $0$  under our assumption 
that $f_0$ has finite $p_0$-moment for some $p_0>4$ (condition \eqref{eq:p0}). Thus, we can just  bound that term by $C_{p,\delta}(1+t)^{-\delta}$.  All this yields
\[
\EE \W_2^2(\bar{\mathbf{V}}_t,f_t)
\leq C_{p,\delta} (1+t)^{-\delta} + CN^{-1/3}.
\]
For $\eta \in (0,1/3)$ to be chosen, set $\bar{t} = N^{\eta/\delta}-1$,
so the last inequality implies that $\EE \W_2^2(\bar{\mathbf{V}}_t,f_t)\leq C_{p,\delta} N^{-\eta}$
for all $t>\bar{t}$, whereas Theorem \ref{thm:main} gives
$\EE \W_2^2(\bar{\mathbf{V}}_t,f_t) \leq Ch_0 + C N^{-1/3 + 2\eta/\delta}$ for all $t\leq \bar{t}$,
where $h_0 = \W_2^2(G_0^N,f_0^{\otimes N})$.
Setting $\eta = [3(1+2/\delta)]^{-1} < (p-2)/3p$ gives
$\EE \W_2^2(\bar{\mathbf{V}}_t,f_t) \leq Ch_0 + C_{p,\delta} N^{-\eta}$ for all $t\geq 0$.
Since $\delta$ can be chosen arbitrarily close to $p-2$, the result follows
in this case.

Now we prove the general case by reducing it to the previous one.
Consider the process $\mathbf{W}_t := \st{\mathbf{V}_t}$,
with the notation of $\eqref{eq:Yi}$. 
Set $M_t = \frac{1}{N} \sum_j V_t^j = M_0$, $S_t^2 = \frac{1}{N} \sum_j |V_t^j - M_0|^2=S_0^2$.
It holds that on the event $\{S_0>0\}$, $\mathbf{W}$ solves \eqref{eq:PS}
with the same Poisson measure associated to $\mathbf{V}$,
but starting with initial condition $\mathbf{W}_0 = \st{\mathbf{V}_0}$.
Specifically: given $v,v_*,m\in\RR^3$ and $s>0$, the homogeneity
of $\ii$ and $\jj$ and the definitions of \eqref{eq:Gamma_a_c} imply that for $\tilde{v} = (v-m)/s$
and $\tilde{v}_* = (v_*-m)/s$ we have $c(\tilde{v},\tilde{v}_*,z,\phi)
= c(v,v_*,z,\phi) / s$, and then
\[
\tilde{v}'
= \tilde{v} + c(\tilde{v},\tilde{v}_*, z, \phi)
= \frac{v+c(v,v_*, z, \phi)-m}{s}
= \frac{v'-m}{s},
\]
and the same for $\tilde{v}_*'$.
This means that the standarization procedure is preserved by the collisions.
Since the function $c$ is the one involved
in the definition of the particle system,
this shows that $\mathbf{W}$ solves \eqref{eq:PS}, as desired.

Define the event $D = \{S_0^2 \geq 1/4\}$.
We have
\begin{equation}
\label{eq:W2Vtft}
\EE \W_2^2(\bar{\mathbf{V}}_t,f_t)
\leq 2 \EE \W_2^2(\bar{\mathbf{V}}_t, \bar{\mathbf{W}}_t)
    + 2 \EE \ind_D \W_2^2(\bar{\mathbf{W}}_t, f_t)
    + \EE \ind_{D^c} \W_2^2(\bar{\mathbf{V}}_t, f_t).
\end{equation}
From \eqref{eq:sumXiYi} we have
$\W_2^2(\bar{\mathbf{V}}_t, \bar{\mathbf{W}}_t)
\leq \frac{1}{N} \sum_i |V_t^i - W_t^i|^2 = (S_t-1)^2 + |M_t|^2
= (S_0-1)^2 + |M_0|^2 \leq \W_2^2(\bar{\mathbf{V}}_0, f_0)$,
since $\int v f_0(dv) = 0$ and $\int |v|^2 f_0(dv) = 1$
Thus, $\W_2^2(\bar{\mathbf{V}}_t, \bar{\mathbf{W}}_t) \leq
\W_2^2(\bar{\mathbf{V}}_0, f_0) \leq 2\W_2^2(\bar{\mathbf{V}}_0, \bar{\mathbf{U}}_0)
     + 2 \W_2^2(\bar{\mathbf{U}}_0, f_0)$, and then for the first term of \eqref{eq:W2Vtft} we have
\begin{equation}
\label{eq:W2VtVtildet}
\EE \W_2^2(\bar{\mathbf{V}}_t, \bar{\mathbf{W}}_t)
\leq 2\EE \frac{1}{N} \sum_i |V_0^i - U_0^i|^2 + 2\EE\W_2^2(\bar{\mathbf{U}}_0, f_0)
\leq 2 \W_2^2(G_0^N,f_0^{\otimes N}) + C N^{-1/2},
\end{equation}
where we have used \eqref{eq:epsnmu}.

On $D$, we have $|W_0^i| = |V_0^i + M_0|/S_0 \leq 2(|V_0^i| + |M_0|)$,
and thus $\EE(\ind_D |W_0^i|^p) \leq C_p \EE|V_0^i|^p$.
Since $\mathbf{W}$ is a particle system taking values on $\mathcal{S}^N$,
we can apply the previous case, obtaining
$\EE \ind_D \W_2^2(\bar{\mathbf{W}}_t,f_t) \leq C\W_2^2(\st{G_0^N},f_0^{\otimes N})
                          + C_{p,\epsilon} N^{-(p-2)/3p + \epsilon}$.
Using \eqref{eq:W2FNstFN} we have $\W_2^2(\st{G_0^N},f_0^{\otimes N})
\leq 2 \W_2^2(\st{G_0^N},G_0^N) + 2 \W_2^2(G_0^N,f_0^{\otimes N})
\leq 2 \EE \W_2^2(\bar{\mathbf{V}}_0,f_0) + 2 \W_2^2(G_0^N,f_0^{\otimes N})$,
and from there, the same argument used to estimate the first term of \eqref{eq:W2Vtft}
yields for the second term:
\begin{equation}
\label{eq:W2tildeVtft}
\EE \ind_D \W_2^2(\bar{\mathbf{W}}_t,f_t) \leq C\W_2^2(G_0^N,f_0^{\otimes N})
                          + C_{p,\epsilon} N^{-(p-2)/3p + \epsilon}.
\end{equation}

For the third term, using the preservation of momentum and energy, we have
\begin{align*}
\EE \ind_{D^c} \W_2^2(\bar{\mathbf{V}}_t,f_t)
&\leq \EE \left( \ind_{\{S_0^2 < 1/4\}} 2\left[\frac{1}{N}\sum_i |V_t^i|^2 + \int |v|^2 f_t(dv)\right] \right) \\
&= \EE \left( \ind_{\{S_0^2 < 1/4\}} 2[S_0^2 + |M_0|^2 + 1]\right) \\
&\leq \frac{10}{4} \PP(S_0^2 < 1/4) + 2 \EE |M_0|^2,
\end{align*}
and since $\PP(S_0^2 < 1/4) \leq \PP(|S_0 - 1| > 1/2) \leq 4 \EE|S_0 - 1|^2$,
we have shown that $\EE \ind_{D^c} \W_2^2(\bar{\mathbf{V}}_t,f_t) \leq 10 \EE(|S_0 - 1|^2 + |M_0|^2)$,
which is again controlled by $10 \EE\W_2^2(\bar{\mathbf{V}}_0,f_0)
\leq C\W_2^2(G_0^N,f_0^{\otimes N}) + CN^{-1/2}$ as above.

Finally, putting the previous estimate, \eqref{eq:W2VtVtildet} and \eqref{eq:W2tildeVtft}
into \eqref{eq:W2Vtft}, yields the desired result.
\end{proof}

\section*{Appendix}

\begin{proof}[Proof of Lemma \ref{lemaSDE}]

Part \ref{lemaSDE_i} follows from the last step in the proof  of part \ref{lemaSDE_ii} below.

Since $L<\infty$, in order to  prove \ref{lemaSDE_ii} it is enough to construct a weak solution $\mathbf{V}$  to \eqref{eq:PS} and then build $\mathbf{U}^L $ driven by the same Poisson process in a jump-by-jump manner.

To obtain a weak solution of \eqref{eq:PS}, we will use a cutoff procedure: for a given cutoff level $K\in[1,\infty)$, define $\mathbf{V}^K = (V^{1,K},\ldots,V^{N,K})$ as the solution to
\begin{equation}
\label{eq:PSK}
d\mathbf{V}_t^K
= \int_0^\infty \int_0^{2\pi} \int_{[0,N)^2} \sum_{i\neq j} \ind_{\{\i(\xi) = i,\i(\zeta)=j\}}
 \mathbf{c}_{K,ij}(\mathbf{V}_{t^-}^K,z,\phi) \N(dt,dz,d\phi,d\xi,d\zeta)
\end{equation}
where $\mathbf{c}_{K,ij}$ is defined as in \eqref{eq:cij} but using $c_K$ in place of $c$. Again, since $K<\infty$, the system $\mathbf{V}^K$ can be constructed pathwise.
Thus, given  a sequence of finite cutoff levels $K\to \infty$, one can prove in a similar way as in  Proposition 1.2--(ii) of   \cite{fournier-mischler2016}  that  the laws of $\mathbf{V}^K$ are tight (the  second moment estimates are  indeed trivial here because of their 
 exact preservation  stated in  \ref{lemaSDE_i}). By martingale methods and  classic probability space enlargement arguments,  one then gets that the accumulation  points are weak solutions of \eqref{eq:PS}.   
 In order to prove uniqueness in law of weak solutions of  \eqref{eq:PS}  it is enough to show that any weak  solution  can be approximated, in a pathwise way as $K\to \infty$,  by (strong) solutions $\hat{\mathbf{V}}^K$  to \eqref{eq:PSK}  driven by  some  Poisson measures defined  in the same probability space. More specifically: given $\hat{\mathbf{V}}^{\infty}$ a weak solution to \eqref{eq:PS} driven by the Poisson point measure $\hat{\N}(dt,dz,d\phi,d\xi,d\zeta)$,   define $\hat{\mathbf{V}}^{K}$ uniquely by 
\begin{align*}
d\hat{\mathbf{V}}^{K}_t
&= \int_0^\infty \int_0^{2\pi} \int_{[0,N)^2} \sum_{i\neq j} \ind_{\{\i(\xi) = i,\i(\zeta)=j\}} \\
&\qquad\qquad\qquad {} \times \mathbf{c}_{K,ij}(\hat{\mathbf{V}}_{t^-}^{K},z,\phi_{t^-}^K) \hat{\N}(dt,dz,d\phi,d\xi,d\zeta)
\end{align*} 
with  $\hat{\mathbf{V}}^{K}_0=\hat{\mathbf{V}}_0$ and 
where $\phi_{t-}^K=\phi_{t-}^K(\phi,\xi,\zeta)= \varphi(\hat{V}_{t-}^{\i(\xi),\infty} - \hat{V}_{t-}^{\i(\zeta),\infty}, \hat{V}_{t-}^{\i(\xi),K} - \hat{V}_{t-}^{\i(\zeta),K}, \phi)$. Notice that $\hat{\mathbf{V}}^{K}$ has the same law as $\mathbf{V}^{K}$, by pathwise uniqueness  for \eqref{eq:PS} and the fact that  the point measure  $\hat{\N}^K$ defined on test functions  by  $f\mapsto\int f(t,z,\phi_{t^-}^K,d\xi,d\zeta)\hat{\N}(dt,dz,d\phi,d\xi,d\zeta)$ has the same law as $\N(dt,dz,d\phi,d\xi,d\zeta)$ (as can be checked using  It\^{o} calculus and Campbell's formula).  
Since the coupling between circles provided by Lemma \ref{lem:optimal_circles} is optimal, integrated versions of the bounds provided in Lemma 5.1 of \cite{fournier-mischler2016}  (for some other coupling of angles $\phi$)  also hold in the present context. 
Using It\^o  calculus and  Gronwall's lemma we are then  able to prove, in a similar way  that, for each $T>0$ and some function $R_T(K)$ going to $0$ as  $K\to\infty$ one has
$
\sup_{t\in [0,T]} \EE( |\hat{\mathbf{V}}^{K}_t- \hat{\mathbf{V}}^{\infty}_t|^2)\leq  R_T(K),
$
from where we get the desired convergence:
\[
\EE( \sup_{t\in [0,T]}|\hat{\mathbf{V}}^{K}_t- \hat{\mathbf{V}}^{\infty}_t|^2)\leq  R_T(K)
\] 
Part \ref{lemaSDE_ii} follows.

Now we prove part \ref{lemaSDE_iii}.
Define $\tilde{\N}^{i,L}(dt,dz,d\phi,dv)$
as the point measure on $[0,\infty)\times[0,\infty)\times[0,2\pi)\times \RR^3$
with atoms $(t,z,\varphi_{t^-}^i,\Pi_t^{i,L}(\mathbf{U}_{t^-}^L,\xi))$
for every atom $(t,z,\phi,\xi)$ of $\N^i$.
Since: 1) the dependence on $(\mathbf{V},\mathbf{U}^L)$ is predictable,
2) the $\xi$-law of $\Pi_t^{i,L}(\mathbf{x},\xi)$ is $f_t^L$ for
any $\mathbf{x}\in(\RR^3)^N$, and
3) the $\phi$-law of $\varphi(v,u,\phi)$ is the uniform distribution
on $[0,2\pi)$ for any $v,u\in\RR^3$,
one can compute the Laplace functional
of $\tilde{\N}^{i,L}$ and deduce that it is a Poisson point measure with
intensity $dtdzd\phi f_t^L(dv)/2\pi$. From \eqref{eq:Ui},
it is then clear that $U^{i,L}$ satisfies \eqref{eq:nonlinear_cutoff},
with $\M^L$ replaced by $\tilde{\N}^{i,L}$, which shows that $U^{i,L}$
is a cutoff nonlinear process.

 Part  \ref{lemaSDE_iv} is obvious.
\end{proof}

\begin{proof}[Proof of Lemma \ref{lem:Povzner}, case of $p$ even]
We will use the parametrization of $\SS^2$ given by \eqref{eq:Gamma_a_c},
that is, $v' = v + a(v,v_*,\theta,\phi)$ and $v_*' = v_*-a(v,v_*,\theta,\phi)$.
For notational simplicity, call $c := \frac{1-\cos\theta}{2}$, $s = \frac{\sin\theta}{2}$,
$\ii = \ii(v-v_*)$, $\jj = \jj(v-v_*)$ and $\Gamma = \Gamma(v-v_*,\phi)$
(recall that $\Gamma = (\cos\phi)\ii + (\sin\phi)\jj$).
Noting that $|\Gamma|^2 = |v-v_*|^2$ and that $(v-v_*)\cdot \Gamma = 0$, we have 
\begin{align*}
|v'|^2
&= |v + c (v_*-v) + s \Gamma|^2 \\
&= [1+c^2+s^2-2c]|v|^2 + [c^2 +s^2]|v_*|^2 + 2[c - c^2 - s^2] v\cdot v_* + 2s v\cdot \Gamma \\
&= (1-c)|v|^2 + c|v_*|^2 + s (v+v_*)\cdot \Gamma,
\end{align*}
where in the last step we used the identity $c^2 +s^2 = c$ and the fact that $v\cdot \Gamma = v_* \cdot \Gamma$.
Calling $w=v+v_*$, we thus obtain
\begin{align*}
|v'|^2 &= x + s w\cdot \Gamma, \qquad \text{for $x = (1-c)|v|^2 + c|v_*|^2$, and} \\
|v_*'|^2 &= y - s w\cdot\Gamma, \qquad \text{for $y = (1-c)|v_*|^2 + c|v|^2$,}
\end{align*}
where the second identity is deduced similarly as the first one. Take $p=2k$ for
integer $k\geq 2$. Thus,
\begin{equation}
\label{eq:int_v'p_phi}
\int_0^{2\pi} |v'|^p \frac{d\phi}{2\pi}
= x^k + \sum_{i=1}^{\lfloor k/2 \rfloor} \binom{k}{2i} x^{k-2i} s^{2i} \int_0^{2\pi} (w\cdot \Gamma)^{2i} \frac{d\phi}{2\pi},
\end{equation}
where we have used the fact that $\int_0^{2\pi} (w\cdot \Gamma)^{2i-1} = 0$, since it is
computed as the sum of integrals of terms of the form $(\cos \phi)^a (\sin\phi)^b$ where $a$ or $b$ is odd.
The key of the proof is to show that, after integration in $\phi$, the term $(w\cdot \Gamma)^{2i}$
is of order $|v|^{2i}|v_*|^{2i}$ and not $|v|^{4i}|v_*|^{4i}$, as one would
obtain using loose bounds.
Specifically: using the same argument to neglect the odd terms of the sum, we have
\begin{align*}
\int_0^{2\pi} (w\cdot \Gamma)^{2i} \frac{d\phi}{2\pi}
&=\int_0^{2\pi} [(\cos\phi)w\cdot \ii + (\sin\phi)w\cdot\jj]^{2i} \frac{d\phi}{2\pi} \\
&= \sum_{j=0}^i \binom{2i}{2j} (w\cdot \ii)^{2i-2j} (w\cdot \jj)^{2j}
               \int_0^{2\pi} (\cos\phi)^{2i-2j} (\sin\phi)^{2j}\frac{d\phi}{2\pi}.
\end{align*}
Denoting by $n!!$  the product of the positive integers smaller than or equal to $n$ which have the
same parity as $n$,  one can check the identity $\int_0^{2\pi} (\cos\phi)^{n} (\sin\phi)^{m}\frac{d\phi}{2\pi} = \frac{(n-1)!! (m-1)!!}{(n+m)!!}$ for $n$ and $m$ even (integrate by parts in both possible ways,   use  $\cos^2\phi+ \sin^2\phi=1$ to get two recurrence relations and deduce the identity by double induction in $(n,m)$).
Since $(2n-1)!! = \frac{(2n)!}{2^n n!}$ and $(2n)!! = 2^n n!$,
it can be easily seen that $\binom{2i}{2j} \frac{(2i-2j-1)!! (2j-1)!!}{(2i)!!} = 2^{-2i} \binom{2i}{i}\binom{i}{j}$,
which yields
\[
\int_0^{2\pi} (w\cdot \Gamma)^{2i} \frac{d\phi}{2\pi}
= 2^{-2i} \binom{2i}{i} \sum_{j=0}^i \binom{i}{j} (w\cdot \ii)^{2i-2j} (w\cdot \jj)^{2j} 
= 2^{-2i} \binom{2i}{i} \left[ (w\cdot \ii)^2 + (w\cdot \jj)^2 \right]^{i}.
\]
Using that $(\frac{v-v_*}{|v-v_*|},\frac{\ii}{|v-v_*|},\frac{\jj}{|v-v_*|})$
is an orthonormal basis, we have
\begin{align*}
(w\cdot \ii)^2 + (w\cdot \jj)^2
&= |v-v_*|^2|w|^2 - (w\cdot(v-v_*))^2 \\
&= (|v|^2 + |v_*|^2 - 2v\cdot v_*) (|v|^2 + |v_*|^2 + 2v\cdot v_*)  - (|v|^2 - |v_*|^2)^2 \\
&= 4 |v|^2|v_*|^2 - 4(v\cdot v_*)^2,
\end{align*}
where the cancelation of $|v|^4$ and $|v_*|^4$ in the last step is the
crucial point of the proof. We deduce that
\[
\int_0^{2\pi} (w\cdot \Gamma)^{2i} \frac{d\phi}{2\pi}
= \binom{2i}{i} \left[ |v|^2|v_*|^2 - (v\cdot v_*)^2 \right]^{i}
\leq 2^i \binom{2i}{i} |v|^{2i}|v_*|^{2i}
\]
Denote $\tilde{A}_p>0$ some constant that depends only on $p$ and that can change from
line to line. With the last inequality and using the bound $x^{k-2i} \leq \tilde{A}_p (|v|^{2k-4i} + |v_*|^{2k-4i})$,
from \eqref{eq:int_v'p_phi} we obtain
\begin{align*}
\int_0^{2\pi} |v'|^p \frac{d\phi}{2\pi}
&\leq c^k|v|^{2k} + (1-c)^k|v_*|^{2k} + \sum_{i=1}^{k-1} \binom{k}{i} (1-c)^{k-i}|v|^{2k-2i} c^i|v_*|^{2i} \\
&\qquad \qquad \qquad {} + \tilde{A}_p \sum_{i=1}^{\lfloor k/2 \rfloor} s^{2i}
                           (|v|^{2k-2i}|v_*|^{2i} + |v_*|^{2k-2i}|v|^{2i}) \\
&\leq c^k|v|^{2k} + (1-c)^k|v_*|^{2k}
+ \theta^2 \tilde{A}_p (|v|^{2k-2}|v_*|^{2} + |v_*|^{2k-2}|v|^{2}),
\end{align*}
where in the last step we have used Young's inequality and the fact that $s^{2i}$
and $c^i$ are of order $\theta^2$ for $i\geq 1$.
With the same argument, the last inequality is valid replacing $v'$ for $v_*'$ and exchanging
the roles of $v$ and $v_*$. Since $c=\sin(\theta/2)^2$ and $1-c=\cos(\theta/2)^2$,
integrating $|v'|^p + |v_*'|^p - |v|^p - |v_*|^p$ against $\beta(\theta)d\theta \frac{d\phi}{2\pi}$
yields the result.
\end{proof}

\bigskip

{\bf Acknowledgments} R.C.'s  research  was partially supported by Mecesup UCH0607 Doctoral Fellowship,  BASAL-Conicyt Center for Mathematical Modeling  (CMM), ICM Millennium Nucleus NC120062 and Fondecyt Postdoctoral Proyect 3160250.  
 J.F.\ acknowledges partial support from  Fondecyt Project 1150570, ICM Millennium Nucleus NC120062 and  BASAL-Conicyt CMM. Both authors thank Nicolas Fournier for exchanges of mutual interest. 
We also thank anonymous referees for pointing out to us the reference \cite{carrapatoso2015}, for suggesting the use of the distance $\mathscr{W}_2$ to simplify the proof of Theorem \ref{thm:uniform_Maxwell} and for other suggestions that allowed us to improve the presentation of the paper.

\bibliographystyle{acm}
\bibliography{references_r1.bib}{}

\end{document}